\tikzset{circle/.pic={
		\node[circle, aspect=1, draw, minimum size=0.3cm, text width=0.2cm] () at (0,0) {\tikzpictext};
	}} 
\pgfplotsset{compat=1.10}
\newcommand{\etal}{{\em et al.~}}
\newcommand{\vv}[1]{\bm{#1}}
\newcommand{\R}{\mathbb{R}}
\newtheorem{theorem}{Theorem}
\newtheorem{definition}{Definition}
\newtheorem{lemma}{Lemma}
\newtheorem{proposition}{Proposition}
\newtheorem{example}{Example}
\begin{document}
\title{Bifurcation Mechanism Design -- From Optimal Flat Taxes to Improved Cancer Treatments}  
\author{Ger Yang\\
University of Texas at Austin\\
Department of Electrical and Computer Engineering\\
\and
Georgios Piliouras\\
Singapore University of Technology and Design\\
Engineering Systems and Design (ESD)
\and
David Basanta\\
Integrated Mathematical Oncology\\
H. Lee Moffitt Cancer Center and Research Institute
}

\maketitle

\begin{abstract}


 Small changes to the parameters of a system can lead to abrupt qualitative changes of its behavior, a phenomenon known as bifurcation.  
Such instabilities are typically considered problematic, however, we show that their power can be leveraged to design novel types of mechanisms. 
	\textit{Hysteresis mechanisms} use transient changes of system parameters to induce a
	permanent improvement to its performance via optimal equilibrium selection. 
	\textit{Optimal control mechanisms} induce convergence to states whose performance is better than even the best equilibrium.
We apply these mechanisms in two different settings that illustrate the versatility of bifurcation mechanism design. In the first one we explore how introducing flat taxation can improve social welfare, despite decreasing agent ``rationality", by destabilizing inefficient equilibria. From there we move on to consider a well known game of tumor metabolism and use our approach to derive novel cancer treatment strategies.

\end{abstract}


\thanks{This work is supported by the National Science Foundation,
  under grant CNS-0435060, grant CCR-0325197 and grant EN-CS-0329609.}


\section{Introduction}

The term bifurcation, which means splitting in two, is used to describe abrupt qualitative
changes in system behavior due to smooth variation of its parameters. 
Bifurcations are ubiquitous 
and  permeate all natural phenomena. 
Effectively, they 
 produce discrete events (e.g., rain breaking out) out of
 smoothly varying, continuous systems (e.g., small changes to humidity, temperature).
Typically, they are studied through bifurcation diagrams, multi-valued maps that prescribe how each 
parameter configuration  translates to possible system behaviors (e.g., Figure~\ref{fig:intro}).   


Bifurcations arise in a natural way in game theory. Games are typically studied through their Nash correspondences,
a multi-valued map connecting the parameters of the game (i.e., payoff matrices) to system behavior, in this case Nash equilibria.
As we slowly vary the parameters of the game typically the Nash equilibria will also vary smoothly, except at bifurcation points
where, for example, the number of equilibria abruptly changes as some equilibria appear/disappear altogether. 
Such singularities may a have huge impact both on system behavior and system performance. For example, if the system state was at an equilibrium that disappeared during the bifurcation, then a  turbulent transitionary period ensues where the system tries to reorganize itself at one of the remaining equilibria. Moreover, the quality of all remaining equilibria may be significantly worse 
than the original.  Even more disturbingly, it is not a-priori clear that the system will equilibrate at all. 
Successive bifurcations that lead to increasingly more complicated recurrent behavior is a standard route to chaos \cite{devaney1992first}, which may have devastating effects to system performance.

Game theorists are particularly aware of the need to produce ``robust" predictions that are not inherently bound
 to a specific, exact instantiation of the payoff parameters of the game \cite{roughgarden09}. The typical way to approach this problem has been to focus on more
expansive solution concepts, e.g., $\epsilon$-approximate Nash equilibria or even outcomes approximately consistent to regret-minimizing learning. These approaches, however, do not really address
the problem at its core as any solution concept defines a map from parameter space to behavioral space and no such map is  immune to bifurcations. If pushed hard enough any system will destabilize. The question is what happens next?

Well, a lot of things \textit{may} happen. It is intuitively clear that if we are allowed to play around arbitrarily with the payoffs of
the agents then we can reproduce any game and no meaningful analysis is possible. Using payoff entries 
as controlling parameters is problematic for another reason. It is not clear that there exists a compelling 
 parametrization of the payoff space that captures how real life decision makers deviate from the Platonic ideal of the payoff matrix.
Instead, we focus on another popular aspect of economic theory, agent ``rationality".

We adopt a standard model of boundedly rational learning agents.
Boltzmann Q-learning dynamics \cite{watkins1989learning,watkins1992q,tan1993multi} 
is a well studied behavioral model in which agents are parameterized by a temperature/rationality term $T$.
Each agent
keeps track of the collective past performance of his actions (i.e., learns from experience)
 and chooses an action according to a Boltzmann/Gibbs distribution with parameter $T$.
 When applied to a multi-agent game the behavioral fixed points of  Q-learning are known as quantal response equilibria (QRE) \cite{McKelvey:1995aa}. 
 Naturally, QREs depend on the temperature $T$. As $T\rightarrow0$  players become perfectly rational and play approaches a Nash equilibrium,\footnote{Mixed strategies in the QRE model are sometimes interpreted as frequency distributions of deterministic actions in a large population of users. This population interpretation of mixed strategies is standard and dates back to Nash~\cite{Nash}. Depending on context, we will use either the probabilistic interpretation or the population one.} whereas as $T\rightarrow \infty$ all agents use uniformly random strategies. As we vary the temperature the QRE($T$) correspondence moves between these two extremes  producing bifurcations along the way at critical points where the number of QREs changes (Figure~\ref{fig:intro}). 
 
 \medskip
 
 \textit{Our goal} in this paper is to quantify the effects of these rationality-driven bifurcations to the social welfare of two player two strategy games. At this point a moment of pause is warranted. Why is this a worthy goal? Games of small size  ($2 \times 2$ games in particular) hardly seem like a subject worthy of serious scientific investigation. This, however, could not be further from the truth. 
 
 First, the correct way to interpret this setting is from the point of population games where each agent is better understood as a large homogeneous population (e.g. men and women, attackers and defenders, cells of type A and cells of type B). Each of a handful of different types of users
 has only a few meaningful actions available to them. In fact, from the perspective of applied game theory only such games with a small number of parameters are practically meaningful. The reason should be clear by now. Any game theoretic modeling of a real life scenario is  invariably noisy and inaccurate.  In order for game-theoretic predictions to be practically binding they have to be robust to these uncertainties. If the system intrinsically has a large number of independent parameters e.g., 20, then this parameter space will almost certainly encode a vast number of bifurcations, which invalidate any theoretical prediction.  Practically useful models \textit{need} to be small. 
 
 Secondly, game theoretic models applied  for scientific purposes typically \textit{are} small. 
 Specifically, the exact setting studied here with Boltzmann Q-learning dynamics applied in $2 \times 2$ games has been used to model the effects of taxation to agent rationality \cite{wolpert2012hysteresis} (see Section~\ref{sec:taxation} for a more extensive discussion) as well as to model the effects of treatments that trigger phase transitions to cancer dynamics \cite{kianercy2014critical} (see Section~\ref{sec:cancer}). Our approach yields insights to explicit open questions in both of these applications areas. In fact, direct application of our analysis can address similar inquiries for any other phenomenon modeled by  Q-learning dynamics applied in $2 \times 2$ games. 
 
Finally, the analysis itself is far from straightforward as it requires combining sets of tools and techniques that have so far been developed in isolation from each other. On one hand, we need to understand the behavior of these dynamical systems using tools from topology of dynamical systems whose implications are largely qualitative (e.g. prove the lack of cyclic trajectories). On the other hand, we need to leverage these tools to quantify at which exact parameter values bifurcations occur and 
 produce price-of-anarchy type of guarantees which by definition are  quantitative.  As far as we know, this is the first instance of a fruitful combination of these tools.  In fact, not only do we show how to analyze the effects of bifurcations to system efficiency, we also show how to leverage this understanding (e.g. knowledge of the geometry of the bifurcation diagrams) to design novel types of mechanisms with good performance guarantees.

 
 


\begin{figure}[t]
  \centering
  \includegraphics[width=0.3\linewidth]{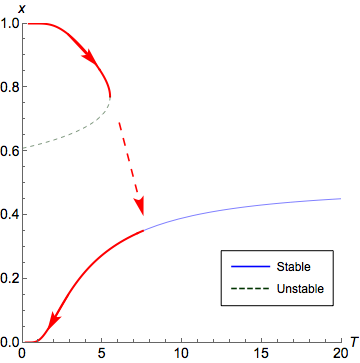}
  \caption{Bifurcation diagram for a $2 \times 2$ population coordination game. The $x$ axis corresponds to the system temperature $T$, whereas
  the $y$ axis corresponds to the projection of the proportion of the first population using the first strategy at equilibrium. For small $T$, the system exhibits multiple equilibria.
   Starting at $T=0$, and  by increasing the temperature beyond
the critical threshold $T_C=6$, and then bringing it back to zero, we can force the system to converge to another equilibrium.}
  \label{fig:intro}
\end{figure}

\vspace{-3pt}
\subsection*{Our contribution.} 

We introduce two different types of mechanisms, hysteresis and optimal control mechanisms.

\textit{Hysteresis mechanisms} use transient changes to the system parameters to induce permanent improvements to its performance via optimal (Nash) equilibrium selection. The term hysteresis is derived from an ancient Greek word that means ``to lag behind". It reflects a time-based dependence between the system's present output and its past inputs. For example, let's assume that we start from a game theoretic system of Q-learning agents with temperature $T=0$ and assume that the system has converged to an equilibrium. By increasing the temperature beyond some critical threshold and then bringing it back to zero, we can force the system to provably converge to another equilibrium, e.g., the best (Nash) equilibrium (Figure~\ref{fig:intro}, Theorem~\ref{thm:mec2}). Thus, we can ensure performance equivalent to that of the price of stability instead of the price of anarchy. One attractive feature of this mechanism is that from the perspective of the central designer it is rather ``cheap" to implement. Whereas typical mechanisms require the designer to continuously intervene by (e.g., by paying the agents) to offset their greedy tendencies this mechanism is transient with a finite amount of total effort from the perspective of the designer. Further, the idea that game theoretic systems have effectively systemic memory is rather interesting and could find other applications  within algorithmic game theory.

\textit{Optimal control mechanisms}  induce convergence to states whose performance is better than even the best Nash equilibrium. 
Thus, we can at times even beat the price of stability (Theorem \ref{thm:mec1}). Specifically,  we show that by controlling the exploration/exploitation tradeoff we can achieve strictly better states than those achievable by perfectly rational agents. In order to implement such a mechanism it does not suffice to identify the right set of agents' parameters/temperatures so that the system has some QRE whose social welfare is better than the best Nash. We need to design a trajectory through the parameter space so that this optimal QRE becomes  the final resting point.

\section{Preliminaries}
\label{sec:prelim}

\subsection{Game Theory Basics: $2 \times 2$ games} \label{sec:def_2_by_2_game}
In this paper, we focus on $2 \times 2$ games.  We define it as a game with two players, and each player has two actions.  We write the payoff matrices of the game for each player as
\begin{equation} \label{eq:def_AB}
\vv{A} = \left( \begin{array}{cc}
a_{11} & a_{12} \\
a_{21} & a_{22} \end{array}
\right), \quad
\vv{B} = \left( \begin{array}{cc}
b_{11} & b_{12} \\
b_{21} & b_{22} \end{array}
\right)
\end{equation}
respectively.  The entry $a_{ij}$ denotes the payoff for Player~$1$ when he chooses action $i$ and his opponent chooses action $j$; similarly, $b_{ij}$ denotes the payoff for Player~$2$ when he chooses action $i$ and his opponent chooses action $j$.  We define $x$ as the probability that the Player~$1$ chooses his first action, and $y$ as the probability that Player~$2$ chooses his first action.  We also define two row vectors $\vv{x}=(x, 1-x)^T$ and $\vv{y}=(y,1-y)^T$ as the \emph{strategy} for each player.  For simplicity, we denote the $i$-th entry of vector $\vv{x}$ by $x_i$.   We call the tuple $(x,y)$ as the \emph{system state} or the \emph{strategy profile}.  

An important solution concept in game theory is the \emph{Nash equilibrium}, where each user cannot make profit by unilaterally changing his strategy, that is:
\begin{definition}[Nash equilibrium]
	A strategy profile $(x_{NE},y_{NE})$ is a Nash equilibrium (NE) if
	\begin{align*}
		&x_{NE} \in \arg\max_{x \in [0,1]} \vv{x}^T\vv{A}\vv{y}_{NE},
		&y_{NE} \in \arg\max_{y \in [0,1]} \vv{y}^T\vv{B}\vv{x}_{NE}
	\end{align*}
\end{definition}

We call $(x_{NE},y_{NE})$ a \emph{pure} Nash equilibrium (PNE) if both $x_{NE} \in \{0,1\}$ and $y_{NE} \in \{0,1\}$.
Nash equilibrium assumes each user is fully rational.  However, in real world, this assumption is impractical.  An alternative solution concept is the \emph{quantal response equilibrium} \cite{McKelvey:1995aa}, where it assumes that each user has bounded rationality:
\begin{definition}[Quantal response equilibrium]
	A strategy profile $(x_{QRE},y_{QRE})$ is a Quantal response equilibrium (QRE) with respect to temperature $T_x$ and $T_y$ if
	\begin{align*}
		x_{QRE} &= \frac{e^{\frac{1}{T_x} (\vv{A}\vv{y}_{QRE})_1}}{\sum_{j\in\{1,2\}} e^{\frac{1}{T_x} (\vv{A}\vv{y}_{QRE})_j}}, &1-x_{QRE} = \frac{e^{\frac{1}{T_x} (\vv{A}\vv{y}_{QRE})_2}}{\sum_{j\in\{1,2\}} e^{\frac{1}{T_x} (\vv{A}\vv{y}_{QRE})_j}}  \\
		y_{QRE} &= \frac{e^{\frac{1}{T_y} (\vv{B}\vv{x}_{QRE})_1}}{\sum_{j\in\{1,2\}} e^{\frac{1}{T_y} (\vv{B}\vv{x}_{QRE})_j}}, &1-y_{QRE} = \frac{e^{\frac{1}{T_y} (\vv{B}\vv{x}_{QRE})_2}}{\sum_{j\in\{1,2\}} e^{\frac{1}{T_y} (\vv{B}\vv{x}_{QRE})_j}}  
	\end{align*}
\end{definition}

Analogous to the definition of Nash equilibria, we can consider the QREs as the case that each player is not only maximizing the expected utility but also maximizing the entropy.  We can see that the QREs are the solutions to maximizing the linear combination of the following program:
\begin{align*}
\vv{x}_{QRE} &\in \arg\max_{\vv{x}} \left\{\vv{x}^T\vv{A}\vv{y}_{QRE} - T_x \sum_{j} x_j \ln x_j \right\}\\
\vv{y}_{QRE} &\in \arg\max_{\vv{y}} \left\{ \vv{y}^T\vv{B}\vv{x}_{QRE} - T_y \sum_{j} y_j \ln y_j \right\}
\end{align*}
This formulation has been widely seen in Q-learning dynamics literature (e.g \cite{Cominetti:2010aa, wolpert2012hysteresis, coucheney2013entropy}).  With this formulation, we can find that the two parameters $T_x$ and $T_y$ controls the weighting between the utility and the entropy.  We call $T_x$ and $T_y$ the \emph{temperatures}, and their value defines the level of irrationality.  If $T_x$ and $T_y$ are zero, then both players are fully rational, and the system state is a Nash equilibrium.  However, if both $T_x$ and $T_y$ are infinity, then each player is choosing his action according to a uniform distribution, which corresponds to the fully irrational players.

\subsection{Efficiency of an equilibrium} \label{sec:eff_game_intro}
The performance of a system state can be measured via the \emph{social welfare}.  Given a system state $(x,y)$, we define the social welfare as the sum of the expected payoff of all users in the system:
\begin{definition}
	Given $2 \times 2$ game with payoff matrices $\vv{A}$ and $\vv{B}$, and a system state $(x,y)$, the social welfare is defined as
	$$
	SW(x,y)=xy(a_{11}+b_{11})+x(1-y)(a_{12}+b_{21})+y(1-x)(a_{21}+b_{12})+(1-x)(1-y)(a_{22}+b_{22})
	$$
\end{definition}

In the context of algorithmic game theory, we can measure the efficiency of a game by comparing the social welfare of a equilibrium system state with the best social welfare.  We call the strategy profile that achieves the maximal social welfare as the \emph{socially optimal (SO)} strategy profile.  The efficiency of a game is often described as the notion of \emph{price of anarchy} (PoA) and \emph{price of stability} (PoS).  They are defined as
\begin{definition}
	Given $2 \times 2$ game with payoff matrices $A$ and $B$, and a set of equilibrium system states $S \subseteq [0,1]^2$, the price of anarchy (PoA) and the price of stability (PoS) are defined as
	\begin{align*}
		&PoA(S) = \frac{\max_{(x,y)\in[0,1]^2} SW(x,y)}{\min_{(x,y)\in S} SW(x,y)},
		&PoS(S) = \frac{\max_{(x,y)\in[0,1]^2} SW(x,y)}{\max_{(x,y)\in S} SW(x,y)}
	\end{align*}
\end{definition}

\section{Our Model}
\subsection{Q-learning Dynamics}
In this paper, we are particularly interested in the scenario when both players' strategies are evolving under \emph{Q-learning dynamics}:
\begin{align} 
&\dot x_i = x_i \bigg[ (\vv{A} \vv{y})_i - \vv{x}^T \vv{A} \vv{y} + T_x \sum_{j} x_j \ln(x_j/x_i) \bigg], 
&\dot y_i = y_i \bigg[ (\vv{B} \vv{x})_i - \vv{y}^T \vv{B} \vv{x} + T_y \sum_{j} y_j \ln(y_j/y_i) \bigg] \label{eq:conti_dynamics}
\end{align}

The Q-learning dynamics has been studied because of its connection with multi-agent learning problems.  For example, it has been shown in \cite{Sato:2003aa,tuyls2003selection} that the Q-learning dynamics captures the system evolution of a repeated game, where each player learns his strategy through Q-learning and Boltzmann selection rules.  More details are provided in Appendix~\ref{sec:from_q_to_q}.

An important observation on the dynamics \eqref{eq:conti_dynamics} is that it demonstrates the exploration/ exploitation tradeoff \cite{tuyls2003selection}.
We can find that the right hand side of equation \eqref{eq:conti_dynamics} is composed of two parts.  The first part
$
x_i [ (\vv{A} \vv{y})_i - \vv{x}^T \vv{A} \vv{y} ]
$
is exactly the vector field of replicator dynamic \cite{sandholm2009evolutionary}.  Basically, the replicator dynamics drives the system to the state of higher utility for both players.  As a result, we can consider this as a selection process in terms of population evolutionary, or an exploitation process from the perspective of a learning agent. Then, for the second part
$
x_i [T_x  \sum_{j} x_j \ln(x_j/x_i)],
$
we show in the appendix that if the time derivative of $\vv{x}$ contains this part alone, this results in the increase of the system entropy.

The system entropy is a function that captures the randomness of the system.  From the population evolutionary perspective, the system entropy corresponds to the variety of the population.  As a result, this term can be considered as the mutation process.  The level of the mutation is controlled by the temperature parameters $T_x$ and $T_y$.  Besides, in terms of the reinforcement learning, this term can be considered as an exploration process, as it provides the opportunity for the agent to gain information about the action that does not look the best so far.

\subsection{Convergence of the Q-learning dynamics}
By observing the Q-learning dynamics \eqref{eq:conti_dynamics}, we can find that the interior rest points for the dynamics are exactly the QREs of the $2 \times 2$ game.
It is claimed in \cite{Kianercy:2012aa} without proof that the Q-learning dynamics for a $2 \times 2$ game converges to interior rest points of probability simplexes for any positive temperature $T_x>0$ and $T_y>0$. We  provide a formal proof in Appendix \ref{appendix:convergence}.
The idea is that for positive temperature, the system is dissipative and by leveraging the planar nature of the system it can be argued that it converges to fixed points.

\subsection{Rescaling the Payoff Matrix}\label{sec:rescaling}
At the end of this section, we discuss the transformation of the payoff matrices that preserves the dynamics in \eqref{eq:conti_dynamics}.  This idea is proposed in \cite{hofbauer2005learning} and  \cite{hofbauer1998evolutionary}, where the \emph{rescaling} of a matrix is defined as follows
\begin{definition}[\cite{hofbauer1998evolutionary}]
	$\vv{A}'$ and $\vv{B}'$ is said to be a rescaling of $\vv{A}$ and $\vv{B}$ if there exist constants $c_j,d_i$, and $\alpha>0$, $\beta>0$ such that $a_{ij}'=\alpha a_{ij}+c_j$ and $b_{ji}' = \beta b_{ji} + d_i$.
\end{definition}
It is clear that 
rescaling the game payoff matrices is equivalent to updating the temperature parameters of the two agents
in \eqref{eq:conti_dynamics}.
So, wlog it suffices to study the dynamics under the assumption that the $2 \times 2$ payoff matrices $\vv{A}$ and $\vv{B}$ are in the following \emph{diagonal form}. 
\begin{definition}
	Given $2 \times 2$ matrices $\vv{A}$ and $\vv{B}$, their diagonal form is defined as
	$$
	\vv{A}_D = \left( \begin{array}{cc}
	a_{11}-a_{21} & 0 \\
	0 & a_{22}-a_{12} \end{array}
	\right)  , \quad
	\vv{B}_D = \left( \begin{array}{cc}
	b_{11}-b_{21} & 0 \\
	0 & b_{22}-b_{12} \end{array}
	\right) 
	$$
\end{definition}

Note that although rescaling the payoff matrices to their diagonal form preserves the equilibria, it does not preserves the social optimality, i.e. the socially optimal strategy profile in the transformed game is not necessary the socially optimal strategy profile in the original game.

%
%

\section{Hysteresis Effect and Bifurcation Analysis} \label{sec:bif}
\subsection{Hysteresis effect in Q-learning dynamics: An example}
We begin our discussion with an example:
\begin{example}[Hysteresis effect] \label{ex:ex1}
	Consider a $2 \times 2$ game with reward matrices
	\begin{equation}
	A = \left( \begin{array}{cc}
	10 & 0 \\
	0 & 5 \end{array}
	\right), \quad
	B = \left( \begin{array}{cc}
	2 & 0 \\
	0 & 4 \end{array}
	\right)
	\label{eq:ex1}
	\end{equation}
	
	\begin{figure}[t]
		\centering
		\begin{minipage}[c]{0.4\linewidth}
			\centering
			\includegraphics[width=0.8\linewidth]{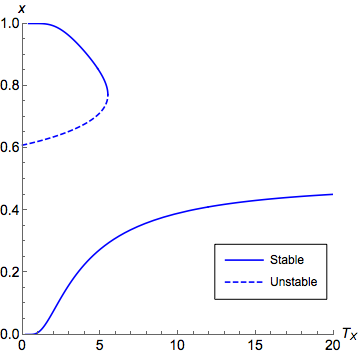}
			\caption{Example~\ref{ex:ex1}, $T_y=0.5$.}
			\label{fig:as_c1_low_TY}
		\end{minipage}
		\hspace{0.05\linewidth}
		\begin{minipage}[c]{0.4\linewidth}
			\centering
			\includegraphics[width=0.8\linewidth]{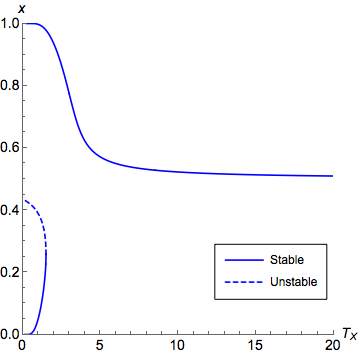}
			\caption{Example~\ref{ex:ex1}, $T_y=2$.}
			\label{fig:as_c1_high_TY}
		\end{minipage}
	\end{figure}
	
	There are two PNEs in this game, $(x,y)=(0,0)$ and $(1,1)$.  By fixing some $T_y$, we can plot different QREs with respect to $T_x$ as in Figure~\ref{fig:as_c1_low_TY} and Figure~\ref{fig:as_c1_high_TY}.  For simplicity, we only show the value of $x$ in the figure, since according to \eqref{eq:assym_xy}, given $x$ and $T_y$, the value of $y$ can be uniquely determined.  Assuming the system follows the Q-learning dynamics, as we slowly vary $T_x$, $x$ tends to stay on the line segment that is the closest to where it was originally corresponding to a stable but inefficient fixed point.  We consider the following process:
	\begin{enumerate}
		\item The initial state is $(0.05,0.14)$, where $T_x \approx 1$ and $T_y \approx 2$. We plot $x$ versus $T_x$ by fixing $T_y=2$ in Figure~\ref{fig:as_c1_high_TY}.
		\item Fix $T_y=2$, and increase $T_x$ to where there is only one QRE correspondence.
		\item Fix $T_y=2$, and decrease $T_x$ back to $1$. Now $x \approx 0.997$.
	\end{enumerate} 
\end{example}

In the above example, we can find that although at the end the temperature parameters are set back to their initial value, the system state ends up to be a totally different equilibrium.  This behavior is known as the \emph{hysteresis effect}.  In this section, we would like to answer the question \emph{when this is going to happen}.  Further, in the next section, we will answer \emph{how can we take advantage of this phenomenon}.

\subsection{Characterizing QREs}
We consider the bifurcation diagrams for QREs in $2\times 2$ games.  Without loss of generality, we consider a properly rescaled $2 \times 2$ game with payoff matrices in the diagonal form:
\begin{equation*} 
\vv{A}_D = \left( \begin{array}{cc}
a_X & 0 \\
0 & b_X \end{array}
\right), \quad
\vv{B}_D = \left( \begin{array}{cc}
a_Y & 0 \\
0 & b_Y \end{array}
\right)
\end{equation*}
Also, we can assume the action indices are ordered properly and rescaled properly so that $a_X>0$ and $|a_X|\ge|b_X|$. For simplicity, we assume $a_X=b_X$ and $b_X=b_Y$ do not hold at the same time.  At QRE, we have
\begin{align}
&x = \frac{e^{\frac{1}{T_x} ya_X}}{e^{\frac{1}{T_x} ya_X} + e^{\frac{1}{T_x} (1-y)b_X}},
&y = \frac{e^{\frac{1}{T_y} xa_Y}}{e^{\frac{1}{T_y} xa_Y} + e^{\frac{1}{T_y} (1-x)b_Y}}
\label{eq:assym_xy}
\end{align}

Given $T_x$ and $T_y$, there could be multiple solutions to \eqref{eq:assym_xy}.  
However, we find that if we know the equilibrium states, then we can recover the temperature parameters.  We solve for $T_x$ and $T_y$ in \eqref{eq:assym_xy}, and then we get 
\begin{align}
&T_X^I(x,y) = \frac{-(a_X+b_X) y + b_X}{\ln(\frac{1}{x}-1)},
&T_Y^I(x,y) = \frac{-(a_Y+b_Y) x + b_Y}{\ln(\frac{1}{y}-1)} \label{eq:Txy}
\end{align}
We call this the \emph{first form of representation}, where $T_x$ and $T_y$ are written as functions of $x$ and $y$.  Here the capital subscripts for $T_X$ and $T_Y$ indicate that they are considered as functions.  A direct observation to \eqref{eq:Txy} is that both of them are continuous function over $(0,1)\times(0,1)$ except for $x=1/2$ and $y=1/2$.

An alternative way to describe the QRE is to write $T_x$ and $y$ as a function of $x$ and parameterize with respect to $T_y$ in the following \emph{second form of representation}.  This will be the form that we use to prove many useful characteristics of QREs.
\begin{align}
T_X^{II}(x,T_y) &= \frac{-(a_X+b_X) y^{II}(x,T_y) + b_X}{\ln(\frac{1}{x}-1)} \label{eq:Tx_}\\
y^{II}(x,T_y) &= \bigg(1+e^{\frac{1}{T_y}(-(a_Y+b_Y) x + b_Y)}\bigg)^{-1} \label{eq:y_}
\end{align}
In this way, if we are given $T_y$, we are able to analyze how $T_x$ changes with $x$.  This helps us understand how to answer the question of what are the QREs given $T_x$ and $T_x$ in the system.

We also want to analyze the stability of the QREs.  From dynamical system theory (e.g. \cite{perko}), a fixed point of a dynamical system is said to be asymptotically stable if all of the eigenvalues of its Jacobian matrix has negative real part; if it has at least one eigenvalue with positive real part, then it is unstable.
It turns out that under the second form representation, we are able to determine whether a segment in the diagram is stable or not.
\begin{lemma} \label{lem:qre_stab}
	Given $T_y$, the system state $\left(x,y^{II}(x,T_y)\right)$ is a stable equilibrium if and only if
	\begin{enumerate}
		\item $\frac{\partial T_X^{II}}{\partial x}(x,T_Y)>0$ if $x \in (0, 1/2)$.
		\item $\frac{\partial T_X^{II}}{\partial x}(x,T_Y)<0$ if $x \in ( 1/2,1)$.
	\end{enumerate}
\end{lemma}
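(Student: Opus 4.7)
The plan is to reduce stability to a sign condition on the determinant of the Jacobian and then show that this sign matches, up to the sign of $\ln((1-x)/x)$, the sign of $\partial T_X^{II}/\partial x$. In particular, after rescaling to the diagonal form, the Q-learning dynamics on $(x,y) = (x_1, y_1)$ can be rewritten compactly as
\begin{equation*}
\dot x = x(1-x)\, f(x,y), \qquad \dot y = y(1-y)\, g(x,y),
\end{equation*}
where $f(x,y) = (a_X+b_X)y - b_X + T_x \ln((1-x)/x)$ and $g(x,y) = (a_Y+b_Y)x - b_Y + T_y \ln((1-y)/y)$. The interior QREs are exactly the zeros of $(f,g)$, and these $f,g$ are precisely the quantities whose vanishing gives formula \eqref{eq:Txy}.

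First I would compute the Jacobian at an interior fixed point. Because $f = g = 0$ there, many terms drop out and one obtains
\begin{equation*}
J = \begin{pmatrix} -T_x & x(1-x)(a_X+b_X) \\ y(1-y)(a_Y+b_Y) & -T_y \end{pmatrix},
\end{equation*}
using $\tfrac{d}{dx}\ln((1-x)/x) = -1/(x(1-x))$. Since $T_x, T_y > 0$, the trace $-(T_x+T_y)$ is strictly negative, so the fixed point is asymptotically stable iff $\det J > 0$ and unstable iff $\det J < 0$. Thus stability is equivalent to
\begin{equation*}
\Delta(x,y) := T_x T_y - x(1-x)\,y(1-y)(a_X+b_X)(a_Y+b_Y) > 0.
\end{equation*}

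Next I would differentiate $T_X^{II}(x,T_y) = N(x)/D(x)$ in \eqref{eq:Tx_}, where $N(x) = b_X - (a_X+b_X)\,y^{II}(x,T_y)$ and $D(x) = \ln((1-x)/x)$. A direct computation from \eqref{eq:y_} gives $dy^{II}/dx = y(1-y)(a_Y+b_Y)/T_y$, and on the QRE locus we have the identity $T_x = N/D$. Applying the quotient rule and substituting $N = T_x D$ and $D' = -1/(x(1-x))$ cleans up the expression to
\begin{equation*}
\frac{\partial T_X^{II}}{\partial x}(x,T_y) \;=\; \frac{N'(x) - T_x D'(x)}{D(x)} \;=\; \frac{\Delta(x,y)}{x(1-x)\, T_y\, \ln((1-x)/x)}.
\end{equation*}
The key observation is that the numerator is exactly $\det J$, so the sign of the derivative equals the sign of $\det J$ divided by the sign of $\ln((1-x)/x)$.

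Finally, since $\ln((1-x)/x) > 0$ for $x \in (0,1/2)$ and $\ln((1-x)/x) < 0$ for $x \in (1/2,1)$, the equivalence ``stable $\iff \det J > 0$'' translates into the two cases of the lemma. The only real obstacle is the bookkeeping in the derivative computation: one has to remember to exploit the QRE identity $T_x = N/D$ to eliminate $T_x$ and reveal $\Delta$, otherwise the expression looks unwieldy and the cancellation with the determinant is obscured.
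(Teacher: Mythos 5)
Your proposal is correct and follows essentially the same route as the paper, which disposes of the lemma in one sentence by asserting that the stated sign condition is equivalent to both eigenvalues of the Jacobian of \eqref{eq:conti_dynamics} having negative real part. Your computation of $J$, the reduction of stability to $\det J>0$ via the negative trace, and the identity $\frac{\partial T_X^{II}}{\partial x}=\det J/\bigl(x(1-x)\,T_y\ln((1-x)/x)\bigr)$ supply exactly the bookkeeping the paper leaves implicit.
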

\begin{proof}
	The given condition is equivalent to the case that both eigenvalues of the Jacobian matrix of the dynamics \eqref{eq:conti_dynamics} are negative.
\end{proof}

Finally, we define the \emph{principal branch}.  In Example~\ref{ex:ex1}, we call the branch on $x \in (0.5,1)$ the \emph{principal branch} given $T_y=2$, since for any $T_x>0$, there is some $x \in (0.5,1)$ such that $T_X^{II}(x,T_y)=T_x$.  Analogously, we can define it formally as in the following definition with the help of the second form representation.
\begin{definition}
	Given $T_y$, the region $(a,b)\subset(0,1)$ \emph{contains the principal branch} of QRE correspondence if it satisfies the following conditions:
	\begin{enumerate}
		\item $T_X^{II}(x,T_y)$ is continuous and differentiable for $x \in (a,b)$.
		\item $T_X^{II}(x,T_y) > 0$ for $x \in (a,b)$.
		\item For any $T_x>0$, there exists $x \in (a,b)$ such that $T_X^{II}(x,T_y)=T_x$.
	\end{enumerate}
	Further, for a region $(a,b)$ that contains the principal branch, $x \in (a,b)$ is \emph{on the principal branch} if it satisfies the following conditions:
	\begin{enumerate}
		\item The equilibrium state $(x, y^{II}(x, T_y))$ is stable.
		\item There is no $x' \in (a,b), x'<x$ such that $T_X^{II}(x',T_y) =T_X^{II}(x,T_y) $.
	\end{enumerate}
\end{definition}

\subsection{Coordination Games}\label{sec:coord_topo}
We begin our analysis with the class of coordination games, where we have all $a_X$, $b_X$, $a_Y$, and $b_Y$ positive.    Also, without loss of generality, we assume $a_X \ge b_X$.  In this case, there are no dominant strategy for both players, and there are two PNEs.  

Let us revisit Example~\ref{ex:ex1}, we can make the following observations from Figure~\ref{fig:as_c1_low_TY} and Figure~\ref{fig:as_c1_high_TY}:
\begin{enumerate}
	\item Given $T_y$, there are three branches.  One is the principal branch, while the other two appear in pairs and occur only when $T_x$ is less than some value.
	\item For small $T_y$, the principal branch goes toward $x=0$; while for large $T_y$, the principal branch goes toward $x=1$.
\end{enumerate} 

Now, we are going to show that these observations are generally true in coordination games.  The proofs in this section are deferred to Appendix~\ref{appendix:bifurcation}, where we will give a detailed discussion on the proving techniques.  

The first idea we are going to introduce is the \emph{inverting temperature}, which is the threshold of $T_y$ in Observation~(2).  We  define it as 
$$T_I=\max\left\{0,\frac{b_Y-a_Y}{2\ln(a_X/b_X)}\right\}$$
We note that $T_I$ is positive only if $b_Y>a_Y$, which is the case that two players have different preferences.  When $T_y<T_I$, as the first player increases his rationality from fully irrational, i.e. $T_x$ decreases from infinity, he is likely to be influenced by the second player's preference.  If $T_y$ is greater than $T_I$, then the first player prefers to follow his own preference, making the principal branch goes toward $x=1$.  We formalize this idea in the following theorem:
\begin{theorem}[Direction of the principal branch]\label{thm:coord_topo_pb}
	Given a $2 \times 2$ coordination game, and given $T_y$, the following statements are true:
	\begin{enumerate}
		\item If $T_y>T_I$, then $(0.5,1)$ contains the principal branch.
		\item If $T_y<T_I$, then $(0,0.5)$ contains the principal branch.
	\end{enumerate}
\end{theorem}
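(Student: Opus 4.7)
The plan is to read off the principal branch from the sign and boundary behavior of $T_X^{II}(\cdot, T_y)$ on each half of $(0,1)$, using the second--form representation in \eqref{eq:Tx_}--\eqref{eq:y_}. The denominator $\ln(1/x - 1)$ is strictly positive on $(0, 1/2)$, strictly negative on $(1/2, 1)$, and vanishes at $x = 1/2$, while both numerator and denominator are smooth away from $1/2$. Thus continuity and differentiability on each half-interval are automatic, and the entire argument reduces to controlling the sign and limits of the numerator
$$N(x) \;:=\; b_X - (a_X + b_X)\, y^{II}(x, T_y).$$

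The first step I would carry out is a monotonicity observation: since $a_Y + b_Y > 0$, the exponent defining $y^{II}$ is strictly decreasing in $x$, so $y^{II}(\cdot, T_y)$ is strictly increasing and $N$ is strictly decreasing. The sign of $N$ on $(0,1)$ is therefore determined by its value at the midpoint. Computing $y^{II}(1/2, T_y) = (1 + e^{(b_Y - a_Y)/(2T_y)})^{-1}$ and rearranging $N(1/2) < 0$ algebraically yields
$$\frac{b_Y - a_Y}{2T_y} \;<\; \ln(a_X/b_X),$$
which, taking into account the $\max\{0, \cdot\}$ in the definition of $T_I$, is equivalent to $T_y > T_I$. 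The reverse inequality likewise corresponds to $T_y < T_I$. This is the only nontrivial algebraic reduction in the proof.

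For the first case $T_y > T_I$, monotonicity of $N$ propagates $N(1/2) < 0$ to $N(x) < 0$ on all of $(1/2, 1)$; combined with $\ln(1/x - 1) < 0$ there, this gives $T_X^{II}(x, T_y) > 0$ on $(1/2, 1)$, establishing the first two defining properties of the principal branch. For the third property I would check the endpoint limits: as $x \to (1/2)^+$ the denominator approaches $0^-$ against a bounded negative numerator, giving $T_X^{II} \to +\infty$; as $x \to 1^-$ the denominator diverges to $-\infty$ while the numerator stays bounded, giving $T_X^{II} \to 0^+$. Continuity of $T_X^{II}(\cdot, T_y)$ on $(1/2, 1)$ and the intermediate value theorem then imply that every positive real is attained.

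The case $T_y < T_I$ is handled by a mirror argument: this regime forces $b_Y > a_Y$ and $a_X > b_X$ so that $T_I > 0$; we get $N > 0$ on $(0, 1/2)$ against a positive denominator, hence $T_X^{II} > 0$, with endpoint limits $T_X^{II} \to 0^+$ as $x \to 0^+$ and $T_X^{II} \to +\infty$ as $x \to (1/2)^-$, again covering $(0, \infty)$. I expect the only step worth flagging as delicate to be the algebraic equivalence between $N(1/2)$ and $T_I$, particularly in the degenerate subcases $b_Y \le a_Y$ (where $T_I = 0$ forces Case~2 to be vacuous) and $a_X = b_X$ (explicitly excluded); the remaining work is direct analytic bookkeeping with no serious obstacle beyond the monotonicity of $y^{II}$ in $x$.
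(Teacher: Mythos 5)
Your proposal is correct and follows essentially the same route as the paper: the paper's Proposition on the direction of the principal branch establishes precisely your midpoint comparison $y^{II}(1/2,T_y)$ versus $b_X/(a_X+b_X)$ (equivalently the sign of your $N(1/2)$) and the resulting one-sided limits at $x=1/2$, and the paper then uses monotonicity of $y^{II}$ in $x$ together with the boundary limits at $x\to 0^+$ and $x\to 1^-$ to verify the three defining properties of the principal branch, exactly as you do. The only cosmetic difference is organizational: the paper routes the positivity of $T_X^{II}$ on $(0,1/2)$ in the second case through a separate proposition rather than through the monotone-numerator argument you use, but the underlying computation is the same.
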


The second idea is the \emph{critical temperature}, denoted as $T_C(T_y)$, which is a function of $T_y$.  The critical temperature is defined as the infimum of $T_x$ such that for any $T_x>T_C(T_y)$, there is a unique QRE correspondence under $(T_x, T_y)$. Generally, there is no close form for the critical temperature.  However, we can still compute it efficiently, as we show it in Theorem~\ref{thm:coord_topo_ct}.  Besides, another interesting value of $T_y$ we should be noticed is $T_B=\frac{b_Y}{\ln(a_X/b_X)}$, which is the maximum value of $T_y$ that QREs not on the principal branch are presenting.  Intuitively, as $T_y$ goes beyond $T_B$, the first player ignores the decision of the second player and turn his face to what he think is better.  We formalize the idea of $T_C$ and $T_B$ in the following theorem:

\begin{theorem}[Properties about the second QRE]\label{thm:coord_topo_ct}
	Given a $2 \times 2$ coordination game, and given $T_y$, the following statements are true:
	\begin{enumerate}
		\item For almost every $T_x>0$, all QREs not lying on the principal branch appear in pairs.
		\item If $T_y>T_B$, then there is no QRE correspondence in $x \in (0,0.5)$.
		\item If $T_y>T_I$, then there is no QRE correspondence for $T_x>T_C(T_y)$ in $x \in(0,0.5)$.
		\item If $T_y<T_I$, then there is no QRE correspondence for $T_x>T_C(T_y)$ in $x \in(0.5,1)$.
		\item $T_C(T_y)$ is given as $T_X^{II}(x_L,T_y)$, where $x_L$ is the solution to the equality
		$$
		y^{II}(x,T_y)+x(1-x)\ln\left(\frac{1}{x}-1\right)\frac{\partial y^{II}}{\partial x}(x,T_y)=\frac{b_X}{a_X+b_X}
		$$
		\item $x_L$ can be found using binary search.
	\end{enumerate}
\end{theorem}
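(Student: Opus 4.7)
The plan is to analyze the graph of the function $x \mapsto T_X^{II}(x, T_y)$ on $(0,1) \setminus \{1/2\}$ for fixed $T_y > 0$; all six claims then follow by reading off its shape. First I would establish the boundary behavior: the denominator $\ln(1/x - 1)$ diverges to $+\infty$ at $x = 0$ and to $-\infty$ at $x = 1$, while the numerator $N(x) := b_X - (a_X + b_X)\,y^{II}(x, T_y)$ stays bounded, so $T_X^{II}(x, T_y) \to 0$ at both endpoints. At $x = 1/2$ the denominator vanishes, so $T_X^{II}$ blows up to $\pm\infty$ with the sign determined by that of $N(1/2)$; a direct calculation shows this sign flips precisely at $T_y = T_I$, consistent with Theorem~\ref{thm:coord_topo_pb}: the principal branch sits on whichever half-interval makes $T_X^{II}$ positive near $1/2$.

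Item 5 then drops out of differentiation. With $D(x) = \ln(1/x-1)$, the identity $D'(x) = -1/(x(1-x))$ reduces the critical-point equation $N'(x)\,D(x) = N(x)\,D'(x)$ to exactly the displayed equality after multiplication by $-x(1-x)/(a_X+b_X)$. Item 2 follows from the monotonicity of $y^{II}$ in $x$: since $\partial y^{II}/\partial x > 0$, the condition $N(x) > 0$ is equivalent to $y^{II}(x, T_y) < b_X/(a_X+b_X)$, and by monotonicity it suffices to check this at $x \to 0^+$; the resulting inequality on $T_y$ simplifies exactly to $T_y < T_B$. Items 3 and 4 are then immediate from this sign analysis combined with Theorem~\ref{thm:coord_topo_pb}: when $T_y > T_I$ the principal branch lies in $(1/2, 1)$, so any QRE in $(0, 1/2)$ is off-principal and must disappear once $T_x$ exceeds $T_C(T_y)$; the opposite inequality is handled symmetrically.

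Items 1 and 6 (and the implicit uniqueness of $x_L$ in item 5) all rest on a single structural fact: the function $T_X^{II}(\cdot, T_y)$ has at most one local extremum on each half-interval $(0,1/2)$ and $(1/2,1)$, i.e.\ is unimodal there. Granting this, on the non-principal side the function climbs from $0$ to a unique maximum $T_C(T_y) = T_X^{II}(x_L, T_y)$ and then descends to $-\infty$ at $1/2$; for every $T_x \in (0, T_C(T_y))$ the equation $T_X^{II}(\cdot, T_y) = T_x$ therefore has exactly two solutions (item 1), and $x_L$ can be found by bisection on the sign of $\partial T_X^{II}/\partial x$, which is positive for $x < x_L$ and negative for $x > x_L$ (item 6). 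Proving unimodality is where I expect the bulk of the work to lie. My primary plan is to substitute the closed-form expression for $y^{II}$ and its derivative into the item 5 equation, clear denominators, and argue by direct sign analysis that the resulting transcendental equation has a unique root per half-interval; the key difficulty will be controlling the interaction of the bell-shaped factor $x(1-x)\ln(1/x-1)$ with the logistic derivative $\partial y^{II}/\partial x$. A fallback route, should this calculation become unwieldy, is to show that every critical point of $T_X^{II}$ on a half-interval must be a local maximum, whence a second critical point would force an intervening local minimum that is ruled out by the boundary values $0$ and $\pm\infty$ via Rolle-type counting.
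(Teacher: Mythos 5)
Your overall route --- boundary behavior of $T_X^{II}(\cdot,T_y)$, the sign of the numerator to locate where $T_X^{II}>0$, the critical-point equation obtained by differentiating the quotient, and reading items 1--6 off the resulting shape --- is the same as the paper's, and your derivations of items 2 and 5 match Lemma~\ref{lem:techs} and the computations in Propositions~\ref{prop:case1a_left} and~\ref{prop:c1a_g_right2} essentially verbatim. The genuine gap is exactly where you place it, but it is worse than a deferred computation. First, your ``single structural fact'' is false as stated: when $T_y>T_I$ and $a_Y<b_Y$, the function $T_X^{II}(\cdot,T_y)$ can have \emph{two} critical points (a local minimum followed by a local maximum) on the principal half-interval $(1/2,1)$ --- this is precisely part 3 of Theorem~\ref{thm:coord_topo_stab} and Figure~\ref{fig:as_c1a_by_ex2}, where the principal branch loses continuity and an extra off-principal pair appears on the \emph{principal} side. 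What is true, and what items 3--6 actually need, is unimodality on the non-principal half-interval only. Second, your fallback (``every critical point is a local maximum, then count via Rolle'') does not come for free even there; establishing it is the entire content of the paper's two key propositions.

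The paper's resolution, which you should adopt, runs through the auxiliary function $L(x,T_y)=y^{II}+x(1-x)\ln(1/x-1)\,\partial y^{II}/\partial x$: since the denominator of $\partial T_X^{II}/\partial x$ is the positive quantity $x(1-x)[\ln(1/x-1)]^2$, the derivative has the sign of $b_X-(a_X+b_X)L$ on \emph{both} half-intervals, so critical points of $T_X^{II}$ are crossings of the level $b_X/(a_X+b_X)$ by $L$. On $(0,1/2)$ (the non-principal side when $T_y>T_I$) the paper shows $\partial L/\partial x>0$ on the sub-interval $(0,x_1)$ where $T_X^{II}>0$, so there is at most one crossing, necessarily from below, hence a single local maximum of $T_X^{II}$; monotonicity of $L$ is also what makes the binary search of item 6 well-founded. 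On $(1/2,1)$ (the non-principal side when $T_y<T_I$), $L$ is \emph{not} monotone: Proposition~\ref{prop:c1a_g_right2} shows it has a local maximum at $x=1/2$ and a unique interior local minimum $x^*$ by analyzing the sign and monotonicity of $G(x,T_y)=(1-2x)+x(1-x)(1-2y^{II})(a_Y+b_Y)/T_y$, and then uses the boundary values $L(1/2^+)=y^{II}(1/2,T_y)<b_X/(a_X+b_X)$ (which is exactly the condition $T_y<T_I$) and $\lim_{x\to 1^-}L=y^{II}(1,T_y)>b_X/(a_X+b_X)$ to force exactly one crossing. That two-sided case analysis of $L$ --- monotone on one side, min-shaped with controlled boundary values on the other --- is the missing bulk of work your proposal defers, and a direct attack on the transcendental equation without introducing $L$ and $G$ is unlikely to be tractable.
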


The next aspect of the QRE correspondence is their stability.  According to Lemma~\ref{lem:qre_stab}, the stability of the QREs can also be inspected with the advantage of the second form representation by analyzing $\frac{\partial T_X^{II}}{\partial x}$.  We state the results in the following theorem:

\begin{theorem}[Stability]\label{thm:coord_topo_stab}
	Given a $2 \times 2$ coordination game, and given $T_y$, the following statements are true:
	\begin{enumerate}
		\item If $a_Y \ge b_Y$, then the principal branch is continuous.
		\item If $T_y<T_I$, then the principal branch is continuous.
		\item If $T_y>T_I$ and $a_Y< b_Y$, then the principal branch may not be continuous.
		\item Fix $T_x$, for the pairs of QREs not lying on the principal branch, the one of less distance to $x=0.5$ is unstable, while the other one is stable.
	\end{enumerate}
	
\end{theorem}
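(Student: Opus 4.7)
The plan is to reduce every part of the theorem to a sign analysis of $\partial T_X^{II}/\partial x$, invoked through Lemma~\ref{lem:qre_stab}. Writing $T_X^{II}(x,T_y) = g(x,T_y)/f(x)$ with $f(x) = \ln(1/x - 1)$ and $g(x,T_y) = b_X - (a_X+b_X)\,y^{II}(x,T_y)$, a direct computation gives $\partial T_X^{II}/\partial x = (g_x f - g f_x)/f^2$. From the closed form \eqref{eq:y_}, $y^{II}$ is strictly increasing in $x$ (since $-(a_Y+b_Y)<0$), so $g_x < 0$; also $f_x(x) = -1/[x(1-x)] < 0$, with $f<0$ on $(0.5,1)$ and $f>0$ on $(0,0.5)$. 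All four parts come from how these signs combine with Lemma~\ref{lem:qre_stab}, which asks for $\partial T_X^{II}/\partial x > 0$ on $(0,0.5)$ and $<0$ on $(0.5,1)$ for stability.

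For parts~1 and~2, I would establish that $\partial T_X^{II}/\partial x$ has constant sign throughout the region containing the principal branch, so that $T_X^{II}(\cdot,T_y)$ is a monotone bijection onto $(0,\infty)$ there, making the branch a continuous curve. For part~2 ($T_y<T_I$), Theorem~\ref{thm:coord_topo_pb} places the branch on $(0,0.5)$ where $f>0$; I want $g_x f - g f_x > 0$. The cleanest route splits at the threshold $x^\star$ where $y^{II}(x^\star,T_y)=b_X/(a_X+b_X)$ so that $g$ changes sign, and shows that on each side the dominant contribution has the right sign. The bound $T_y < T_I = (b_Y-a_Y)/[2\ln(a_X/b_X)]$ is exactly what forces $y^{II}$ to rise slowly enough across $(0,0.5)$ that $g_x f$ controls $-g f_x$ near $x^\star$. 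Part~1 ($a_Y \ge b_Y$) collapses to $T_I = 0$, so we always work on $(0.5,1)$ where $f<0$; the hypothesis $a_Y \ge b_Y$ caps the slope of $y^{II}$ so that the numerator stays negative throughout the interval.

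For part~3 an analytic argument is not needed: I would exhibit a concrete numerical instance (for example, $T_y$ slightly above $T_I$ with $a_Y \ll b_Y$, pushing the transition of $y^{II}$ steeply into $(0.5,1)$) for which $\partial T_X^{II}/\partial x$ changes sign inside $(0.5,1)$, producing a fold and hence a discontinuity in the principal branch as a single-valued stable selection. Part~4 is the cleanest: by Theorem~\ref{thm:coord_topo_ct}, a non-principal pair $x_1<x_2$ lies on one side of $0.5$ and merges at $T_x = T_C(T_y)$, where by the characterization in that theorem $\partial T_X^{II}/\partial x$ vanishes. The intermediate value theorem applied to $T_X^{II}(\cdot,T_y)$ forces opposite signs of the derivative at $x_1$ and $x_2$, and Lemma~\ref{lem:qre_stab} then assigns stability to the member whose derivative sign matches the side; a direct comparison with the location of the fold shows that this is always the member farther from $0.5$, so the closer one is unstable.

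The main obstacle is the uniform sign control in parts~1 and~2: the numerator $g_x f - g f_x$ is a transcendental combination of $\ln(1/x-1)$ and the sigmoid $y^{II}$, and the case split at $g=0$ makes the bookkeeping delicate. The definition of the inverting temperature $T_I$ is essentially engineered to make this sign estimate go through, so a large part of the work will be verifying that the threshold $T_I$ is tight enough for the dominance argument and locating exactly where each of $g$, $g_x$, and $f$ contribute.
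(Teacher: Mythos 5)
Your framework---reduce everything to the sign of $\partial T_X^{II}/\partial x$ via Lemma~\ref{lem:qre_stab}, get continuity of the principal branch from monotonicity of $T_X^{II}$ on the region containing it, handle part~3 by exhibiting an instance with a sign change, and handle part~4 by locating the fold---is exactly the paper's. Parts~3 and~4 of your outline go through essentially as in the paper, which places the non-principal pair on either side of the unique critical point $x_L$ of $T_X^{II}$ and reads off stability from the sign of the derivative on $(0,x_L)$ versus $(x_L,x_1)$.

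The gap is in parts~1 and~2, which you yourself flag as ``the main obstacle'' and leave unresolved, and the mechanism you propose for closing it is not the right one. On the region containing the principal branch your numerator $g_x f - g f_x$ always consists of two terms of opposite sign (there $g$ and $f$ share a sign, while $g_x<0$ and $f_x<0$), so a genuine dominance estimate is required; but the hypotheses $T_y<T_I$ and $a_Y\ge b_Y$ do not ``cap the slope'' of $y^{II}$---on the contrary, $\partial y^{II}/\partial x = y^{II}(1-y^{II})(a_Y+b_Y)/T_y$ blows up as $T_y\to 0$, so small $T_y$ steepens the sigmoid. What these hypotheses actually control is the \emph{value} $y^{II}(1/2,T_y)$ relative to $b_X/(a_X+b_X)$. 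The paper closes the estimate with an auxiliary function: writing $\frac{\partial T_X^{II}}{\partial x} = \frac{b_X-(a_X+b_X)L(x,T_y)}{x(1-x)[\ln(1/x-1)]^2}$ with $L = y^{II}+x(1-x)\ln(1/x-1)\,\frac{\partial y^{II}}{\partial x}$, one checks that $\frac{\partial L}{\partial x} = \left[(1-2x)+x(1-x)(1-2y^{II})\frac{a_Y+b_Y}{T_y}\right]\ln(1/x-1)\frac{\partial y^{II}}{\partial x}$ has a fixed sign on the relevant half-interval (each factor has a determinable sign there), so $L$ is monotone; and since the logarithmic term vanishes at $x=1/2$, one has the anchor $L(1/2)=y^{II}(1/2,T_y)$, which the definition of $T_I$ (resp.\ the hypothesis $a_Y\ge b_Y$) places on the correct side of $b_X/(a_X+b_X)$. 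Monotonicity plus this anchor yields the uniform sign of $b_X-(a_X+b_X)L$, hence of $\partial T_X^{II}/\partial x$. Without this (or an equivalent) device your dominance argument near a sign change of $g$ does not get off the ground---indeed for $T_y<T_I$ the function $g$ does not change sign on $(0,1/2)$ at all.
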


\begin{figure}[t]
	\centering
	\begin{minipage}[c]{0.4\linewidth}
		\centering
		\includegraphics[width=0.8\linewidth]{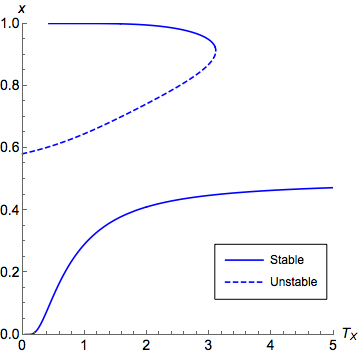}
		\caption{In a coordination game with $a_Y < b_Y$ and low $T_y$.}
		\label{fig:as_c1a_by_ex1}
	\end{minipage}
	\hspace{0.05\linewidth}
	\begin{minipage}[c]{0.4\linewidth}
		\centering
		\includegraphics[width=0.8\linewidth]{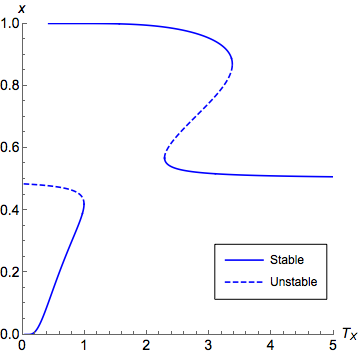}
		\caption{In a coordination game with $a_Y < b_Y$ and high $T_Y$.  There is a non-stable segment on the principal branch.}
		\label{fig:as_c1a_by_ex2}
	\end{minipage}
\end{figure}

Note that part 3 in Theorem~\ref{thm:coord_topo_stab} infers that there is potentially an unstable segment between segments of the principal branch.  This phenomenon is illustrated in Figure~\ref{fig:as_c1a_by_ex2}.  Though this case is weaker than other cases, this does not hinder us from designing a controlling mechanism as we are going to do in Section~\ref{sec:mec_1}.


\subsection{Non-coordination games}
Due to space constraint, the analysis for non-coordination games is deferred to Appendix~\ref{sec:topo_nc}.

\section{Mechanism Design}

\subsection{Hysteresis Mechanism: Select the Best Nash Equilibrium via QRE Dynamics}\label{sec:mec_2}
In this section, we consider the class of coordination games, and when the socially optimal state is one of the PNEs.  The main task for us in this case is to determine when and how we can get to the socially optimal PNE.  In Example~\ref{ex:ex1}, by sequentially changing $T_x$, we move the equilibrium state from around $(0,0)$ to around $(1,1)$, which is the social optimum state.  We formalize this idea as the \emph{hysteresis mechanism} and present it in Theorem~\ref{thm:mec2}. The hysteresis mechanism mainly takes advantage of the hysteresis effect we have discussed in Section~\ref{sec:bif}, that we use transient changes of system parameters to induce permanent
improvement to system performance via optimal equilibrium selection.

\begin{theorem}[Hysteresis Mechanism]\label{thm:mec2}
	Given a $2 \times 2$ game, if it satisfies the following property:
	\begin{enumerate}
		\item Its diagonal form satisfies $a_X,b_X,a_Y,b_Y>0$.
		\item Exactly one of its pure Nash equilibrium is the socially optimal state.
	\end{enumerate}
	Without loss of generality, we can assume $a_X \ge b_X$. Then, there is a mechanism to control the system to the social optimum by sequentially changing $T_x$ and $T_y$ if 1) $a_Y \ge b_Y$ and 2) the socially optimal state is $(0,0)$ do not hold at the same time.
\end{theorem}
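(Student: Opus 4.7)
The plan is to construct an explicit schedule $(T_x(t), T_y(t))$ in parameter space that drives the Q-learning dynamics from the current state to the socially optimal pure Nash equilibrium. The construction leverages the characterization of the principal branch in Theorem~\ref{thm:coord_topo_pb} and the critical-temperature facts in Theorem~\ref{thm:coord_topo_ct}: since the game is a coordination game with $a_X \ge b_X$, the two PNEs are exactly $(0,0)$ and $(1,1)$, and the socially optimal state is one of them. The mechanism always uses the same three-step template---first force the system onto the principal branch by raising $T_x$ above $T_C(T_y)$, then quasi-statically lower $T_x$ to $0$, then quasi-statically lower $T_y$ to $0$---and the only freedom is the value of $T_y$ chosen at the outset, which determines which PNE the principal branch points to.

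I would first treat the case in which $(1,1)$ is the social optimum. Regardless of the sign of $a_Y - b_Y$, pick $T_y > T_I$ (always possible, with $T_I = 0$ when $a_Y \ge b_Y$), so that by Theorem~\ref{thm:coord_topo_pb} the principal branch lies in $(0.5, 1)$. Raise $T_x$ above $T_C(T_y)$; Theorem~\ref{thm:coord_topo_ct} then gives a unique QRE, sitting on this principal branch with $x > 0.5$, and the Q-learning dynamics converge to it by the result of Appendix~\ref{appendix:convergence}. Slowly decrease $T_x$ back to $0$: by Lemma~\ref{lem:qre_stab} the state tracks the stable principal branch and arrives near $x = 1$. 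Finally, slowly decrease $T_y$ to $0$; continuity of the stable QRE correspondence in $T_y$ lands the system at $(1,1)$.

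If instead $(0,0)$ is the social optimum, the assumption of the theorem rules out $a_Y \ge b_Y$, so $a_Y < b_Y$ and hence $T_I > 0$. Choose $T_y \in (0, T_I)$ so that, by Theorem~\ref{thm:coord_topo_pb}, the principal branch lies in $(0, 0.5)$. Executing the same three-step schedule now drives the system to $(0,0)$. The exclusion in the theorem statement is thus exactly tight: when $a_Y \ge b_Y$ we have $T_I = 0$, so for every positive $T_y$ the principal branch sits in $(0.5, 1)$ and any decrease of $T_x$ to $0$ followed by $T_y \downarrow 0$ forces the system to $(1,1)$, making $(0,0)$ unreachable by this scheme.

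The main technical obstacle is justifying that the quasi-static variation of the temperatures actually keeps the trajectory on the intended branch. For the decreasing-$T_x$ phase this follows because, for $T_x > T_C(T_y)$, the principal branch is the unique QRE in the relevant half of the state space, and any equilibria that appear below $T_C(T_y)$ come in stable/unstable pairs (Theorem~\ref{thm:coord_topo_stab}(4)) whose basins do not encroach on the tracked branch when the schedule is slow enough. The most delicate sub-case is $a_Y < b_Y$ with $T_y > T_I$, where Theorem~\ref{thm:coord_topo_stab}(3) allows the principal branch to contain an unstable gap; here the system may be forced to jump once across the gap, but the jump lands on the adjacent stable piece of the branch, which still continues to the target PNE as $T_y \downarrow 0$, so the conclusion is unaffected.
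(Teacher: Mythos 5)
Your proposal is correct and follows essentially the same route as the paper: choose $T_y$ relative to $T_I$ to aim the principal branch at the desired PNE (via Theorem~\ref{thm:coord_topo_pb}), raise $T_x$ above $T_C(T_y)$ to collapse onto that branch (Theorem~\ref{thm:coord_topo_ct}), then lower $T_x$ and subsequently $T_y$ to zero, with the $a_Y \ge b_Y$ case explaining why $(0,0)$ is excluded. Your unification of cases (C1) and (D2) into a single $T_y > T_I$ schedule, and your explicit handling of the unstable gap from Theorem~\ref{thm:coord_topo_stab}(3), are minor refinements of the same argument rather than a different approach.
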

\begin{proof}
	First, note that if $a_Y \ge b_Y$, by Theorem~\ref{thm:coord_topo_pb} the principal branch is always in the region $x > 0.5$.  As a result, once $T_y$ is increased beyond the critical temperature, the system state will no longer return to $x<0.5$ at any positive temperature.  Therefore, $(0,0)$ cannot be approached from any state in $x>0.5$ through the QRE dynamics.
	
	On the other hand, if $a_Y \ge b_Y$ and the socially optimal state is the PNE  $(1,1)$, then we can approach that state by first getting onto the principal branch.  The mechanism can be described as
	\begin{enumerate}
		\item[(C1)] \begin{enumerate}
			\item Raise $T_x$ to some value above the critical temperature $T_C(T_y)$.
			\item Reduce $T_x$ and $T_y$ to $0$.
		\end{enumerate}
	\end{enumerate}
	Though in this case, the initial choice of $T_y$ does not affect the result, if the social designer is taking the costs from assigning large $T_x$ and $T_y$ into account, he is going to trade off between $T_C$ and $T_y$ since typically smaller $T_y$ induces larger $T_C$.
	
	Next, consider $a_Y<b_Y$. 
	If we are aiming for state $(0,0)$, then we can do the following:
	\begin{enumerate}
		\item[(D1)] \begin{enumerate}
			\item Keep $T_y$ at some value below $T_I=\frac{b_Y-a_Y}{2\ln(a_X/b_X)}$. Now the principal branch is at $(0,0.5)$.
			\item Raise $T_x$ to some value above the critical temperature $T_C(T_y)$.
			\item Reduce $T_x$ to $0$.
			\item Reduce $T_y$ to $0$.
		\end{enumerate}
	\end{enumerate}
	On the other hand, if we are aiming for state $(1,1)$, then the following procedure suffices:
	\begin{enumerate}
		\item[(D2)] \begin{enumerate}
			\item Keep $T_y$ at some value above $T_I=\frac{b_Y-a_Y}{2\ln(a_X/b_X)}$.  Now the principal branch is at $(0.5,1)$.
			\item Raise $T_x$ to some value above the critical temperature $T_C(T_y)$.
			\item Reduce $T_x$ to $0$.
			\item Reduce $T_y$ to $0$.
		\end{enumerate}
	\end{enumerate}
	Note that in the last two steps only by reducing $T_y$ after $T_x$ keeps the state around $x=1$.  We recommend the reader to refer to Figure~\ref{fig:ex_rsw_p1} for case (D1), and Figure~\ref{fig:ex_rsw_p2} for case (D2) for more insights.
\end{proof}

\subsection{Efficiency of QREs: An example}
A question that arises with the solution concept of QRE is \emph{does QRE improves social welfare}?  Here we show that the answer is \emph{yes}.  We begin with an example to illustrate:
 \begin{example}\label{ex:ex2}
 	Consider a standard coordination game with the payoff matrices of the form
 	\begin{equation} \label{eq:ex_pos1}
 	A = \left( \begin{array}{cc}
 	\epsilon & 1 \\
 	0 & 1+\epsilon' \end{array}
 	\right), \quad
 	B = \left( \begin{array}{cc}
 	1+\epsilon & 0 \\
 	1 & \epsilon' \end{array}
 	\right)
 	\end{equation}
 	where $\epsilon > \epsilon' >0$ are some small numbers. Note that in this game, there are two PNEs $(x,y)=(1,1)$ and $(x,y)=(0,0)$, with social welfare $1+2\epsilon$ and $1+2\epsilon'$, respectively.  We can see that for small $\epsilon$ and $\epsilon'$, the socially optimal state is $(x,y)=(1,0)$, with social welfare $2$.  In this case, the state $(x,y)=(1,1)$ is the PNE with the best social welfare.  However, we are able to achieve the state with the better social welfare than any NE through QRE dynamics.  We illustrate the social welfare of the QREs with different temperatures of this example in Figure~\ref{fig:pos_improve}.  In this figure, we can see that at PNE, which is the point $T_x=T_y=0$, the social welfare is $1+2\epsilon$.  However, we are able to increase the social welfare by increasing $T_y$.  We will show in Section~\ref{sec:mec_1} a general algorithm to find the particular temperature, as well as a mechanism, which we refer to it as the \emph{optimal control mechanism}, that drives the system to the desired state.
 	\begin{figure}[t]
 		\centering
 		\begin{minipage}[c]{\linewidth}
 			\centering
 			\begin{minipage}[c]{0.4\linewidth}
 				\centering
 				\includegraphics[width=0.8\linewidth]{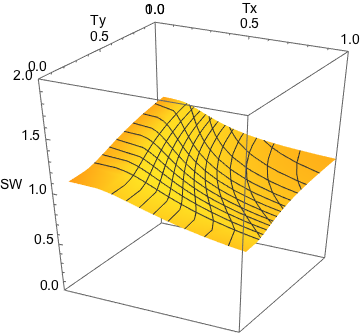}
 			\end{minipage}
 			\hspace{0.1\linewidth}
 			\begin{minipage}[c]{0.4\linewidth}
 				\centering
 				\includegraphics[width=0.8\linewidth]{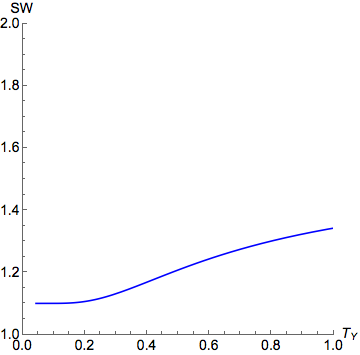}
 			\end{minipage}
 			\caption{The left figure is the social welfare on the principal branch for Example~\ref{ex:ex2}, and the right figure is the illustration when $T_X = 0$. We can see that by increasing $T_y$, we can obtain an equilibrium with the higher social welfare than the best Nash equilibrium (which is $T_x=T_y=0$)}
 			\label{fig:pos_improve}
 		\end{minipage}
 	\end{figure}
 	
 \end{example}


\subsection{Optimal Control Mechanism: Better Equilibrium with Irrationality}\label{sec:mec_1}
Here, we show a general approach to improve the PoS bound for coordination games from Nash equilibria by QREs and Q-learning dynamics.
We denote $QRE(T_x,T_y)$ as the set of QREs with respect to $T_x$ and $T_y$.  Further, denote $QRE$ as the set of the union of $QRE(T_x,T_y)$ over all positive $T_x$ and $T_y$.  Also, denote the set of pure Nash equilibria system states as $NE$.  Since the set $NE$ is the limit of $QRE(T_x,T_y)$ as $T_x$ and $T_y$ approach zero, we have the bounds:
$$
PoA(QRE) \ge PoA(NE), \quad PoS(QRE) \le PoS(NE)
$$
Then, we define \emph{QRE achievable states}:

\begin{definition}
	A state $(x,y) \in [0,1]^2$ is a QRE achievable state if for every $\epsilon>0$, there exist positive finite $T_x$ and $T_y$ and $(x',y')$ such that $|(x',y')-(x,y)|<\epsilon$ and $(x',y') \in QRE(T_x,T_y)$.
\end{definition}

\begin{figure}[t]
	\centering
	\begin{minipage}[c]{0.4\linewidth}
		\centering
		\includegraphics[width=0.8\linewidth]{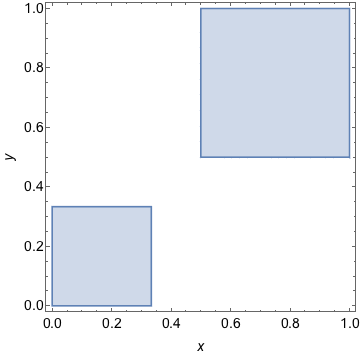}
		\caption{Set of QRE achievable states for Example~\ref{ex:ex2}.}
		\label{fig:ex1_region}
	\end{minipage}
	\hspace{0.05\linewidth}
	\begin{minipage}[c]{0.4\linewidth}
		\centering
		\includegraphics[width=0.8\linewidth]{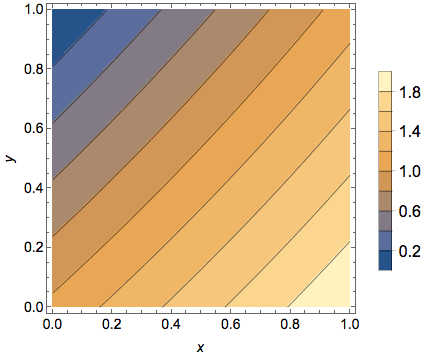}
		\caption{Social welfare for all states in Example~\ref{ex:ex2}.}
		\label{fig:ex1_sw_contour}
	\end{minipage}
\end{figure}

Note that with this definition, pure Nash equilibria are QRE achievable states.  However, the socially optimal states do not necessary to be QRE achievable.  For example, we illustrate in Figure~\ref{fig:ex1_region} the set of QRE achievable states for Example~\ref{ex:ex2}.  We can find that the socially optimal state, $(x,y)=(1,0)$, is not QRE achievable.  Nevertheless, it is easy to see from Figure~\ref{fig:ex1_region} and Figure~\ref{fig:ex1_sw_contour} that we can achieve a higher social welfare at $(x,y)=(1,0.5)$, which is a QRE achievable state.  Formally, we can describe the set of QRE achievable states as the positive support of $T_X^I$ and $T_Y^I$:
\begin{align*}
	S=&\left\{ \left\{x\in\left[\frac{1}{2},1\right], y\in\left[\frac{b_X}{a_X+b_X},1\right]\right\} \cup \left\{x\in\left[0,\frac{1}{2}\right], y\in\left[0,\frac{b_X}{a_X+b_X}\right]\right\} \right\} \\
	&\cap
	\left\{ \left\{x\in\left[\frac{b_Y}{a_Y+b_Y},1\right], y\in\left[\frac{1}{2},1\right]\right\} \cup \left\{x\in\left[0,\frac{b_Y}{a_Y+b_Y}\right], y\in\left[0,\frac{1}{2}\right]\right\} \right\}
\end{align*}
An example for the region of a game with $a_Y \ge b_Y$ is illustrated in Figure~\ref{fig:ex1_region}.  For the case $a_Y < b_Y$, we demonstrate it in Figure~\ref{fig:ex2_region}.

In the following theorem, we propose the \emph{optimal control mechanism} for a general process to achieve an equilibrium that is better than the PoS bound from Nash equilibria.


\begin{theorem}[Optimal Control Mechanism]\label{thm:mec1}
	Given a $2 \times 2$ game, if it satisfies the following property:
	\begin{enumerate}
		\item Its diagonal form satisfies $a_X,b_X,a_Y,b_Y>0$.
		\item None of its pure Nash equilibrium is the socially optimal state.
	\end{enumerate}
	Without loss of generality, we can assume $a_X \ge b_X$. Then,
	\begin{enumerate}
		\item there is a stable QRE achievable state whose social welfare is better than any Nash equilibrium.
		\item there is a mechanism to control the system to this state from the best Nash equilibrium by sequentially changing $T_x$ and $T_y$.
	\end{enumerate} 
\end{theorem}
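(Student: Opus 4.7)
The plan is to combine the explicit characterization of the QRE achievable set $S$ from Section \ref{sec:mec_1} with the principal-branch analysis of Section \ref{sec:coord_topo}. For part (1), since no pure Nash equilibrium is socially optimal, the socially optimal profile must be an off-diagonal corner $(1,0)$ or $(0,1)$; up to relabeling of actions I may focus on $(1,0)$, so $SW(1,0) > \max(SW(1,1), SW(0,0))$. Because $SW$ is bilinear, every boundary line segment of $S$ joining a pure Nash equilibrium to the optimal corner $(1,0)$ carries a social welfare function that is affine in the running coordinate and strictly increasing toward $(1,0)$. A brief case analysis on which PNE is best and on whether $a_Y \ge b_Y$ (which controls which of the candidate limit points $(1,1/2)$, $(1/2,0)$, $(b_Y/(a_Y+b_Y),0)$ actually lie in $S$) then picks out one QRE achievable limit point whose welfare strictly exceeds both PNEs'. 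Continuity produces finite positive $(T_x^*,T_y^*)$ whose principal-branch QRE $(x^*,y^*)$ is close enough to this limit to retain the strict welfare improvement, and by Lemma \ref{lem:qre_stab} together with the stability clause of Theorem \ref{thm:coord_topo_stab}, $(x^*,y^*)$ is an asymptotically stable fixed point of the Q-learning dynamics.

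For part (2), the mechanism has three phases. Phase A: start at the best PNE, either by hypothesis or by invoking the Hysteresis Mechanism of Theorem \ref{thm:mec2}. Phase B: fix one of the temperatures (say $T_x$) at a small positive value and slowly raise the other ($T_y$) from $0$. By Theorem \ref{thm:coord_topo_pb} and the continuity statements in Theorem \ref{thm:coord_topo_stab}, the principal-branch QRE varies continuously in $T_y$; combined with its asymptotic stability and the convergence result in Appendix \ref{appendix:convergence}, this guarantees that the Q-learning state tracks the QRE adiabatically when the sweep is slow relative to the convergence rate. Phase C: halt the sweep at the value $T_y^*$ that maximizes the principal-branch welfare, located by a one-dimensional search using the parametrization $x \mapsto (T_X^{II}(x,T_y), y^{II}(x,T_y))$, and let the dynamics settle at the target QRE; by part (1) this attained welfare strictly exceeds the best PNE's.

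The main obstacle I anticipate is the sub-case $a_Y < b_Y$ with $T_y > T_I$, where Theorem \ref{thm:coord_topo_stab}(3) only guarantees a piecewise-continuous principal branch: a naive monotone sweep of $T_y$ could then cause the system to jump off the intended branch at a discontinuity, landing on a different and possibly worse stable equilibrium. The remedy is a hysteresis-style detour embedded inside Phase B: when a discontinuity is detected, temporarily raise $T_x$ above the critical temperature $T_C(T_y)$ (which by Theorem \ref{thm:coord_topo_ct}(3) collapses the competing stable branch), then lower $T_x$ back so the state settles on the target component of the principal branch. Verifying that this composite trajectory always arrives at the welfare-maximizing stable QRE for every parameter configuration allowed by the hypotheses, and in particular that the detour does not cost the welfare gain established in part (1), requires the full case analysis on whether $a_Y \ge b_Y$ and on which off-diagonal corner is socially optimal, and I expect this bookkeeping, rather than any single conceptual step, to be the most delicate part of the argument.
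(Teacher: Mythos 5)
Your part (1) follows essentially the same route as the paper: the socially optimal profile sits at an off-diagonal corner, social welfare is affine along each boundary segment of the achievable set $S$, and the extreme achievable point of the segment running from the best PNE toward that corner strictly beats every Nash equilibrium, with stability read off from Lemma~\ref{lem:qre_stab} and Theorem~\ref{thm:coord_topo_stab}. The gap is in part (2), and it is not the obstacle you flagged. Your Phase B is an adiabatic sweep that tracks the \emph{principal branch}, but in the cases where the best PNE is $(0,0)$ and $a_Y \ge b_Y$ (the paper's cases (A3) and (A4)), the welfare-improving stable QRE identified in part (1) — $(0,b_X/(a_X+b_X))$ or $(b_Y/(a_Y+b_Y),0)$ — lies at $x \le 1/2$, i.e.\ \emph{off} the principal branch, which by Theorem~\ref{thm:coord_topo_pb} is confined to $x\in(1/2,1)$ for every $T_y$ when $a_Y\ge b_Y$. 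A principal-branch sweep can only deliver states such as $(0.5,1)$, which, as the paper's closing remark states explicitly, need not beat the best Nash equilibrium. Worse, your proposed remedy of raising $T_x$ above the critical temperature $T_C(T_y)$ is exactly the move that must be \emph{avoided} in these cases: by Theorem~\ref{thm:coord_topo_ct} it annihilates the secondary pair of QREs and irreversibly parks the system on the principal branch.

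The paper's mechanism for (A3)/(A4) is therefore the opposite of a hysteresis detour: it keeps both temperatures \emph{below} the bifurcation threshold so that the state remains on the stable member of the secondary QRE pair near $x=0$, recovers the required temperatures from the target state via the first-form representation $T_X^I, T_Y^I$ of \eqref{eq:Txy}, and — crucially — adjusts $T_x$ \emph{before} raising $T_y$, because the reverse order risks crossing the fold and jumping to the principal branch. Your proof needs this separate, order-sensitive, low-temperature mechanism for the off-branch targets; the adiabatic sweep plus hysteresis detour cannot be patched to cover them. (Your secondary concern about the discontinuous principal branch when $a_Y<b_Y$ and $T_y>T_I$ is real but is handled in the paper by choosing $T_y$ on the correct side of $T_I$ in Phase~1 of case (B3), much as you suggest.)
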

\begin{proof}
	
	\begin{figure}[t]
		\centering
		\begin{minipage}[c]{0.4\linewidth}
			\centering
			\includegraphics[width=0.8\linewidth]{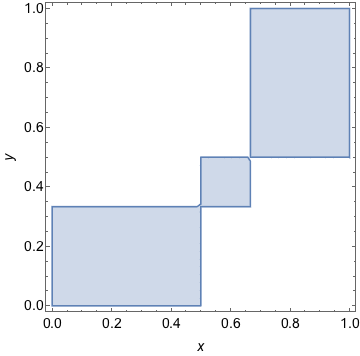}
			\caption{Set of QRE achievable states for a coordination game with $a_Y < b_Y$.}
			\label{fig:ex2_region}
		\end{minipage}
		\hspace{0.05\linewidth}
		\begin{minipage}[c]{0.4\linewidth}
			\centering
			\includegraphics[width=0.8\linewidth]{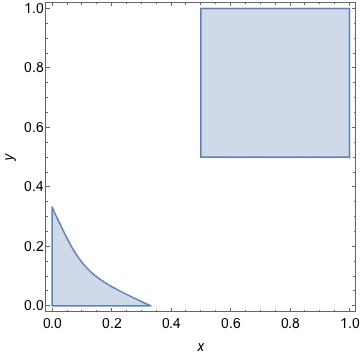}
			\caption{Stable QRE achievable states for a coordination game with $a_Y>b_Y$.}
			\label{fig:ex1_region_stab}
		\end{minipage}
	\end{figure}
	
	Note that given those properties, there are two PNEs $(0,0)$ and $(1,1)$.  Since we know neither of them is social optimum, the socially optimal state must lies on either $(0,1)$ or $(1,0)$.

	First, consider $a_Y \ge b_Y$.  In this case, we know from Theorem~\ref{thm:coord_topo_stab} that all $x \in (0.5,1)$ states belong to a principal branch for some $T_y>0$ and are stable.  While for $x < 0.5$, not all of them are stable.  We illustrate the region of stable QRE achievable states in Figure~\ref{fig:ex1_region_stab}. By Theorem~\ref{thm:coord_topo_ct} and Theorem~\ref{thm:coord_topo_stab}, we can infer that the states near the border $x=0$ are stable.  As a result, we can claim that the following states are what we are aiming for:
	\begin{enumerate}
		\item[(A1)] If $(1,1)$ is the best NE and $(0,1)$ is the SO state, then we select $(0.5,1)$.
		\item[(A2)] If $(1,1)$ is the best NE and $(1,0)$ is the SO state, then we select $(1,0.5)$.
		\item[(A3)] If $(0,0)$ is the best NE and $(0,1)$ is the SO state, then we select $\left(0,\frac{b_X}{a_X+b_X}\right)$.
		\item[(A4)] If $(0,0)$ is the best NE and $(1,0)$ is the SO state, then we select $\left(\frac{b_Y}{a_Y+b_Y},0\right)$.
	\end{enumerate}
	
	It is clear that these choices of states makes improvements on the social welfare.  It is known that for the class of games we are considering, the price of stability is no greater than $2$.  In fact, in case A1 and A2, we reduce this factor to $4/3$.  Also in case A3 and A4, we reduce this factor to $\left(\frac{1}{2}+\frac{b_X/2}{a_X+b_X}\right)^{-1}$.
	
	The next step is to show the mechanism to drive the system to the desired state.  Due to symmetry, we only discuss case A1 and A3, where case A2 and case A4 can be done analogously.  For case A1,  the state corresponds to the temperature $T_x \rightarrow \infty$ and $T_y \rightarrow 0$.  For any small $\delta>0$, we can always find the state $(0.5+\delta,1-\delta)$ on the principal branch of some $T_y$.  This means that we can achieve this state from any initial state, not only from the NEs.  With the help of the first form representation of the QREs in \eqref{eq:Txy}, given any QRE achievable system state $(x,y)$, we are able to recover them to corresponding temperatures through $T_X^I$ and $T_Y^I$.  The mechanism can be described as follows: 
	\begin{enumerate}
		\item[(A1)] \begin{enumerate}
			\item From any initial state, raise $T_x$ to $T_X^I(0.5+\delta,1-\delta)$.
			\item Decrease $T_y$ to $T_Y^I(0.5+\delta,1-\delta)$
		\end{enumerate}
	\end{enumerate}
	
	For case A3, the state we selected is not on the principal branch.  This means that we cannot increase the temperatures too much; otherwise the system state will move to the principal branch and will never return.  We assume initially the system state is at $(\delta,\delta)$ for some small $\delta>0$, which is some state close to the best NE.  Also, we can assume the initial temperatures are $T_x=T_X^I(\delta,\delta)$ and $T_y=T_Y^I(\delta,\delta)$.  Our goal is to arrive at the state $\left(\delta_1, \frac{b_X}{a_X+b_X}-\delta_2\right)$ for some small $\delta_1>0$ and $\delta_2>0$ such that $\left(\delta_1, \frac{b_X}{a_X+b_X}-\delta_2\right)$ is stable.  We present the mechanism in the following:
	\begin{enumerate}
		\item[(A3)] \begin{enumerate}
			\item From initial state $(\delta,\delta)$, move $T_x$ to $T_X^I\left(\delta_1, \frac{b_X}{a_X+b_X}-\delta_2\right)$.
			\item Increase $T_y$ to $T_Y^I\left(\delta_1, \frac{b_X}{a_X+b_X}-\delta_2\right)$
		\end{enumerate}
	\end{enumerate}
	
	Here note that Step (b) should not be proceeded before Step~(a) because as we increase $T_y$ first, then we are taking the risks of getting off to the principal branch.
	
	Next, consider the case that $a_Y<b_Y$.  Similarly to the previous case, we know from Theorem~\ref{thm:coord_topo_ct} and Theorem~\ref{thm:coord_topo_stab} that states near the borders $x=0,0.5,1$ and $y=0,0.5,1$ are basically stable states.  Hence, we can claim the following results:
	\begin{enumerate}
		\item[(B1)] If $(1,1)$ is the best NE and $(0,1)$ is the SO state, then we select $\left(\frac{b_Y}{a_Y+b_Y},1\right)$.
		\item[(B2)] If $(1,1)$ is the best NE and $(1,0)$ is the SO state, then we select $(1,0.5)$.
		\item[(B3)] If $(0,0)$ is the best NE and $(0,1)$ is the SO state, then we select $\left(0,\frac{b_X}{a_X+b_X}\right)$.
		\item[(B4)] If $(0,0)$ is the best NE and $(1,0)$ is the SO state, then we select $\left(0.5,0\right)$.
	\end{enumerate}
	
	It is clear that these choices of states create improvement on the social welfare.  An interesting result for this case is that basically these desired states can be reached from any initial state.  Due to symmetry, we demonstrate the mechanisms for case (B3) and (B4), and the remaining ones can be done analogously.
	
	For case (B3), we are aiming for the state $\left(\delta_1,\frac{b_X}{a_X+b_X}-\delta_2\right)$ for some small $\delta_1>0$ and $\delta_2>0$. We propose the following mechanism:
	\begin{figure}[t]
		\centering
		\begin{minipage}[c]{0.4\linewidth}
			\centering
			\includegraphics[width=0.8\linewidth]{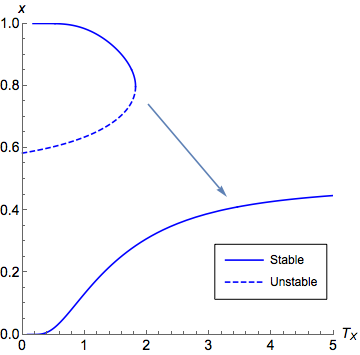}
			\caption{Phase 1 for case (B3), where we keep low $T_Y$ but increase $T_X$.}
			\label{fig:ex_rsw_p1}
		\end{minipage}
		\hspace{0.05\linewidth}
		\begin{minipage}[c]{0.4\linewidth}
			\centering
			\includegraphics[width=0.8\linewidth]{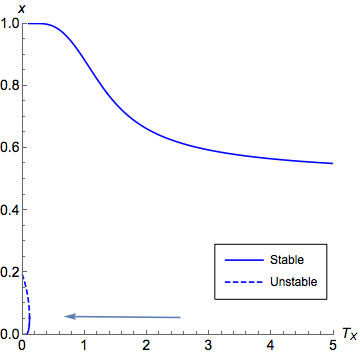}
			\caption{Phase 2 for case (B3), where we increase $T_Y$.}
			\label{fig:ex_rsw_p2}
		\end{minipage}
	\end{figure}
	
	\begin{enumerate}
		\item[(B3)] \begin{enumerate}
			\item[] \textbf{Phase 1:} Getting to the principal branch.
			\item From any initial state, fix $T_y$ at some value less than $T_I= \frac{b_Y-a_Y}{2\ln(a_X/b_X)}$.
			\item Increase $T_x$ above the critical temperature $T_C(T_y)$.
			\item Decrease $T_x$ to $T_x^I\left(\delta_1,\frac{b_X}{a_X+b_X}-\delta_2\right)$.
			\item[] \textbf{Phase 2:} Staying at the current branch.
			\item Increase $T_y$ to $T_Y^I\left(\delta_1,\frac{b_X}{a_X+b_X}-\delta_2\right)$.
		\end{enumerate}
	\end{enumerate}
	This process is illustrated in Figure~\ref{fig:ex_rsw_p1} and Figure~\ref{fig:ex_rsw_p2}.  In phase 1, as we are keeping low $T_y$, meaning the second player is of more rationality.  As the first player getting more rational, he is more likely to be influenced by the second player's preference, and eventually getting to a Nash equilibrium.  In phase 2, we make the second player more irrational to increase the social welfare.  The level of irrationality we add in phase 2 should be capped to prevent the first player to deviate his decision.
	
	For case (B4), since our desired state is on the principal branch, the mechanism will be similar to case (A1).
	\begin{enumerate}
		\item[(B4)] \begin{enumerate}
			\item From any initial state, raise $T_x$ to $T_X^I(0.5+\delta,\delta)$.  
			\item Decrease $T_y$ to $T_Y^I(0.5+\delta,\delta)$.
		\end{enumerate}
	\end{enumerate}
\end{proof}

	As a remark, in case (A3) and (A4), if we do not start from $(\delta,\delta)$ but from some other states on the principal branch, we can instead aim for state $(0.5,1)$.  This state is not better than the best Nash equilibrium, but still makes improvements over the initial state.  The process can be modified as:
		\begin{enumerate}
			\item[(A3')] \begin{enumerate}
				\item From any initial state, raise $T_x$ to $T_X^I(0.5+\delta,1-\delta)$ (above $T_C(T_y)$).
				\item Reduce $T_y$ to $T_Y^I(0.5+\delta,1-\delta)$.
			\end{enumerate}
		\end{enumerate}


\section{Applications}
\subsection{Taxation} \label{sec:taxation}
A direct application for the solution concept of QRE is to analyze the effect of taxation, which has been discussed in \cite{wolpert2012hysteresis}.  Unlike Nash equilibria, for QREs, if we multiply the payoff matrix by some factor $\alpha$, the equilibrium does change.  This is because by multiplying $\alpha$, effectively we are dividing the temperature parameters by $\alpha$.  This means that if we charge taxes to the players with some flat tax rate $\alpha-1$, the QREs will differ.  Formally, we define the base temperature $T_0$ as the temperature when no tax is applied for both players.  Then, we can define the \emph{tax rate} for each player as $\alpha_x = 1-T_0/T_x, \alpha_y=1-T_0/T_y$, respectively.

Now we demonstrate how the hysteresis mechanism can be applied via taxation with Example~\ref{ex:ex1}.  Assume the base temperature $T_0=1$, then with taxation, we can rewrite the process in Example~\ref{ex:ex1} in the following form:
\begin{enumerate}
	\item The initial state is $(0.05,0.14)$, where $\alpha_x \approx 0$ and $\alpha_y \approx 0.5$ (where $T_x \approx 1$ and $T_y \approx 2$).
	\item Fix $\alpha_y = 0.5$ (where $T_y=2$), and increase $\alpha_x$ to $0.8$, where $T_x=5$ and there is only one QRE correspondence.
	\item Fix $\alpha_y = 0.5$ (where $T_y=2$), and decrease $\alpha_x$ back to $0$ (where $T_x=1$). Now $x \approx 0.997$.
\end{enumerate} 

\subsection{Evolution of metabolic phenotypes in cancer} \label{sec:cancer}
Evolutionary Game Theory has been instrumental in studying evolutionary aspects of the somatic evolution that characterize's cancer progression. Tomlinson and Bodmer were the first to explore the role of cell-cell interactions in cancer. This pioneering work was followed by others that expanded on those initial ideas to study the role of key aspects of cancer evolution like the role of space \cite{kaznatcheev2015edge} treatment \cite{basanta2012investigating,kaznatcheev2016cancer} or metabolism \cite{basanta2008evolutionary,kianercy2014critical}. With regards to Kianercy's work, it shows how microenvironmental heterogeneity impacts somatic evolution, in this case by optimizing the genetic instability to better tune cell metabolism to the dynamic microenvironment.

Our techniques (the hysteresis mechanism and the optimal control mechanism) can be applied to the cancer game \cite{kianercy2014critical} with two types of tumor phenotypic strategies: hypoxic cells and oxygenated cells. These cells inhabit regions where oxygen could be abundant or lacking. In the former, oxygenated cells with regular metabolism thrive but in the latter, hypoxic cells whose metabolism is less reliant on the presence of oxygen (but more on the presence of glucose) have higher fitness. 

\vspace{-10pt}

\section{Connection to previous works}\label{sec.rel}



Recently, there has been a growing interplay between game theory, dynamical systems, and computer science.  Particular such examples include the integration of replicator dynamics and topological tools \cite{PiliourasAAMAS2014,papadimitriou2016nash,panageas2016average} in algorithmic game theory, and  Q-learning dynamics \cite{watkins1992q} in multi-agent systems \cite{tan1993multi}.  Q-learning dynamics has been studied extensively in game settings e.g. by Sato~\etal in \cite{Sato:2003aa} and Tuyls~\etal in \cite{tuyls2003selection}.  In \cite{coucheney2013entropy} Q-learning dynamics is considered as an extension of replicator dynamics driven by a combination of payoffs and  entropy. 
 Recent advances in our understanding of evolutionary dynamics in multi-agent learning can be found in the survey \cite{bloembergen2015evolutionary}.  

We are particularly interested in the connection between the Q-learning dynamics and the concept of QRE \cite{McKelvey:1995aa} in game theory.  In \cite{Cominetti:2010aa} Cominetti~\etal study this connection in traffic congestion games.  The hysteresis effect of Q-learning dynamics was first identified in 2012 by Wolpert~\etal \cite{wolpert2012hysteresis}.  Kianercy~\etal in \cite{Kianercy:2012aa} observed the same phenomenon, and provided discussions on the bifurcation diagrams in $2 \times 2$ games.  The hysteresis effect has been also been highlighted in recent follow-up work by \cite{kianercy2014critical} as a design principle for future cancer treatments.  It was also studied in \cite{romero2015effect} in the context of minimum-effort coordination games.
 However, our current understanding is still mostly qualitative and in this work we have pushed towards a more practically applicable quantitative, algorithmic analysis.

Analyzing the characteristics of various dynamical systems has also been attracting the attention of the computer science community in recent years.  For example, besides the Q-learning dynamics, the (simpler) replicator dynamics has been studied extensively due to its connections \cite{paperics11,papadimitriou2016nash,Soda14} to the multiplicative weight update (MWU) algorithm in \cite{Kleinberg09multiplicativeupdates}. 

Finally, a lot of attention has also been devoted to biological systems and their connections to game theory and computation.  In recent work by Mehta~\etal \cite{mehta_et_al:LIPIcs:2016:6407}, the connection with genetic diversity was discussed in terms of the complexity of predicting whether genetic diversity persists in the long run  under evolutionary pressures.  
This paper builds upon a rapid sequence of related results \cite{PNAS1:Livnat16122008,ITCS:DBLP:dblp_conf/innovations/ChastainLPV13,PNAS2:Chastain16062014,livnat2014satisfiability,Meir15,ITCS15MPP}. The key result is \cite{ITCS:DBLP:dblp_conf/innovations/ChastainLPV13,PNAS2:Chastain16062014} where effectively it was made clear that there exists a strong connection between studying replicator dynamics in games and standard models of evolution. Follow-up works show how to analyze dynamics that incorporate errors (i.e. mutations) \cite{MPPPV} and how these mutations can have a critical effect to ensuring survival in the presence of dynamically changing environments. Our paper makes  progress along these lines by examining how noisy dynamics can introduce such as bifurcations.

We were inspired by recent work by Kianercy~\etal establishing a  connection between cancer dynamics and  cancer treatment and studying Q-learning dynamics in games. This is analogous to the connections \cite{CACM,ITCS:DBLP:dblp_conf/innovations/ChastainLPV13,PNAS2:Chastain16062014} between MWU and evolution detailed above. It is our hope that by starting off a quantitative analysis of these systems we can kickstart similarly rapid developments in our understanding of the related questions.

\section{Conclusion}

In this paper, 
we perform a quantitative analysis of  bifurcation phenomena connected to Q-learning dynamics  in the class of $2 \times 2$ games. Based on this analysis, we introduce two novel mechanisms, the hysteresis mechanism and the optimal control mechanism.  
Hysteresis mechanisms use transient changes to the system parameters to induce permanent
improvements to its performance via optimal (Nash) equilibrium selection.  Optimal control
mechanisms induce convergence to states whose performance is better than even the best
Nash equilibrium, showing that by controlling the exploration/exploitation tradeoff we can
achieve strictly better states than those achievable by perfectly rational agents.

We believe that these new classes of mechanisms could lead to interesting and new questions  within game theory as well as a more thorough understanding of cancer biology.

%
%

\bibliographystyle{plainnat}
\bibliography{ref,sigproc2,sigproc4,p198mehta}

\newpage
\section{Supplementary materials}
\appendix
\section{From Q-learning to Q-learning Dynamics} \label{sec:from_q_to_q}
In this section, we provide a quick sketch on how we can get to the Q-learning dynamics from Q-learning agents.
We start with an introduction to the Q-learning rule.  Then, we discuss the multi-agent model when there are multiple learners in the system.  The goal for this section is to identify the dynamics of the system in which there are two learning agents playing a $2 \times 2$ game repeatedly over time.  

\subsection{Q-learning Introduction}
Q-learning \cite{watkins1992q,watkins1989learning} is a value-iteration method for solving the optimal strategies in Markov decision processes.  It can be used as a model where users learn about their optimal strategy when facing uncertainties.  Consider a system that consists of a finite number of states and there is one player who has a finite number of actions.  The player is going to decide his strategy over an infinite time horizon.  In Q-learning, at each time $t$, the player stores a value estimate $Q_{(s,a)}(t)$ for the payoff of each state-action pair $(s,a)$.  Then, he chooses his action $a_{t+1}$ that maximizes the $Q$-value $Q_{(s_t,\cdot)}(t)$ for time $t+1$, given the system state is $s_t$ at time $t$.  In the next time step, if the agent plays action $a_{t+1}$, he will receive a reward $r(t+1)$, and the value estimate is updated according to the rule:
$$
Q_{(s_t,a_{t+1})}(t+1) = (1-\alpha) Q_{(s_t,a_{t+1})}(t) + \alpha (r(t+1) + \gamma \max_{a'} Q_{(s_{t+1},a')}(t))
$$
where $\alpha$ is the step size, and $\gamma$ is the discount factor.

\subsection{Joint-learning Model}
Next, we consider the joint learning model as in \cite{Kianercy:2012aa}.  Suppose there are multiple players in the system that are learning concurrently.  Denote the set of players as $P$.  We assume the system state is a function of the action each player is playing, and the reward observed by each player is a function of the system state.  Their learning behaviors are modeled as simplified models based on the Q-learning algorithm described above.  More precisely, we consider the case that each player assumes the system is only of one state, which corresponds to the case that the player has very limited memory, and has discount factor $\gamma=0$. The reward observed by player $i\in P$ given he plays action $a$ at time $t$ is denoted as $r_a^i(t)$.  We can write the updating rule of the $Q$-value for agent $i$ as follows:
$$
Q_{a}^i(t+1) = Q_a^i(t) + \alpha [r_a^i(t) - Q_a^i(t)]
$$

For the selection process, we consider the mechanism that each player $i \in P$ selects his action according to the Boltzmann distribution with temperature $T_i$:
\begin{equation} \label{eq:boltzmann_sel}
x_a^i(t) = \frac{e^{Q_{a}^i(t)/T_i}}{\sum_{a'} e^{Q_{a'}^i(t)/T_i}}
\end{equation}
where $x_a^i(t)$ is the probability that agent $i$ chooses action $a$ at time $t$.  The intuition behind this mechanism is that we are modeling the irrationality of the users by the temperature parameter $T_i$.  For small $T_i$, the selection rule corresponds to the case of more rational agents.  We can see that for $T_i \rightarrow 0$, \eqref{eq:boltzmann_sel} corresponds to the best-response rule, that is, each agent selects the action with the highest $Q$-value with probability one.  On the other hand, for $T_i \rightarrow \infty$, we can see that \eqref{eq:boltzmann_sel} corresponds to the selection rule of selecting each action uniformly at random, which models the case of fully-irrational agents.

\subsection{Continuous-time dynamics}
This underlying Q-learning model has been studied in the previous decades.  
It is known that if we take the time interval to be infinitely small, this sequential joint learning process can be approximated as a continuous-time model (\cite{tuyls2003selection,Sato:2003aa}) that has some interesting characteristics.  To see this, consider the $2 \times 2$ game as we have described in Section~\ref{sec:def_2_by_2_game}.  The expected payoff for the first player at time $t$ given he chooses action $a$ can be written as $r_a^x(t)=[\vv{A} \vv{y}(t)]_a$, and similarly, the expected payoff for the second player at time $t$ given he chooses action $a$ is $r_a^y(t)=[\vv{B} \vv{x}(t)]_a$.  The continuous-time limit for the evolution of the $Q$-value for each player can be written as 
\begin{align*}
\dot Q^x_a(t) &= \alpha [r_a^x(t) - Q_a^x(t)]\\
\dot Q^y_a(t) &= \alpha [r_a^y(t) - Q_a^y(t)]
\end{align*}
Then, we take the time derivative of \eqref{eq:boltzmann_sel} for each player to get the evolution of the strategy profile:
\begin{align*}
\dot x_i &= \frac{1}{T_x} x_i \bigg( \dot Q_i^x - \sum_{k} x_k \dot Q_k^x\bigg)\\
\dot y_i &= \frac{1}{T_y} y_i \bigg( \dot Q_i^y - \sum_{k} y_k \dot Q_k^y\bigg)
\end{align*}
Putting these together, and rescaling the time horizon to $\alpha t/T_x$ and $\alpha t/T_y$ respectively, we obtain the continuous-time dynamics:
\begin{align} 
\dot x_i &= x_i \bigg[ (\vv{A} \vv{y})_i - \vv{x}^T \vv{A} \vv{y} + T_x \sum_{j} x_j \ln(x_j/x_i) \bigg] \label{eq:conti_dynamics1_r}\\
\dot y_i &= y_i \bigg[ (\vv{B} \vv{x})_i - \vv{y}^T \vv{B} \vv{x} + T_y \sum_{j} y_j \ln(y_j/y_i) \bigg] \label{eq:conti_dynamics2_r}
\end{align}

\subsection{The exploration term increases entropy}

Now, we show that the exploration term in the Q-learning dynamics results in the increase of the entropy:

\begin{lemma}\label{lem:incr_entropy}
	Suppose $A = \vv{0}$ and $B = \vv{0}$. The system entropy 
	$$ H(\vv{x},\vv{y}) = H(\vv{x}) + H(\vv{y}) = -\sum_{i} x_i \ln x_i - \sum_{i} y_i \ln y_i $$
	for the dynamics \eqref{eq:conti_dynamics} increases with time, i.e.
	$$ \dot H(\vv{x}, \vv{y}) > 0 $$
	if $\vv{x}$ and $\vv{y}$ are not uniformly distributed.
\end{lemma}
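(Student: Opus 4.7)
The plan is to substitute $\vv{A} = \vv{0}$ and $\vv{B} = \vv{0}$ into the Q-learning dynamics so that only the entropy/exploration terms survive, giving
$$\dot x_i = T_x\, x_i \sum_j x_j \ln(x_j/x_i),\qquad \dot y_i = T_y\, y_i \sum_j y_j \ln(y_j/y_i),$$
and then decompose $\dot H(\vv{x},\vv{y}) = \dot H(\vv{x}) + \dot H(\vv{y})$ and handle each summand separately. By symmetry it suffices to show $\dot H(\vv{x}) \ge 0$, with equality iff $\vv{x}$ is uniform.

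First I would compute $\dot H(\vv{x}) = -\sum_i \dot x_i(\ln x_i + 1) = -\sum_i \dot x_i \ln x_i$, where the $+1$ term drops because $\sum_i \dot x_i = 0$ (the simplex is invariant). Plugging in the formula for $\dot x_i$ and writing $u_i := \ln x_i$, I get
$$\dot H(\vv{x}) = -T_x \sum_{i,j} x_i x_j (u_j - u_i)\, u_i = T_x\left[\sum_{i,j} x_i x_j u_i^2 - \sum_{i,j} x_i x_j u_i u_j\right].$$
The first double sum telescopes to $\sum_i x_i u_i^2$ (using $\sum_j x_j = 1$), and the second factors as $\bigl(\sum_i x_i u_i\bigr)^2$. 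Hence
$$\dot H(\vv{x}) = T_x\Bigl[\,\mathbb{E}_{\vv{x}}[u^2] - \bigl(\mathbb{E}_{\vv{x}}[u]\bigr)^2\,\Bigr] = T_x \cdot \mathrm{Var}_{\vv{x}}(\ln x),$$
where the expectation is with respect to the law that draws coordinate $i$ with probability $x_i$.

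The variance of a real random variable is non-negative, and vanishes iff the variable is almost surely constant; since all $x_i > 0$ on the interior of the simplex, $\ln x_i$ being constant is equivalent to $\vv{x}$ being uniform. An identical computation gives $\dot H(\vv{y}) = T_y \cdot \mathrm{Var}_{\vv{y}}(\ln y)$. Adding, we conclude $\dot H(\vv{x},\vv{y}) = T_x \mathrm{Var}_{\vv{x}}(\ln x) + T_y \mathrm{Var}_{\vv{y}}(\ln y) \ge 0$, with strict inequality whenever at least one of $\vv{x}, \vv{y}$ is not uniform, as required. There is no real obstacle here beyond bookkeeping; the only subtlety is remembering to exploit $\sum_i \dot x_i = 0$ to kill the spurious $+1$ term, after which the cancellation into a variance is immediate.
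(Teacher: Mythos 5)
Your proposal is correct and follows essentially the same route as the paper: both reduce to the pure-exploration dynamics, compute $\dot H(\vv{x}) = T_x\bigl[\sum_i x_i(\ln x_i)^2 - (\sum_i x_i\ln x_i)^2\bigr]$, and observe this is nonnegative with equality iff $\vv{x}$ is uniform (the paper cites Jensen's inequality where you phrase the same quantity as a variance). No gap; your version is, if anything, slightly more explicit about why the $+1$ term vanishes and about the equality case.
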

\begin{proof}[Proof of Lemma~\ref{lem:incr_entropy}]
	It is equivalent that we consider the single agent dynamics:
	$$
	\dot x_i = x_i T_x \bigg[ -\ln x_i + \sum_{j} x_j \ln x_j \bigg]
	$$
	Taking the derivative of the entropy $H(\vv{x})$, we have
	\begin{align*}
		\dot H(\vv{x}) &= \sum_{i} (-\ln x_i -1) \dot x_i 
		= -T_x \bigg[ -\sum_{i} x_i (\ln x_i)^2 + \bigg(\sum_{j} x_i \ln x_i\bigg)^2 \bigg]
	\end{align*}
	and since we have $\sum_{i} x_i =1$, by Jensen's inequality, we can find that
	$$
	\bigg(\sum_{j} x_i \ln x_i\bigg)^2 \le \sum_{i} x_i (\ln x_i)^2
	$$
	where equality holds if and only if $\vv{x}$ is a uniform distribution.  Consequently, if we have $x_i \in (0,1)$, and $\vv{x}$ is not a uniform distribution, $\dot H(\vv{x})$ is strictly positive, which means that the system entropy increases with time.
\end{proof}

\section{Convergence of dissipative learning dynamics in $2\times 2$ games}
\label{appendix:convergence}

\subsection*{Liouville's formula}

Liouville's formula can be applied to any system of autonomous differential equations with a continuously differentiable
vector field $V$ on an open domain of $\mathcal{S} \subset \R^k$.  
 The divergence of $V$ at $x \in \mathcal{S}$ is defined
as the trace of the corresponding Jacobian at $x$, \textit{i.e.},  $\text{div}[V(x)]\equiv\sum_{i=1}^k \frac{\partial V_i}{\partial x_i}(x)=tr(DV(x))$. 
Since divergence is a continuous function we can compute its integral over measurable sets $A\subset \mathcal{S}$ (with respect to Lebesgue measure $\mu$ on $\R^n$). Given 
any such set $A$, let $\phi_t(A)= \{\phi(x_0,t): x_0 \in A\}$ be the image of $A$ under map $\Phi$ at time $t$. $\phi_t(A)$ is measurable
and its measure is $\mu(\phi_t(A)))= \int_{\phi_t(A)}dx$. Liouville's formula states that the time derivative of the volume $\phi_t(A)$ exists and
is equal to the integral of the divergence over $\phi_t(A)$: $\frac{d}{dt} [A(t)] = \int_{\phi_t(A)} \text{div} [V(x)]dx.$ Equivalently:

\begin{theorem}[\cite{Sandholm10}, page 356] 
$\frac{d}{dt}\mu(\phi_t(A))= \int_{\phi_t(A)} tr(DV(x))d\mu(x)$
\end{theorem}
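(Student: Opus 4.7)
The plan is to combine the change of variables formula with Jacobi's formula for differentiating a determinant, then interchange derivative and integral. Concretely, I would proceed in four steps.

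First, I would pull the volume integral back to the initial set $A$. Because $V$ is $C^1$ on $\mathcal{S}$, the standard existence/uniqueness theorem yields a local flow $\phi_t$ that is a $C^1$ diffeomorphism from $A$ onto $\phi_t(A)$. The change of variables formula then gives
$$\mu(\phi_t(A)) = \int_A J(x_0,t)\, d\mu(x_0),$$
where $J(x_0,t) = |\det D_{x_0}\phi_t(x_0)|$ is the Jacobian determinant of the flow. Since $\phi_0$ is the identity, $J(x_0,0)=1>0$, and by continuity in $t$ we may drop the absolute value on the (open) time interval of interest.

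Second, I would differentiate under the integral sign, $\frac{d}{dt}\mu(\phi_t(A)) = \int_A \partial_t J(x_0,t)\, d\mu(x_0)$. This is the place where care is required: one needs $\partial_t J$ to be uniformly bounded in $x_0$ over each compact time interval, which I would justify using that $DV$ is continuous and hence bounded on any compact subset of $\mathcal{S}$ that contains the trajectories $\{\phi_s(x_0): 0\le s\le t\}$.

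Third, I would compute $\partial_t J$ using the variational equation. Writing $M(t) = D_{x_0}\phi_t(x_0)$, differentiation of $\dot\phi_t = V(\phi_t)$ in $x_0$ yields $\dot M(t) = DV(\phi_t(x_0))\, M(t)$, with $M(0)=I$. Jacobi's formula then gives
$$\partial_t \det M(t) = \det M(t)\cdot \mathrm{tr}\bigl(M(t)^{-1}\dot M(t)\bigr) = \det M(t)\cdot \mathrm{tr}\bigl(DV(\phi_t(x_0))\bigr),$$
so $\partial_t J(x_0,t) = J(x_0,t)\cdot \mathrm{tr}(DV(\phi_t(x_0)))$.

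Finally, I would substitute this back and change variables in the opposite direction, $x = \phi_t(x_0)$, $d\mu(x) = J(x_0,t)\, d\mu(x_0)$, yielding
$$\frac{d}{dt}\mu(\phi_t(A)) = \int_A \mathrm{tr}(DV(\phi_t(x_0)))\, J(x_0,t)\, d\mu(x_0) = \int_{\phi_t(A)} \mathrm{tr}(DV(x))\, d\mu(x),$$
as desired. The main obstacle is the technical justification of differentiation under the integral and the use of Jacobi's formula on the Jacobian cocycle; everything else is a mechanical change of variables. In the applications to the Q-learning vector field on the product of simplices, the state space is compact and $V$ is smooth, so these regularity conditions are automatically satisfied.
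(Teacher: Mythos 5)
Your proof is correct and is the standard argument for Liouville's formula; the paper does not prove this statement itself but cites it from Sandholm's text, where the proof proceeds by essentially the same route (pull back via the change of variables formula, apply the variational equation $\dot M = DV(\phi_t(x_0))M$ together with Jacobi's formula to get $\partial_t J = J\cdot\mathrm{tr}(DV\circ\phi_t)$, and change variables back). The only caveat worth recording is that differentiation under the integral and finiteness of $\mu(\phi_t(A))$ require $A$ to have compact closure in $\mathcal{S}$ (or finite measure with trajectories staying in a compact set), which, as you note, holds automatically in the paper's application on the product of simplices.
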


A vector field is called divergence free if its divergence is zero everywhere. Liouville's formula trivially implies that volume is preserved
in such flows.

This theorem extends in a straightforward manner to systems where the vector field $V:X\rightarrow TX$ is defined on an affine set $X\subset \R^n$ with tangent space $TX$. In this case, $\mu$ represents the Lebesgue measure on the (affine hull) of $X$. Note that the derivative of $V$ at a state $x \in X$ must be represented using the derivate matrix $DV(x) \in \R^{n\times n}$, which by definitions has rows in $TX$. If $\hat{V}:\R^n\rightarrow R^n$ is a $C^1$ extension of $V$ then $DV(x)=D\hat{V}(x)P_{TX}$, where $P_{TX}\in \R^{n\times n}$ is the orthogonal projection\footnote{To find the matrix of the orthogonal projection onto $TX$ (or any subspace $Y$ of $\R^n$) it suffices to find a basis ($\vec{v_1}, \vec{v_2}, \dots, \vec{v_m}$). Let $B$ be the matrix with columns $\vec{v_i}$, then $P = B(B^T B)^{-1}B^T$.} of $\R^n$ onto the subspace $TX$.

\subsection*{Poincar\'{e}-Bendixson theorem}

The Poincar\'{e}-Bendixson theorem is a powerful theorem that implies that two-dimensional systems cannot effectively exhibit chaos.
Effectively, the limit behavior is either going to be an equilibrium, a periodic orbit, or a closed loop, punctuated by one (or more) fixed points.
Formally, we have:

\begin{theorem}
[\cite{bendixson1901courbes,teschl2012ordinary}]
Given a differentiable real dynamical system defined on an open subset of the plane, then every non-empty compact $\omega$-limit set of an orbit, which contains only finitely many fixed points, is either
a fixed point, a periodic orbit, or a connected set composed of a finite number of fixed points together with homoclinic and heteroclinic orbits connecting these.
\end{theorem}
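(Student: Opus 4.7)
The plan is to establish the theorem through the classical transversal-section argument, which extracts every piece of mileage possible from the two-dimensional topology via the Jordan curve theorem. First I would record basic facts about the $\omega$-limit set $\omega(x)$ of a forward-bounded orbit $\{\phi_t(x)\}$: it is non-empty, compact, closed, flow-invariant, and connected, the last point following from writing $\omega(x)$ as the intersection of the nested family of closures $\overline{\{\phi_s(x):s\ge t\}}$, each of which is compact and connected.

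The central device is a \emph{transversal}, a smooth arc $\Sigma$ along which the vector field is nowhere tangent. The key lemma I would prove is the \textbf{monotonicity of intersections}: if the orbit through $x$ meets $\Sigma$ at successive times $t_1<t_2<\cdots$ with intersection points $p_1,p_2,\ldots$, then the $p_i$ are monotonically ordered along $\Sigma$. The proof uses the Jordan curve theorem applied to the closed curve formed by the orbit segment $\phi_{[t_i,t_{i+1}]}(x)$ together with the sub-arc of $\Sigma$ joining $p_i$ to $p_{i+1}$. Because the flow crosses $\Sigma$ in a single direction, the orbit after time $t_{i+1}$ is trapped on one fixed side of this Jordan curve, forcing all later intersections with $\Sigma$ to lie beyond $p_{i+1}$ along $\Sigma$. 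This step relies crucially on uniqueness of ODE solutions so that the trajectory cannot cross itself.

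From monotonicity I would extract the key corollary that for any regular (non-fixed) $p\in\omega(x)$ and any small transversal $\Sigma$ through $p$, the set $\omega(x)\cap\Sigma=\{p\}$; otherwise a sequence $p_{n}\to p$ in $\omega(x)\cap\Sigma$ would violate the monotone ordering. From this I then deduce two consequences: (i) if $\omega(x)$ contains a regular point $q$ whose own forward orbit returns to a transversal through $q$, that orbit must be periodic and hence, by connectedness and invariance, equal to all of $\omega(x)$; (ii) if $\omega(x)$ contains no fixed point at all, then picking any $q\in\omega(x)$ and a transversal through some point of $\omega(q)\subseteq\omega(x)$ forces $q$'s orbit to be periodic, yielding the periodic-orbit case.

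It remains to treat the mixed case where $\omega(x)$ contains both fixed points and regular points; here the finiteness assumption becomes essential. For each regular $q\in\omega(x)$, its full orbit lies in $\omega(x)$, and its $\alpha$- and $\omega$-limit sets are compact, connected, invariant subsets of $\omega(x)$. By the previous cases they either constitute a periodic orbit (already handled) or, being connected subsets of the finite fixed-point set, they must reduce to single fixed points. Hence every regular orbit inside $\omega(x)$ is a homoclinic or heteroclinic connection between two of the finitely many fixed points, as claimed. The main obstacle throughout is executing the Jordan-curve step cleanly: one must carefully choose the orientation of $\Sigma$, verify that the flow truly escapes to a single side of the constructed closed curve, and control the transversal's size so that successive crossings are well-defined; after monotonicity is secured the remaining case analysis, while delicate, is essentially mechanical.
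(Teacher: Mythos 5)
The paper does not prove this statement at all: it is the classical Poincar\'{e}--Bendixson theorem, quoted verbatim with citations to Bendixson and to Teschl's textbook, and used as a black box in the convergence argument of Appendix B. Your sketch is therefore not competing with an in-paper proof but reconstructing the standard textbook argument, and it does so correctly: the properties of $\omega$-limit sets, the monotonicity of successive crossings of a transversal via the Jordan curve theorem, the corollary that $\omega(x)\cap\Sigma$ is a single point, and the three-way case split are exactly the classical route. Two places deserve a little more care than your sketch gives them. First, to conclude in case (ii) that a periodic orbit contained in $\omega(x)$ is \emph{all} of $\omega(x)$, connectedness and invariance alone do not suffice; you need a flow-box argument showing the periodic orbit is relatively open in $\omega(x)$ (as well as closed), so that connectedness forces equality. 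Second, in the mixed case the dichotomy ``periodic orbit or subset of the fixed-point set'' for $\alpha(q)$ and $\omega(q)$ should be argued as: if $\omega(q)$ contained a regular point, the single-intersection corollary would force $q$'s orbit to be periodic, whence $\omega(x)$ would equal that periodic orbit and could contain no fixed point --- a contradiction; only then does connectedness plus finiteness of the fixed-point set collapse $\omega(q)$ to a single equilibrium. With those two steps filled in, your argument is a complete and correct proof of the cited theorem.
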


\subsection*{Bendixson-Dulac theorem}

By excluding the possibility of closed loops (i.e., periodic orbits, homoclinic cycles, heteronclinic cycles) we can effectively establish global convergence to equilibrium.
The following criterion, which was first established by Bendixson in 1901 and further refined by French mathematician Dulac in 1933, allows us to do that.
It is typically referred to as the Bendixson-Dulac negative criterion. It focus exactly on planar system where the measure of initial conditions always shrinks (or always increases) with time, i.e., dynamical systems with vector fields whose divergence is always negative (or always positive).

\begin{theorem}[\cite{muller2015methods}, page 210]
Let $D\subset \R^2$ be a simply connected region and $(f,g)$ in $C^1(D, \R)$ with $div(f,g)=\frac{\partial f}{\partial x}+\frac{\partial g}{\partial y}$ being not identically zero and without change of sign in $D$. Then the system 

$$\frac{ dx }{ dt } = f(x,y),$$

$$\frac{ dy }{ dt } = g(x,y)$$

\noindent
has no loops lying entirely in $D$.
\end{theorem}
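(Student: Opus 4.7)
My plan is to argue by contradiction using Green's theorem in the plane. Suppose, toward a contradiction, that a loop $\gamma$ lies entirely in $D$ — meaning either a genuine periodic orbit or, more generally, a simple closed curve built from a finite union of trajectory arcs together with equilibrium points (so as to also cover the homoclinic/heteroclinic cycles mentioned in the Poincar\'e--Bendixson statement just above). Since $D$ is simply connected, the Jordan curve theorem gives a bounded, open region $R \subset D$ whose boundary is $\gamma$.

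The next step is to apply Green's theorem to the $C^1$ vector field $(f,g)$ on $R$, giving
$$\oint_{\gamma} \bigl(f\,dy - g\,dx\bigr) \;=\; \iint_{R} \left(\frac{\partial f}{\partial x} + \frac{\partial g}{\partial y}\right) dA.$$
By hypothesis, $\mathrm{div}(f,g)$ does not change sign on $D$ and is not identically zero; by continuity it is strictly signed on an open subset of $D$, and the Bendixson--Dulac hypothesis is understood in the standard way that this open set meets the interior $R$ of any candidate loop. Consequently the right-hand double integral is strictly positive or strictly negative.

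I then compute the left-hand line integral directly by parametrizing $\gamma$ via the flow's time parameter. Along any trajectory of the ODE one has $dx = f(x,y)\,dt$ and $dy = g(x,y)\,dt$, so the integrand collapses to $f \cdot g\,dt - g \cdot f\,dt = 0$. Hence $\oint_{\gamma}(f\,dy - g\,dx) = 0$, which contradicts the previous step.

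The main technical obstacle is handling the full notion of ``loop'' admitted by the theorem statement, i.e.\ allowing the boundary curve to pass through equilibria and concatenate heteroclinic/homoclinic arcs rather than being a smooth periodic orbit. The argument adapts smoothly: such a $\gamma$ is still piecewise $C^1$, Green's theorem remains applicable, each trajectory segment still contributes $0$ by the same $f\,dy - g\,dx = 0$ identity, and the isolated equilibrium corners contribute nothing to the line integral. A minor secondary point to verify is that the ``not identically zero and without change of sign'' hypothesis transfers from $D$ to the enclosed region $R$; this is the standard reading of the statement and is what makes the contradiction bite, ruling out every loop in $D$ at once.
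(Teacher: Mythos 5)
The paper does not actually prove this statement: it is quoted verbatim as a classical result from the cited reference (the Bendixson--Dulac negative criterion) and used as a black box in Appendix~\ref{appendix:convergence}, so there is no in-paper proof to compare against. Your Green's-theorem argument is the standard textbook proof and is essentially correct: the line integral $\oint_\gamma (f\,dy - g\,dx)$ vanishes because the tangent to any trajectory arc is parallel to $(f,g)$, while the area integral of the divergence over the enclosed region $R$ (which lies in $D$ by simple connectedness and the Jordan curve theorem) is forced to be nonzero, giving the contradiction. You also correctly flag the one genuine delicacy: as literally stated, the hypothesis only says the divergence is signed and not identically zero \emph{on $D$}, which does not by itself prevent the divergence from vanishing identically on the particular region $R$ bounded by a candidate loop (one can build a field that is Hamiltonian on a subdisk, hence admits periodic orbits there, yet has nonnegative divergence globally). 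Your resolution --- reading the hypothesis so that the open set where the divergence is strictly signed meets $R$ --- is the standard repair and is the right thing to say; a cleaner fix is to require the divergence to be nonzero off a measure-zero set. The extension to graphics (homoclinic/heteroclinic cycles) is also handled adequately, since each trajectory arc still contributes zero to the line integral and the finitely many equilibria contribute nothing, though one should note in passing that Green's theorem requires the concatenated boundary curve to be rectifiable, which holds for the finite unions of orbits considered here.
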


\noindent
The function $ \varphi(x, y)$ is typically called the Dulac function.

\noindent
\textbf{Remark:} This criterion can also be generalized. Specifically, it holds for the system:

$$\frac{ dx }{ dt } =\rho(x,y) f(x,y),$$

$$\frac{ dy }{ dt } =\rho(x,y) g(x,y)$$

\noindent
if $\rho(x,y)>0$ is continuously differentiable. Effectively, we are allowed to rescale the vector field by a scalar function (as long as this function does not have any zeros), before we prove that the divergence is positive (or negative). That is, it suffices to find $\rho(x,y)>0$ continuously differentiable, such that $(\rho(x,y) f(x,y))_x+ (\rho(x,y) g(x,y))_y$ possesses a fixed sign.


By \cite{Kianercy:2012aa} we have that the after a change of variables, $u_k=\frac{\ln(x_{k+1})}{\ln x_1}$, $v_k=\frac{\ln(y_{k+1})}{\ln y_1}$ for $k=1, \dots, n-1$,
the replicator system transforms to the following system:

$$\dot{u}_k=\frac{\sum_j \hat{a}_kje^{v_j}}{1+\sum_j e^{v_j}}- T_xu_k, \dot{v}_k=\frac{\sum_j \hat{a}_kje^{u_j}}{1+\sum_j e^{u_j}}- T_xv_k, (\text{II})$$
where
$\hat{a}_{kj} = a_{k+1,j+1} - a_{1,j+1}$, $\hat{b}_{kj} = b_{k+1,j+1} - a_{1,j+1}$.

In the case of $2\times2$ games, we can apply both the Poincar\'{e}-Bendixson theorem as well as the Bendixson-Dulac theorem, since the resulting dynamical system is planar and $\frac{\partial\dot{u}_1}{\partial u_1}+\frac{\partial \dot{v}_1}{\partial v_1}=-(T_x+T_y)<0$. Hence, for any initial condition system (II) converges to equilibria. 
The flow of original replicator system in the $2\times2$ game is \textit{diffeomorhpic}\footnote{ A function $f$ between two topological spaces is called a \textit{diffeomorphism} if it has the following properties:
$f$ is a bijection, $f$ is continuously  differentiable, and $f$ has a continuously differentiable inverse. Two flows $\Phi^t:A\rightarrow A$ and $\Psi^t :B\rightarrow B$ are \textit{diffeomorhpic} if there exists a diffeomorphism $g : A \rightarrow B$ such that for each $x \in A$ and $t \in \R$
$g(\Phi^t (x)) = \Psi^t (g(x))$. If two flows are diffeomorphic then their vector fields are related by the derivative of the conjugacy. 
 That is, we get precisely the  same result that we would have obtained if we simply transformed the coordinates in their differential equations~\cite{Meiss2007}.}  to the flow of system (II), thus replicator dynamics with positive temperatures $T_x,T_y$ converges to equilibria for all initial conditions as well.  
 
\section{Bifurcation Analysis for Games with Only One Nash Equilibrium}\label{sec:topo_nc}
In this section, we present the results for the class of games with only one Nash equilibrium, where it can be either a pure one or a mixed one, where the mixed Nash equilibrium is defined as
\begin{definition}[mixed Nash equilibrium]
	A strategy profile $(x_{NE},y_{NE})$ is a mixed Nash equilibrium if
	\begin{align*}
		&x_{NE} \in \arg\max_{x \in [0,1]} \vv{x}^T\vv{A}\vv{y}_{NE},
		&y_{NE} \in \arg\max_{y \in [0,1]} \vv{y}^T\vv{B}\vv{x}_{NE}
	\end{align*}
\end{definition}
This corresponds to the case that at least one of $b_X$, $a_Y$, or $b_Y$ being negative.  Similarly, our analysis is based on the second form representation described in \eqref{eq:Tx_} and \eqref{eq:y_}, which demonstrates insights from the first player's perspective.

\subsection{No dominating strategy for the first player}
More specifically, this is the case when there is no dominating strategy for the first player, i.e. both $a_X$ and $b_X$ are positive.  From \eqref{eq:y_} we can presume that the characteristics of the bifurcation diagrams depends on the value of $a_Y+b_Y$ since it affects whether $y^{II}$ is increasing with $x$ or not.  Also, we can find some interesting phenomenon from the discussion below.

First, we consider the case when $a_Y+b_Y>0$.  This can be considered as a more general case as we have discussed in Section~\ref{sec:coord_topo}.   In fact, the statements we have made in Theorem~\ref{thm:coord_topo_pb}, Theorem~\ref{thm:coord_topo_ct}, and Theorem~\ref{thm:coord_topo_stab} applies to this case.  However, there are some subtle difference we should be noticed.  If $a_Y>b_Y$, where we can assume $b_Y<0$, then by the second part of Theorem~\ref{thm:coord_topo_ct}, there are no QRE in $x \in (0,0.5)$, since $T_B$ now is a negative number.  This means that we always only have the principal branch.  On the other hand, if $a_Y<b_Y$, where we can assume $a_Y<0$, then similar to the example in Figure~\ref{fig:as_c1a_by_ex1} and Figure~\ref{fig:as_c1a_by_ex2}, there could still be two branches.  However, we can presume that the second branch vanishes \emph{before} $T_y$ actually goes to zero, as the state $(1,1)$ is not a Nash equilibrium.
\begin{theorem}\label{thm:nc1}
	Given a $2\times 2$ game in which the diagonal form has $a_X, b_X>0$, $a_Y+b_Y>0$, and $a_Y<b_Y$, and given $T_y$, if $T_y<T_A$, where $T_A=\frac{-a_Y}{\ln(a_Y/b_Y)}$, then there are no QRE correspondence in $x \in (0.5,1)$.
\end{theorem}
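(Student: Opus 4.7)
The plan is to show that under the given hypotheses $T_X^{II}(\cdot,T_y)$ from the second-form representation stays non-positive on $(0.5,1)$, so that the QRE correspondence — which lives on the positive support of $T_X^{II}$ — is empty on that interval. For any $x\in(0.5,1)$ the denominator $\ln(1/x-1)$ appearing in \eqref{eq:Tx_} is strictly negative, so the sign of $T_X^{II}(x,T_y)$ is opposite to that of the numerator $-(a_X+b_X)y^{II}(x,T_y)+b_X$. The theorem is therefore equivalent to the scalar bound $y^{II}(x,T_y)\le b_X/(a_X+b_X)$ holding uniformly in $x\in(0.5,1)$.

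Next I would exploit monotonicity in $x$. Because $a_Y+b_Y>0$, the exponent $\bigl(-(a_Y+b_Y)x+b_Y\bigr)/T_y$ in \eqref{eq:y_} is strictly decreasing in $x$, so $y^{II}(\cdot,T_y)$ is strictly increasing. Its supremum on $(0.5,1)$ is thus the one-sided limit
\[
\sup_{x\in(0.5,1)} y^{II}(x,T_y) \;=\; y^{II}(1,T_y) \;=\; \bigl(1+e^{-a_Y/T_y}\bigr)^{-1},
\]
and the uniform bound collapses to the single condition $y^{II}(1,T_y)\le b_X/(a_X+b_X)$. Rearranging gives $e^{-a_Y/T_y}\ge a_X/b_X$, and since the relevant subcase flagged in the text has $a_Y<0$ (while $a_X>b_X$ by the wlog rescaling of Section~\ref{sec:rescaling}), both sides are strictly positive and monotone inversion of the exponential yields an explicit upper bound on $T_y$ that plays the role of $T_A$. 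The strict inequality $T_y<T_A$ then propagates to strict negativity of $T_X^{II}$ throughout $(0.5,1)$, which finishes the proof.

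The main obstacle is keeping the sign conventions straight: two separate monotonicities — of $\ln(1/x-1)$ in $x$ on $(0.5,1)$ and of the exponent of $y^{II}$ in $x$ — must be combined correctly so that $x=1$ (rather than $x=0.5$) is identified as the binding endpoint. A secondary subtlety is to verify that reduction to the sup at $x=1$ loses no information: continuity and strict monotonicity of $y^{II}(\cdot,T_y)$ ensure the supremum equals the one-sided limit, so a strict bound at the limit forces a strict bound throughout the open interval. Beyond these bookkeeping points the argument is little more than monotone substitution and an exponential inversion, so no tangency analysis or appeal to the critical-temperature machinery of Theorem~\ref{thm:coord_topo_ct} is required.
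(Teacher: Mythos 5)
Your proof is correct and follows essentially the same route as the paper's (Proposition~\ref{prop:c1a_g_right2}, parts 1--2): both reduce the claim to the sign equivalence $T_X^{II}>0 \Leftrightarrow y^{II}>\frac{b_X}{a_X+b_X}$ on $(1/2,1)$ and then check that this fails for every such $x$ once $T_y<T_A$ --- the paper by solving the inequality for $x$ and noting the resulting threshold $x_2$ is at least $1$, you by monotonicity of $y^{II}$ in $x$ and evaluating the supremum at $x=1$. Note that your computation recovers $T_A=\frac{-a_Y}{\ln(a_X/b_X)}$, which matches the proposition; the $\ln(a_Y/b_Y)$ appearing in the theorem statement is evidently a typo.
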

The proof of the above theorem directly follows from Proposition~\ref{prop:c1a_g_right2} in the appendix.  An interesting observation here is that we can still make the first player get to his desired state by changing $T_y$ to some value that is greater than $T_A$.

\begin{figure}[t]
	\centering
	\begin{minipage}[c]{0.4\linewidth}
		\centering
		\includegraphics[width=\linewidth]{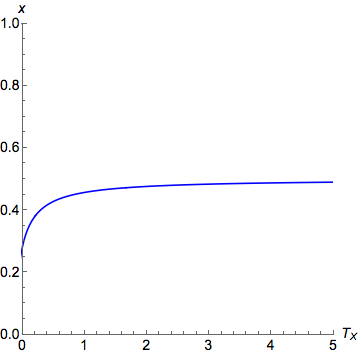}
		\caption{No dominating strategy for the first player, with $a_Y+b_Y<0$ and low $T_Y$.}
		\label{fig:as_c1b_low_TY}
	\end{minipage}
	\hspace{0.05\linewidth}
	\begin{minipage}[c]{0.4\linewidth}
		\centering
		\includegraphics[width=\linewidth]{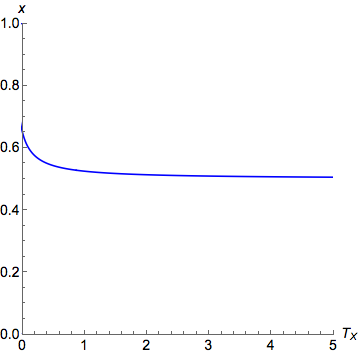}
		\caption{No dominating strategy for the first player, with $a_Y+b_Y<0$ and high $T_Y$.}
		\label{fig:as_c1b_high_TY}
	\end{minipage}
\end{figure}

Next, we consider $a_Y+b_Y\le 0$.  The bifurcation diagram is illustrated in Figure~\ref{fig:as_c1b_low_TY} and Figure~\ref{fig:as_c1b_high_TY}.  We can find that in this case the principal branch directly goes toward its unique Nash equilibrium.  We present the results formally in the following theorem, where the proof follows from Section~\ref{sec:case1b} in the appendix.
\begin{theorem}\label{thm:nc2}
	Given a $2\times 2$ game in which the diagonal form has $a_X, b_X>0$, $a_Y+b_Y \le 0$, QRE is unique given $T_x$ and $T_y$.
\end{theorem}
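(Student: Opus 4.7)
The plan is to recast the two QRE equations as a single fixed-point equation in one variable and exploit the sign of $a_Y+b_Y$ to force the resulting map to be strictly monotone in the decreasing direction, from which uniqueness follows immediately.

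First I would rewrite the QRE conditions as $x = g(y)$ and $y = h(x)$, where, setting $\sigma(z) := (1+e^{-z})^{-1}$,
\[
g(y) \;=\; \sigma\!\Big(\tfrac{(a_X+b_X)y - b_X}{T_x}\Big), \qquad h(x) \;=\; \sigma\!\Big(\tfrac{(a_Y+b_Y)x - b_Y}{T_y}\Big).
\]
Note that $h$ is exactly the function $y^{II}(\cdot, T_y)$ introduced in the second form representation. Every QRE of the game corresponds bijectively to a fixed point of the composition $F := g \circ h \colon [0,1] \to (0,1)$, so it suffices to show that $F$ has a unique fixed point in $[0,1]$.

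Next I would read off the monotonicity directly from the hypotheses. Since $\sigma$ is strictly increasing on $\R$, the map $g$ is strictly increasing in $y$ because $a_X+b_X>0$ (recall $a_X \ge b_X > 0$ by the standing assumption of Section~\ref{sec:rescaling}). By the hypothesis $a_Y+b_Y \le 0$, the map $h$ is non-increasing in $x$ (strictly decreasing when $a_Y+b_Y<0$, constant when $a_Y+b_Y=0$). Consequently $F$ is non-increasing, and the continuous function $\Phi(x) := F(x) - x$ is strictly decreasing on $[0,1]$, since $-x$ is strictly decreasing and $F$ does not increase.

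Finally, because $\sigma$ takes values in $(0,1)$, we have $F(0) > 0$ and $F(1) < 1$, whence $\Phi(0) > 0 > \Phi(1)$. Strict monotonicity of $\Phi$ combined with the intermediate value theorem yields a unique root $x^* \in (0,1)$, which determines a unique $y^* = h(x^*)$, giving the unique QRE. I do not anticipate a real obstacle here: once one notices that the orientation of the composed best-response is pinned down by the sign of $a_Y+b_Y$, the rest is immediate. The only minor care required is to confirm that $F$ is well-defined and continuous on the closed interval $[0,1]$ so that the intermediate value theorem applies, and to handle the degenerate case $a_Y+b_Y=0$, which is automatic since a constant map plus a strictly decreasing map is strictly decreasing.
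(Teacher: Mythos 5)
Your proof is correct, and it takes a genuinely different route from the paper's. The paper derives Theorem~\ref{thm:nc2} from the propositions of Section~\ref{sec:case1b} (and Case~1c), which work entirely in the second form representation: they show that on the region where $T_X^{II}(\cdot,T_y)$ is positive it is strictly monotone, splitting into the sub-cases $T_y<T_I$, $T_y>T_I$, and $a_Y+b_Y=0$ versus $a_Y+b_Y<0$, so that each value of $T_x$ is attained exactly once. Your argument instead composes the two logistic best responses into a single map $F=g\circ h$ on $[0,1]$ and observes that the sign hypotheses force $F$ to be non-increasing, hence $F(x)-x$ strictly decreasing, giving existence and uniqueness of the fixed point in one stroke; the reduction from QRE pairs to fixed points of $F$ via $y^*=h(x^*)$ is valid, and the degenerate case $a_Y+b_Y=0$ requires no separate treatment. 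This is shorter, uniform across all sub-cases, and avoids the singularity of $T_X^{II}$ at $x=1/2$ entirely. What the paper's heavier machinery buys in exchange is structural information your argument does not produce: the location of the principal branch (whether it sits in $(0,1/2)$ or $(1/2,1)$ depending on $T_y$ relative to $T_I$), its monotonicity, and hence the stability of the unique QRE via Lemma~\ref{lem:qre_stab} --- all of which the mechanism-design theorems later rely on. One cosmetic slip: the standing normalization $a_X>0$, $|a_X|\ge|b_X|$ is introduced in the ``Characterizing QREs'' subsection of Section~\ref{sec:bif}, not in Section~\ref{sec:rescaling}; but since the theorem's own hypothesis already gives $a_X,b_X>0$ and hence $a_X+b_X>0$, nothing in your argument depends on this.
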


\subsection{Dominating strategy for the first player}
\begin{figure}[t]
	\centering
	\begin{minipage}[c]{0.4\linewidth}
		\centering
		\includegraphics[width=\linewidth]{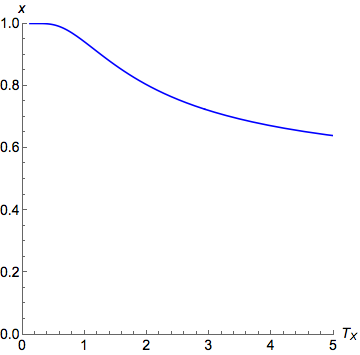}
		\caption{When there is a dominating strategy for the first player, with $a_Y+b_Y<0$.}
		\label{fig:as_c2_ex1}
	\end{minipage}
	\hspace{0.05\linewidth}
	\begin{minipage}[c]{0.4\linewidth}
		\centering
		\includegraphics[width=\linewidth]{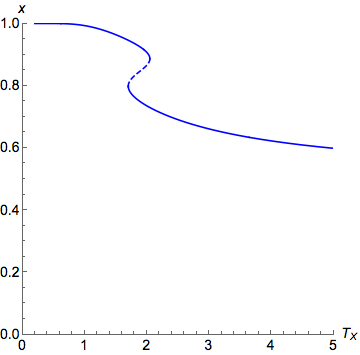}
		\caption{When there is a dominating strategy for the first player, with $a_Y+b_Y>0$ and $a_Y<b_Y$.}
		\label{fig:as_c2_ex2}
	\end{minipage}
\end{figure}

Finally, we consider the case when there is a dominating strategy for the first player, i.e. $b_X<0$.  According to Figure~\ref{fig:as_c2_ex1} and Figure~\ref{fig:as_c2_ex2}, the principal branch seems always goes towards $x=1$.  This means that the first player always prefers his dominating strategy.  We formalize this observation, as well as some important characteristics for this case in the theorem below, where the proof can be found in Section~\ref{sec:case2} in the appendix.
\begin{theorem}\label{thm:nc3}
	Given a $2\times 2$ game in which the diagonal form has $a_X>0$, $b_X<0$, $a_X+b_X>0$, and given $T_y$, the following statements are true:
	\begin{enumerate}
		\item The region $(0,0.5)$ contains the principal branch.
		\item There are no QRE correspondence for $x \in (0.5,1)$.
		\item If $a_Y+b_Y<0$ or $a_Y>b_Y$, then the principal branch is continuous.
		\item If $a_Y+b_Y>0$ and $b_Y>a_Y$, then the principal branch may not be continuous.
	\end{enumerate}
\end{theorem}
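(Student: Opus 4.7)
The plan is to adapt the second-form toolkit developed in Section~\ref{sec:bif} to the present sign regime. Write $T_X^{II}(x,T_y)=N(x)/D(x)$ where $N(x)=b_X-(a_X+b_X)\,y^{II}(x,T_y)$ and $D(x)=\ln(1/x-1)$. The hypotheses $b_X<0$, $a_X+b_X>0$, together with $y^{II}(\cdot,T_y)\in(0,1)$, force $N(x)\le b_X<0$ uniformly on $(0,1)$: the numerator never changes sign. This uniform sign of $N$ is the crux of items 1 and 2.

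For items 1 and 2, I match this fixed sign of $N$ against the sign chart of $D$ (positive on $(0,1/2)$, zero at $1/2$, negative on $(1/2,1)$). The quotient $T_X^{II}=N/D$ is positive on exactly one of the two subintervals, which is the only candidate region for the principal branch; on the complementary subinterval $T_X^{II}\le 0$ rules out QRE correspondences, since the equation $T_X^{II}(x,T_y)=T_x$ then has no solution for any $T_x>0$. To confirm that the identified subinterval actually contains the principal branch in the sense of the definition, I verify the three conditions: smoothness of $T_X^{II}$ is immediate; positivity has been shown; surjectivity onto $(0,\infty)$ follows from the boundary limits $T_X^{II}\to+\infty$ near $x=1/2$ (where $D\to 0$ while $N$ stays bounded away from $0$) and $T_X^{II}\to 0$ at the other endpoint (where $|D|\to\infty$ while $N$ is bounded), combined with continuity and the intermediate value theorem.

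For items 3 and 4, continuity of the principal branch reduces by Lemma~\ref{lem:qre_stab} to whether $\partial_x T_X^{II}$ keeps a single sign on the principal-branch interval. Using $\partial_x y^{II}=\tfrac{a_Y+b_Y}{T_y}\,y^{II}(1-y^{II})$ and $D'(x)=-1/(x(1-x))<0$, I get $N'(x)=-\tfrac{(a_X+b_X)(a_Y+b_Y)}{T_y}\,y^{II}(1-y^{II})$, so $\mathrm{sign}(N')=-\mathrm{sign}(a_Y+b_Y)$. Substituting into $\partial_x T_X^{II}=(N'D-ND')/D^2$: when $a_Y+b_Y<0$, the products $N'D$ and $ND'$ carry opposite signs and $\partial_x T_X^{II}$ is of one definite sign throughout, giving monotonicity and hence continuity. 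When $a_Y+b_Y>0$ and $a_Y>b_Y$, $N'D$ and $ND'$ carry matching signs; continuity is recovered by showing that an auxiliary equation, structurally identical to the one in Theorem~\ref{thm:coord_topo_ct} part 5, admits no root in the principal-branch interval, leveraging $a_Y>b_Y$ to place $y^{II}(1/2,T_y)$ on the appropriate side of $1/2$. Finally, when $a_Y+b_Y>0$ and $b_Y>a_Y$, the same auxiliary equation can admit an interior root, producing a saddle-node and thus a (possibly) discontinuous principal branch; a concrete parameter choice realizes this scenario.

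The main obstacle is the fine analysis inside the regime $a_Y+b_Y>0$, where both contributing terms to $\partial_x T_X^{II}$ carry matching signs and cancellation is not automatic. I expect the cleanest route is to reduce $\partial_x T_X^{II}=0$ to the same implicit form that appears in Theorem~\ref{thm:coord_topo_ct} part 5, and then to use the sign of $a_Y-b_Y$ to control whether a solution can lie inside the principal-branch interval; absent any root, monotonicity and continuity persist, while the presence of a root signals the fold that item 4 permits.
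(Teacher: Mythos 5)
Your overall strategy is exactly the paper's (Section~\ref{sec:case2}): fix the sign of the numerator $N(x)=-(a_X+b_X)y^{II}+b_X$ using $b_X<0$, $a_X+b_X>0$, $y^{II}\in(0,1)$, read off positivity of $T_X^{II}$ from the sign of $\ln(1/x-1)$, get the boundary limits for surjectivity, and then settle items~3 and~4 by the sign of $\partial_x T_X^{II}$, which via Lemma~\ref{lem:techs} reduces to comparing $L(x,T_y)$ against $b_X/(a_X+b_X)$ --- the same auxiliary equation as in Theorem~\ref{thm:coord_topo_ct}. The derivative bookkeeping for $a_Y+b_Y<0$ and the $G$-function fold analysis for $a_Y+b_Y>0$, $b_Y>a_Y$ also mirror the paper's propositions. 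One simplification you could exploit: since $b_X<0$ and $a_X+b_X>0$, the threshold $b_X/(a_X+b_X)$ is \emph{negative}, so for item~3 it suffices to show $L>0$ on the branch, which is immediate when $a_Y+b_Y<0$ and follows from $L(1/2)=y^{II}(1/2)>1/2$ plus monotonicity of $L$ when $a_Y>b_Y$.

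The one substantive problem is that you never commit to \emph{which} subinterval carries the principal branch, and the sign chart forces an answer: $N(x)<0$ everywhere, and $D(x)=\ln(1/x-1)$ is negative precisely on $(1/2,1)$, so $T_X^{II}=N/D>0$ on $(1/2,1)$ and $T_X^{II}<0$ on $(0,1/2)$. Hence the principal branch lies in $(0.5,1)$ and the QRE-free region is $(0,0.5)$ --- the \emph{transpose} of items~1 and~2 as printed. This is also what the intuition demands ($b_X<0$ makes the first action dominant for player~1, so the branch heads to $x=1$), and it is what the paper's own appendix derivation concludes; the theorem statement appears to have the two intervals swapped. A finished proof must state the interval explicitly and flag this discrepancy rather than leave the conclusion as ``one of the two subintervals.''
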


As we can see from Theorem~\ref{thm:nc3}, for the most cases, the principal branch is continuous.  One special case is when $a_Y+b_Y>0$ with $b_Y>a_Y$.  In fact, this can be seen as a duality, i.e. flipping the role of two players, of the case we have discussed in part 3 of Theorem~\ref{thm:nc1}, where for $T_y$ is within $T_A$ and $T_I$, there can be three QRE correspondences.
\section{Detailed Bifurcation Analysis for General $2 \times 2$ Game} \label{appendix:bifurcation}

In this section, we provide technical details for the results we stated in Section~\ref{sec:coord_topo} and Section~\ref{sec:topo_nc}.
Before we get into details, we state some results that will be useful throughout the analysis in the following lemma.  The proof of this lemma is straightforward and we omit it in this paper.

\begin{lemma}\label{lem:techs}
	The following statements are true.
	\begin{enumerate}
		\item The derivative of $T_X^{II}$ is given as
		\begin{equation} \label{eq:def_dTX}
		\frac{\partial T_X^{II}}{\partial x}(x,T_y) 
		= \frac{-(a_X+b_X)L(x,T_y)+b_X}{x(1-x)[\ln(1/x-1)]^2}
		\end{equation}
		where
		\begin{equation} \label{eq:def_L}
		L(x,T_y) = y^{II} + x(1-x)\ln\left(\frac{1}{x}-1\right)\frac{\partial y^{II}}{\partial x}
		\end{equation}
		\item The derivative of $y^{II}$ is given as
		$$
		\frac{\partial y^{II}}{\partial x} = y^{II}(1-y^{II})\frac{a_Y+b_Y}{T_y} 
		$$
		\item For $x \in (0,1/2)\cup(1/2,1)$, $\frac{\partial T_X^{II}}{\partial x}>0$ if and only if $L(x,T_y)<\frac{b_X}{a_X+b_X}$; on the other hand, $\frac{\partial T_X^{II}}{\partial x}<0$ if and only if $L(x,T_y)>\frac{b_X}{a_X+b_X}$.
	\end{enumerate}
\end{lemma}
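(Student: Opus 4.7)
The plan is to verify three essentially computational claims, so I would organize the proof as three short self-contained calculations rather than a unified argument.

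First I would dispatch part (2), which is the easiest building block and feeds into part (1). Starting from the closed form $y^{II}(x,T_y) = (1 + e^{(-(a_Y+b_Y)x + b_Y)/T_y})^{-1}$, I would differentiate directly using the chain rule: the inner function is linear in $x$ with slope $-(a_Y+b_Y)/T_y$, and differentiating a sigmoid $\sigma(u) = 1/(1+e^{-u})$ gives $\sigma'(u) = \sigma(u)(1-\sigma(u))$. Identifying $u = ((a_Y+b_Y)x - b_Y)/T_y$ and $\sigma(u) = y^{II}$ yields the claimed identity $\partial y^{II}/\partial x = y^{II}(1-y^{II})(a_Y+b_Y)/T_y$ immediately.

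Next I would attack part (1), the main formula, by applying the quotient rule to the definition
\[
T_X^{II}(x,T_y) = \frac{-(a_X+b_X)\, y^{II}(x,T_y) + b_X}{\ln(1/x - 1)}.
\]
The numerator's derivative in $x$ is $-(a_X+b_X)\,\partial y^{II}/\partial x$. For the denominator I would rewrite $\ln(1/x - 1) = \ln(1-x) - \ln x$, so that $\frac{d}{dx}\ln(1/x-1) = -1/(1-x) - 1/x = -1/(x(1-x))$. The quotient rule then gives a two-term numerator; clearing the $1/(x(1-x))$ factor by multiplying the numerator and denominator of the resulting fraction by $x(1-x)$ groups the two terms into $-(a_X+b_X)\bigl[y^{II} + x(1-x)\ln(1/x-1)\,\partial y^{II}/\partial x\bigr] + b_X$, which is exactly $-(a_X+b_X)L(x,T_y) + b_X$ by the definition of $L$. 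The denominator becomes $x(1-x)\,[\ln(1/x-1)]^2$, matching the claim.

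Finally, part (3) is a sign inspection. On $(0,1/2)\cup(1/2,1)$ we have $x(1-x) > 0$ and $\ln(1/x - 1) \neq 0$, so the denominator in \eqref{eq:def_dTX} is strictly positive. Hence $\operatorname{sgn}(\partial T_X^{II}/\partial x) = \operatorname{sgn}(-(a_X+b_X)L + b_X)$. Under the paper's standing normalization $a_X > 0$ with $|a_X| \ge |b_X|$ (and away from the degenerate case $b_X = -a_X$), we have $a_X + b_X > 0$, so dividing by $a_X + b_X$ preserves signs and yields the stated equivalences $\partial T_X^{II}/\partial x > 0 \iff L < b_X/(a_X+b_X)$ and its opposite.

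There is no real obstacle here: the only subtlety is bookkeeping signs in part (3), where one must invoke the global assumption $a_X + b_X > 0$ to turn the inequality into the clean form stated. The rest is differentiation plus an algebraic regrouping that is driven entirely by the shape of the answer.
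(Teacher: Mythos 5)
Your proposal is correct; the paper itself omits the proof of this lemma as ``straightforward,'' and your direct computation (chain rule for the sigmoid, quotient rule with $\frac{d}{dx}\ln(1/x-1)=-1/(x(1-x))$, then a sign inspection of the positive denominator) is exactly the intended argument. Your observation that part (3) silently relies on $a_X+b_X>0$ --- guaranteed by the normalization $a_X>0$, $|a_X|\ge|b_X|$ except in the degenerate case $b_X=-a_X$, which is excluded in every case where the lemma is invoked --- is a worthwhile point of care that the paper does not make explicit.
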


\subsection{Case 1: $b_X \ge 0$}

First, we consider the case $b_X \ge 0$.  
As we are going to show in Proposition~\ref{prop:dir_pb}, the direction of the principal branch relies on $y^{II}(0.5, T_y)$, which is the strategy the second player is performing, assuming the first player is indifferent to his payoff.  The idea is that if $y^{II}(0.5, T_y)$ is large, then it means that the second player pays more attention to the action that the first player thinks better.  This is more likely to happen when the second player has less rationality, i.e. high temperature $T_y$.  On the other hand, if the second player pays more attention to the other action, the first player is forced to choose that as it gets more expected payoff.

We show that for $T_y>T_I$, the principal branch lies on $x\in\left(\frac{1}{2},1\right)$, otherwise the principal branch lies on $x\in\left(0, \frac{1}{2}\right)$.  This result follows from the following proposition:
\begin{proposition} \label{prop:dir_pb}
	For case~1, if $T_y>T_I$, then we have $y^{II}(1/2,T_y)>\frac{b_X}{a_X+b_X}$, and hence
	\begin{align*}
	\lim_{x\rightarrow \frac{1}{2}^+} T_X^{II}(x,T_y) = +\infty \quad \text{ and } \quad
	\lim_{x\rightarrow \frac{1}{2}^-} T_X^{II}(x,T_y) = -\infty
	\end{align*} On the other hand, if $T_y<T_I$, then we have $y^{II}(1/2,T_y)<\frac{b_X}{a_X+b_X}$, and hence
	\begin{align*}
	\lim_{x\rightarrow \frac{1}{2}^+} T_X^{II}(x,T_y) = -\infty \quad \text{ and } \quad \lim_{x\rightarrow \frac{1}{2}^-} T_X^{II}(x,T_y) = +\infty.
	\end{align*} 
\end{proposition}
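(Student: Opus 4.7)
The plan is to reduce both claims to a single algebraic inequality about $y^{II}(1/2, T_y)$, then read off the limits of $T_X^{II}$ from the sign of the resulting numerator combined with the sign of $\ln(1/x - 1)$ near $x = 1/2$.

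First I would compute $y^{II}(1/2, T_y)$ in closed form. Plugging $x = 1/2$ into the definition \eqref{eq:y_} gives
\[
y^{II}(1/2, T_y) = \left(1 + e^{(b_Y - a_Y)/(2T_y)}\right)^{-1}.
\]
The inequality $y^{II}(1/2, T_y) > b_X/(a_X + b_X)$ is then algebraically equivalent to $e^{(b_Y - a_Y)/(2T_y)} < a_X/b_X$, i.e.\
\[
\frac{b_Y - a_Y}{2 T_y} < \ln\!\bigl(a_X/b_X\bigr).
\]
Here I use the case hypothesis $b_X \ge 0$ together with the standing assumption $a_X > 0$ and $|a_X| \ge |b_X|$, so $\ln(a_X/b_X) \ge 0$ (with the edge case $a_X = b_X$ excluded in the same way as elsewhere in the paper).

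Next I would split on the sign of $b_Y - a_Y$ to match the definition $T_I = \max\{0, (b_Y - a_Y)/(2\ln(a_X/b_X))\}$. If $b_Y \le a_Y$, then $T_I = 0$, and the left-hand side of the displayed inequality is nonpositive while the right-hand side is positive, so the inequality holds for every $T_y > 0 = T_I$, as desired. If $b_Y > a_Y$, then dividing by $2\ln(a_X/b_X) > 0$ converts the inequality exactly into $T_y > T_I$. In both cases the equivalence is tight, so the reverse inequality $y^{II}(1/2, T_y) < b_X/(a_X + b_X)$ characterizes $T_y < T_I$.

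Finally I would extract the limits from the formula $T_X^{II}(x, T_y) = [-(a_X + b_X) y^{II}(x, T_y) + b_X]/\ln(1/x - 1)$. The key elementary fact is that $\ln(1/x - 1) \to 0^-$ as $x \to 1/2^+$ and $\to 0^+$ as $x \to 1/2^-$, while $y^{II}$ is continuous at $x = 1/2$, so the numerator converges to a nonzero value whose sign is exactly opposite to the sign of $y^{II}(1/2, T_y) - b_X/(a_X + b_X)$ established in the previous paragraph. Matching signs of numerator and denominator then yields the four limit statements in the proposition.

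There is no serious obstacle here: the whole argument is mechanical once one observes the closed form for $y^{II}(1/2, T_y)$. The only care needed is in (i) the case split corresponding to the $\max$ in the definition of $T_I$, and (ii) tracking signs of $\ln(1/x - 1)$ on the two sides of $x = 1/2$, since these drive the $+\infty$ versus $-\infty$ dichotomy.
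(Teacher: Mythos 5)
Your proposal is correct and follows essentially the same route as the paper: compute $y^{II}(1/2,T_y)=\bigl(1+e^{(b_Y-a_Y)/(2T_y)}\bigr)^{-1}$, compare it to $b_X/(a_X+b_X)$ via the equivalence with $T_y \gtrless T_I$ (handling the $\max$ in the definition of $T_I$ by the same case split the paper uses), and then read the four limits off the sign of the numerator of $T_X^{II}$ against the sign of $\ln(1/x-1)$ on either side of $x=1/2$. The only difference is cosmetic: you package the three subcases ($b_Y>a_Y$, $b_Y<a_Y$, $b_Y=a_Y$) into one algebraic equivalence, while the paper writes them out separately.
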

\begin{proof}
	First, consider the case that $b_Y > a_Y$, then, we can see that for $T_y>T_I=\frac{b_Y-a_Y}{2\ln(a_X/b_X)}$:
	$$
	y^{II}\left(\frac{1}{2}, T_y\right) = \left( 1+e^{\frac{b_Y-a_Y}{2T_y}} \right)^{-1} > \left( 1+e^{\frac{b_Y-a_Y}{2T_I}} \right)^{-1} = \left(1+\frac{a_X}{b_X}\right)^{-1} = \frac{b_X}{a_X+b_X}
	$$
	Then, for the case that $a_Y>b_Y$, we can see that
	$$
	y^{II}\left(\frac{1}{2}, T_y\right) = \left( 1+e^{\frac{b_Y-a_Y}{2T_y}} \right)^{-1} > \left( 1+e^{0} \right)^{-1} = \frac{1}{2} \ge \frac{b_X}{a_X+b_X}
	$$
	For the case that $a_Y=b_Y$, since we assumed $a_X \not= b_X$, we have
	$$
	y^{II}\left(\frac{1}{2}, T_y\right) = \left( 1+e^{\frac{b_Y-a_Y}{2T_y}} \right)^{-1} = \left( 1+e^{0} \right)^{-1} = \frac{1}{2} > \frac{b_X}{a_X+b_X}
	$$
	As a result, the numerator of \eqref{eq:Tx_} at $x=\frac{1}{2}$ is negative for $T_y>T_I$, this proves the first two limit.  
	
	For the rest two limits, we only need to consider the case $b_Y>a_Y$, otherwise $T_I=0$, which is meaningless.  For $b_Y>a_Y$ and $T_y<T_I$, we can see that
	$$
	y^{II}\left(\frac{1}{2}, T_y\right) = \left( 1+e^{\frac{b_Y-a_Y}{2T_y}} \right)^{-1} < \left( 1+e^{\frac{b_Y-a_Y}{2T_I}} \right)^{-1} = \left(1+\frac{a_X}{b_X}\right)^{-1} = \frac{b_X}{a_X+b_X}
	$$
	This makes the numerator of \eqref{eq:Tx_} at $x=\frac{1}{2}$ positive and proves the last two limits.
\end{proof}

\subsubsection{Case 1a: $b_X\ge 0$, $a_Y+b_Y>0$}
In this section, we consider a relaxed version of the class of coordination game as in Section~\ref{sec:coord_topo}.  We prove theorems presented in Section~\ref{sec:coord_topo}, and showing that these results can in fact be extended to the case that $a_Y+b_Y>0$, instead of requiring $a_Y>0$ and $b_Y>0$.

First, we can find that as $a_Y+b_Y>0$, $y^{II}$ is an increasing function of $x$, meaning
$$
\frac{\partial y^{II}}{\partial x} = y^{II}(1-y^{II})\frac{a_Y+b_Y}{T_y} > 0
$$
This implies that both player tend to agree to each other.  Intuitively, if $a_Y \ge b_Y$, then both player agree with that the first action is the better one.  For this case, we can show that no matter what $T_y$ is, the principal branch lies on $x \in \left(\frac{1}{2},1\right)$.  In fact, this can be extended to the case whenever $T_y>T_I$, which is the first part of Theorem~\ref{thm:coord_topo_pb}.
\begin{proof}[Proof of Part~1 of Theorem~\ref{thm:coord_topo_pb}]
	We can find that for $T_y>T_I$, we have $y^{II}(1/2,T_Y)>\frac{b_X}{a_X+b_X}$ for any $T_y$ according to Proposition~\ref{prop:dir_pb}.  Since $y^{II}$ is monotonic increasing with $x$, we have $y^{II}>\frac{b_X}{a_X+b_X}$ for $x>1/2$.  This means that we have $T_X^{II}>0$ for any $x \in (1/2,1)$.  Also, it is easy to see that $\lim_{x\rightarrow 1^-} T_X^{II}=0$.  As a result, we can find that $(0.5,1)$ contains the principal branch.
\end{proof}


For Case~1a with $a_Y \ge b_Y$ we can observe that on the principal branch, the lower the $T_x$, the more $x$ is close to $1$.  We are able to show this monotonicity characteristics in Proposition~\ref{prop:c1a_g_monotone}, which can be used to justify the stability owing to Lemma~\ref{lem:qre_stab}.

\begin{proposition}\label{prop:c1a_g_monotone}
	In Case~1a, if $a_Y \ge b_Y$, then $\frac{\partial T_X^{II}}{\partial x}<0$ for $x \in \left(\frac{1}{2},1\right)$.
\end{proposition}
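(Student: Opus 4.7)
The plan is to apply part~3 of Lemma~\ref{lem:techs}: since the denominator in \eqref{eq:def_dTX} is positive, proving $\frac{\partial T_X^{II}}{\partial x}<0$ on $(1/2,1)$ reduces to proving $L(x,T_y) > b_X/(a_X+b_X)$ there. The tempting move is to use $y^{II} > b_X/(a_X+b_X)$ on the upper branch (which holds by monotonicity), but this fails to give what we want: on $(1/2,1)$ one has $\ln(1/x-1) < 0$ and $\partial y^{II}/\partial x > 0$ (since $a_Y + b_Y > 0$ in Case~1a), so $L$ sits strictly \emph{below} $y^{II}$, not above. So I would instead aim for the stronger chain
$$L(x,T_y) \;>\; y^{II}(1/2,T_y) \;>\; \frac{b_X}{a_X+b_X}.$$
The right-hand inequality is immediate from Proposition~\ref{prop:dir_pb}: the assumption $a_Y \ge b_Y$ forces $T_I = 0$, so every positive $T_y$ lies strictly above $T_I$.

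The work is therefore to establish the left-hand inequality. Here is where the hypothesis $a_Y \ge b_Y$ enters again. From the closed-form $y^{II}(x,T_y) = \bigl(1 + e^{(b_Y - (a_Y+b_Y)x)/T_y}\bigr)^{-1}$, this logistic has inflection at $x^\star = b_Y/(a_Y+b_Y)$, and $a_Y \ge b_Y$ gives $x^\star \le 1/2$, so $y^{II}(\cdot,T_y)$ is concave on $[1/2,1]$. Concavity yields the tangent-at-$x$ bound $y^{II}(x) - y^{II}(1/2) \ge (x - 1/2)\,\frac{\partial y^{II}}{\partial x}(x)$ for $x \in (1/2,1)$. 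I would then pair this with the purely analytic inequality
$$x - \tfrac{1}{2} \;>\; x(1-x)\ln\!\left(\frac{x}{1-x}\right), \qquad x \in (1/2, 1],$$
which follows by setting $h(x) := (x-1/2) - x(1-x)\ln(x/(1-x))$, noting $h(1/2)=0$, and computing $h'(x) = (2x-1)\ln(x/(1-x)) > 0$ on $(1/2,1)$. Multiplying the two inequalities together (the second by the positive factor $\partial y^{II}/\partial x$) and using $\ln(1/x-1) = -\ln(x/(1-x))$ gives $y^{II}(x) - y^{II}(1/2) > x(1-x)\ln(x/(1-x))\,\frac{\partial y^{II}}{\partial x}$, which rearranges to exactly $L(x,T_y) > y^{II}(1/2,T_y)$.

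The main obstacle is conceptual: the obvious bound on $y^{II}$ is not enough because of the negative correction term in $L$, so one must borrow additional slack. The hypothesis $a_Y \ge b_Y$ pulls double duty — once to push the logistic's inflection point below $1/2$, providing the concavity-based tangent estimate, and once (via $T_I = 0$) to guarantee that $y^{II}(1/2,T_y)$ already exceeds the threshold $b_X/(a_X+b_X)$ before any slack is spent. Computationally, the only nontrivial step is the elementary bound $x(1-x)\ln(x/(1-x)) < x - 1/2$, which is quick via the clean identity $h'(x) = (2x-1)\ln(x/(1-x))$.
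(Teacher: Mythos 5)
Your proof is correct, and while it shares the paper's skeleton --- reduce via part~3 of Lemma~\ref{lem:techs} to showing $L(x,T_y) > b_X/(a_X+b_X)$ on $(1/2,1)$, and anchor the bound at $x=1/2$ where $L(1/2,T_y)=y^{II}(1/2,T_y) \ge b_X/(a_X+b_X)$ by Proposition~\ref{prop:dir_pb} --- the engine that carries the bound from $x=1/2$ to the rest of the interval is genuinely different. The paper simply differentiates $L$,
$$\frac{\partial L}{\partial x} = \Big[(1-2x)+x(1-x)(1-2y^{II})\tfrac{a_Y+b_Y}{T_y}\Big]\ln\Big(\tfrac{1}{x}-1\Big)\frac{\partial y^{II}}{\partial x},$$
and observes that on $(1/2,1)$ the bracket is negative (both summands are, since $y^{II}>1/2$ there), $\ln(1/x-1)<0$, and $\partial y^{II}/\partial x>0$, so $L$ is increasing and therefore exceeds $L(1/2,T_y)$. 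You avoid differentiating $L$ altogether: you get $L(x,T_y) > y^{II}(1/2,T_y)$ from concavity of the logistic on $[1/2,1]$ (which is exactly where $a_Y \ge b_Y$ enters, pushing the inflection point $b_Y/(a_Y+b_Y)$ to the left of $1/2$) combined with the elementary estimate $x(1-x)\ln(x/(1-x)) < x - \tfrac12$. Both routes establish the identical intermediate inequality $L(x,T_y) > L(1/2,T_y)$; the paper's sign check is shorter and its $\partial L/\partial x$ computation is reused nearly verbatim in Propositions~\ref{prop:case1a_left} and~\ref{prop:c1a_g_right2} for the other regions of the diagram, whereas your argument is self-contained and makes explicit \emph{why} the hypothesis $a_Y\ge b_Y$ is doing work (concavity plus $T_I=0$). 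Your opening diagnosis --- that the naive bound $y^{II} > b_X/(a_X+b_X)$ fails because the correction term in $L$ is negative on $(1/2,1)$ --- is also accurate and is precisely the issue the paper's monotonicity argument is designed to circumvent.
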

\begin{proof}
	It suffices to show that $L(x,T_y)>\frac{b_X}{a_X+b_X}$ for $x \in \left(\frac{1}{2},1\right)$.  Note that according to Prop~\ref{prop:dir_pb}, we have if $a_Y \ge b_Y$,
	\begin{equation}\label{eq:y_g_half}
	L(1/2,T_y) = y^{II}(1/2,T_y) \ge \frac{1}{2} 
	\ge \frac{b_X}{a_X+b_X}
	\end{equation}  
	Since $y^{II}(x,T_y)$ is monotonic increasing when $a_Y+b_Y>0$, $y^{II}(x,T_y)>\frac{1}{2}$ for $x \in \left(\frac{1}{2},1\right)$.  As a result, we have $1-2y^{II}<0$, and hence we can see that for $x \in \left(\frac{1}{2},1\right)$,
	$$
	\frac{\partial L}{\partial x} = \left[(1-2x)+x(1-x)(1-2y^{II})\frac{a_Y+b_Y}{T_y}\right]\ln\left(\frac{1}{x}-1\right)\frac{\partial y^{II}}{\partial x} > 0
	$$
	Consequently we have that for $x \in \left(\frac{1}{2},1\right)$,  $L(x,T_y)>\frac{b_X}{a_X+b_X}$, and hence $\frac{\partial T_X^{II}}{\partial x}<0$ according to Lemma~\ref{lem:techs}.
\end{proof}
\begin{proof}[Proof of Part 1 of Theorem~\ref{thm:coord_topo_stab}]
	According to Lemma~\ref{lem:qre_stab}, Proposition~\ref{prop:c1a_g_monotone} implies that all $x \in (0.5,1)$ is on the principal branch.  This directly leads us to part 1 of Theorem~\ref{thm:coord_topo_stab}.
\end{proof}

Next, if we look into the region $x \in (0,1/2)$, we can find that in this region, QREs appears only when $T_x$ and $T_y$ is low.  This observation can be formalized in the proposition below.  We can see that this proposition directly proves part~2 and 3 of Theorem~\ref{thm:coord_topo_ct}, as well as part~2 of Theorem~\ref{thm:coord_topo_stab}.

\begin{proposition}\label{prop:case1a_left}
	Consider Case~1a. Let $x_1 = \min\left\{\frac{1}{2},\frac{-T_y\ln\left(\frac{a_X}{b_X}\right)+b_Y}{a_Y+b_Y}\right\}$ and $T_{B} = \frac{b_Y}{\ln(a_X/b_X)}$. The following statements are true for $x \in (0,1/2)$:
	\begin{enumerate}
		\item If $T_y>T_B$, then $T_X^{II}<0$.
		\item If $T_y<T_B$, then $T_X^{II}>0$ if and only if $x \in (0, x_1)$.
		\item $\frac{\partial L}{\partial x} >0$ for $x \in (0,x_1)$.
		\item If $T_y<T_I$, then $\frac{\partial T_X^{II}}{\partial x}>0$.
		\item If $T_y>T_I$, then there is a nonnegative \emph{critical temperature} $T_C(T_y)$ such that $T_X^{II}(x, T_Y) \le T_{C}(T_y)$ for $x \in (0, 1/2)$.  If $T_Y<T_B$, then $T_C(T_y)$ is given as $T_X^{II}(x_L)$, where $x_L \in (0, x_1)$ is the unique solution to $L(x,T_y)=\frac{b_X}{a_X+b_X}$.
	\end{enumerate}
\end{proposition}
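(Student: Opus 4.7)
The plan is to exploit the explicit Gibbs-form of $y^{II}$ together with the factorization of $\partial T_X^{II}/\partial x$ in Lemma~\ref{lem:techs}. Parts~1 and~2 are pure algebra on the numerator of $T_X^{II}$; Part~3 is the technical core that the other statements are built on; Parts~4 and~5 are geometric consequences of Part~3 combined with Proposition~\ref{prop:dir_pb}.

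For Parts~1 and~2, note that on $(0,1/2)$ one has $\ln(1/x-1)>0$, so $\mathrm{sign}(T_X^{II}(x,T_y))=\mathrm{sign}\bigl(b_X-(a_X+b_X)y^{II}(x,T_y)\bigr)$, which is positive exactly when $y^{II}(x,T_y)<b_X/(a_X+b_X)$. Substituting the closed form for $y^{II}$ from \eqref{eq:y_} reduces this to the linear inequality $(a_Y+b_Y)x < b_Y-T_y\ln(a_X/b_X)$. When $T_y>T_B=b_Y/\ln(a_X/b_X)$ the right-hand side is nonpositive, so the inequality fails everywhere on $(0,1/2)$, giving Part~1. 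When $T_y<T_B$ the right-hand side is positive, and intersecting with $(0,1/2)$ yields exactly $x\in(0,x_1)$, proving Part~2.

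For Part~3 I will differentiate $L$ using the formulas of Lemma~\ref{lem:techs}. A direct computation (handling the derivative of $x(1-x)\ln(1/x-1)$ and using $\partial^2 y^{II}/\partial x^2=(1-2y^{II})(\partial y^{II}/\partial x)(a_Y+b_Y)/T_y$) yields the factorization
\begin{equation*}
\frac{\partial L}{\partial x}=\ln\!\left(\tfrac{1}{x}-1\right)\frac{\partial y^{II}}{\partial x}\left[(1-2x)+x(1-x)(1-2y^{II})\frac{a_Y+b_Y}{T_y}\right].
\end{equation*}
On $(0,x_1)\subseteq(0,1/2)$ the factor $\ln(1/x-1)$ is positive and so is $\partial y^{II}/\partial x$ (since $a_Y+b_Y>0$). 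The bracket is positive because $a_X\ge b_X$ forces $b_X/(a_X+b_X)\le 1/2$, and by Part~2 we have $y^{II}(x,T_y)<b_X/(a_X+b_X)\le 1/2$ on $(0,x_1)$, hence $1-2y^{II}>0$; combined with $1-2x>0$, the bracket is a sum of positive terms.

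For Part~4, when $T_y<T_I$ Proposition~\ref{prop:dir_pb} gives $y^{II}(1/2,T_y)<b_X/(a_X+b_X)$, and a quick check of the defining inequality shows $x_1=1/2$ in this range; so Part~3 applies on the whole of $(0,1/2)$. Since $\ln(1/x-1)$ vanishes at $x=1/2$, $L(1/2,T_y)=y^{II}(1/2,T_y)<b_X/(a_X+b_X)$, and monotonicity of $L$ from Part~3 then gives $L(x,T_y)<b_X/(a_X+b_X)$ throughout $(0,1/2)$, which by Lemma~\ref{lem:techs} is equivalent to $\partial T_X^{II}/\partial x>0$. For Part~5, when $T_y\ge T_B$ Part~1 makes $T_C(T_y)=0$ work. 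When $T_I<T_y<T_B$, $L(0^+,T_y)=0$ and at $x=x_1$ one has $y^{II}(x_1)=b_X/(a_X+b_X)$ while the second summand in $L$ is strictly positive (all factors positive for $x_1\in(0,1/2)$), so $L(x_1,T_y)>b_X/(a_X+b_X)$. By continuity and the strict monotonicity from Part~3 there is a unique $x_L\in(0,x_1)$ with $L(x_L,T_y)=b_X/(a_X+b_X)$, and Lemma~\ref{lem:techs} then shows $T_X^{II}(\cdot,T_y)$ increases on $(0,x_L)$ and decreases on $(x_L,x_1)$; combined with the fact that $T_X^{II}<0$ on $(x_1,1/2)$ by Part~2, $x_L$ is the global maximizer on $(0,1/2)$ and $T_C(T_y):=T_X^{II}(x_L,T_y)$ is the desired upper bound.

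The hard part is Part~3: one has to notice that the same inequality $y^{II}<b_X/(a_X+b_X)$ that characterizes $(0,x_1)$ in Part~2 is exactly what kills the potentially-negative term $(1-2y^{II})$ in the bracket of $\partial L/\partial x$, and that this only works because the rescaling hypothesis $a_X\ge b_X$ keeps $b_X/(a_X+b_X)\le 1/2$. Without this clean alignment the monotonicity of $L$ could fail, and the whole argument for the existence and uniqueness of $x_L$ in Part~5 would collapse.
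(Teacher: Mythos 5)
Your proof follows essentially the same route as the paper's: Parts 1--2 reduce the sign of $T_X^{II}$ on $(0,1/2)$ to the linear inequality $(a_Y+b_Y)x<b_Y-T_y\ln(a_X/b_X)$; Part 3 uses the identical factorization $\frac{\partial L}{\partial x}=\ln\left(\frac{1}{x}-1\right)\frac{\partial y^{II}}{\partial x}\left[(1-2x)+x(1-x)(1-2y^{II})\frac{a_Y+b_Y}{T_y}\right]$ with the bracket made positive by $y^{II}<b_X/(a_X+b_X)\le 1/2$ on $(0,x_1)$; and Parts 4--5 follow from the monotonicity of $L$ together with its endpoint values, exactly as in the paper.

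One factual slip in Part 5: $L(0^+,T_y)$ is not $0$. Since $x(1-x)\ln(1/x-1)\to 0$ as $x\to 0^+$, the limit is $y^{II}(0,T_y)=\left(1+e^{b_Y/T_y}\right)^{-1}>0$. The intermediate-value argument still goes through, but only because the inequality $y^{II}(0,T_y)<b_X/(a_X+b_X)$ is (after taking logarithms) exactly equivalent to the case hypothesis $T_y<T_B$ --- which is how the paper justifies this step. As written, your claimed value of $0$ would make the crossing appear to exist independently of $T_y<T_B$, obscuring the role of that hypothesis; the fix is a one-line recomputation of the limit.
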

\begin{proof}
	For the first and second part, consider any $x \in (0, 1/2)$, and we can see that
	\begin{align*}
	T_X^{II} > 0 &\Leftrightarrow y^{II} < \frac{b_X}{a_X+b_X} \\
	&\Leftrightarrow \left( 1 + e^{\frac{1}{T_y}(-(a_Y+b_Y)x+b_Y)} \right)^{-1} < \frac{b_X}{a_X+b_X}\\
	&\Leftrightarrow x < \min\left\{\frac{1}{2},\frac{-T_y\ln\left(\frac{a_X}{b_X}\right)+b_Y}{a_Y+b_Y}\right\}
	\end{align*}
	Note that for $T_y>\frac{b_Y}{\ln(a_X/b_X)}=T_B$, we have $x_1<0$, and hence $T_X<0$.
	
	From the above derivation we can see that for all $x \in (0,1/2)$ such that $T_X^{II}(x,T_y)>0$, we have $y^{II} < 1/2$ since $\frac{b_X}{a_X+b_X}<1/2$.  Then, we can easily find that
	$$
	\frac{\partial L}{\partial x} = \left[(1-2x)+x(1-x)(1-2y^{II})\frac{a_Y+b_Y}{T_y}\right]\ln\left(\frac{1}{x}-1\right)\frac{\partial y^{II}}{\partial x} > 0.
	$$
	Further, when $T_y<T_I$, we have $y^{II}(1/2,T_y)<\frac{b_X}{a_X+b_X}$.  This implies that for $x \in (0,1/2)$, $y^{II}(x,T_y)<\frac{b_X}{a_X+b_X}$.  Since $\frac{\partial L}{\partial x}>0$, and $L$ is continuous, we can see that $L(x,T_y)<\frac{b_X}{a_X+b_X}$ for $x \in (0,1/2)$.  This implies the fourth part of the proposition.
	
	Next, if we look at the derivative of $T_X^{II}$, 
	$$
	\frac{\partial T_X^{II}}{\partial x}(x,T_y)
	= \frac{-(a_X+b_X)L(x,T_y)+b_X}{x(1-x)[\ln(1/x-1)]^2}
	$$
	we can see that any critical point in $x \in (0,1/2)$ must satisfy $L(x,T_y)=\frac{b_X}{a_X+b_X}$.  When $T_y>T_I$, $x_1<1/2$, and we can see that $L(x_1, T_y) > y^{II}(x_1, T_y) = \frac{b_X}{a_X+b_X}$.  If $T_y<\frac{b_Y}{\ln(a_X/b_X)}$, then $\lim_{x\rightarrow 0+} T_X = y^{II}(0, T_Y) < \frac{b_X}{a_X+b_X}$.  Hence, there is exactly one critical point for $T_X$ for $x \in (0,x_1)$, which is a local maximum for $T_X$.  If $T_y>\frac{b_Y}{\ln(a_X/b_X)}$, then we can see that $T_X$ is always negative, in which case the critical temperature is zero.
\end{proof}

The results in Proposition~\ref{prop:case1a_left} not only applies for the case $a_Y \ge b_Y$ but also general cases about the characteristics on $(0,1/2)$.
According to this proposition, we can conclude the following things for the case $a_Y \ge b_Y$, as well as the case $a_Y<b_Y$ when $T_y>T_I$:
\begin{enumerate}
	\item The temperature $T_B=\frac{b_Y}{\ln(a_X/b_X)}$ determines whether there is a branch appears in $x \in (0,1/2)$.  
	\item There is some critical temperature $T_C$.  If we raise $T_x$ above $T_C$, then the system is always on the principal branch.
	\item The critical temperature $T_C$ is given as the solution to the equality $L(x,T_Y)=\frac{b_X}{a_X+b_X}$.
\end{enumerate}
When there is a positive critical temperature, though it has no closed form solution, we can perform binary search to look for $x \in (0, x_1)$ that satisfies $L(x,T_y)=\frac{b_X}{a_X+b_X}$.


Another result we are able to obtain from Proposition~\ref{prop:case1a_left} is that the principal branch for Case~1a when $T_y<T_I$ lies on $(0,1/2)$.  

\begin{proof}[Proof of Part 2 of Theorem~\ref{thm:coord_topo_pb}]
	First, we note that $T_y<T_I$ is meaningful only when $b_Y>a_Y$, for which case we always have $T_I<T_B$.
	From Proposition~\ref{prop:case1a_left}, we can see that for $T_Y^{II}<T_I$, we have $x_1=1/2$, and hence $T_X^{II}>0$ for $x\in(0,1/2)$.  From Proposition~\ref{prop:dir_pb}, we already have $\lim_{x \rightarrow \frac{1}{2}^-} T_X^{II} = \infty$.  Also, it is easy to see that $\lim_{x \rightarrow 0^+} T_X^{II} = 0$.  As a result, since $T_X^{II}$ is continuous differentiable over $(0, 0.5)$, for any $T_x>0$, there exists $x \in (0,0.5)$ such that $T_X^{II}(x,T_y)=T_x$.
\end{proof}


What remains to show is the characteristics on the side $(1/2,1)$ when $b_Y>a_Y$.  In Figure~\ref{fig:as_c1a_by_ex1} and Figure~\ref{fig:as_c1a_by_ex2}, we can find that for low $T_y$, the branch on the side $(1/2,1)$ demonstrated a similar behavior as what we have shown in Proposition~\ref{prop:case1a_left} for the side $(0,1/2)$.  However, for high $T_y$, while we still can find that $(0,1/2)$ contains the principal branch, the principal branch is not continuous.  These observations are formalized in the following proposition.  From this proposition, the proof of part~4 of Theorem~\ref{thm:coord_topo_ct} directly follows.

\begin{proposition} \label{prop:c1a_g_right2}
	Consider Case~1a with $b_Y>a_Y$.  Let $x_2=\max\left\{\frac{1}{2},\frac{-T_Y\ln\left(\frac{a_X}{b_X}\right)+b_Y}{a_Y+b_Y}\right\}$ and $T_A=\max\left\{0, \frac{-a_Y}{\ln(a_X/b_X)}\right\}$.  The following statements are true for $x\in(1/2,1)$.
	\begin{enumerate}
		\item If $T_y<T_A$, then $T_X^{II}<0$.
		\item If $T_y>T_A$, then $T_X^{II}>0$ if and only if $x \in (x_2,1)$.
		\item For $x \in \left[\frac{b_Y}{a_Y+b_Y},1\right)$, we have $\frac{\partial L}{\partial x} > 0$.
		\item If $T_y \in (T_{A}, T_I)$, then there is a positive critical temperature $T_{C}(T_y)$ such that $T_X^{II}(x,T_y)\le T_{C}(T_y)$ for $x \in (1/2,1)$, given as $T_{C}(T_y)=T_X^{II}(x_L)$, where $x_L \in (1/2,1)$ is the unique solution of $L(x,T_y)=\frac{b_X}{a_X+b_X}$.
	\end{enumerate}
\end{proposition}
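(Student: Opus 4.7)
The plan mirrors the structure of Proposition~\ref{prop:case1a_left}, but carried out on $(1/2,1)$ instead of $(0,1/2)$, where the denominator $\ln(1/x-1)$ in \eqref{eq:Tx_} flips sign. Throughout I use Lemma~\ref{lem:techs} and the closed form \eqref{eq:y_} of $y^{II}$.

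For parts (1) and (2), note that on $(1/2,1)$ the denominator $\ln(1/x-1)$ is negative, so $T_X^{II}(x,T_y)>0$ is equivalent to $y^{II}(x,T_y)>b_X/(a_X+b_X)$. Substituting \eqref{eq:y_} and solving the inequality for $x$ gives $x>(b_Y-T_y\ln(a_X/b_X))/(a_Y+b_Y)$. Intersecting with $(1/2,1)$: when $T_y<T_A$ (nontrivial only if $a_Y<0$) this threshold is $\ge 1$, so $T_X^{II}<0$ on the whole interval; when $T_y>T_A$ the threshold coincides with $x_2$, producing positivity exactly on $(x_2,1)$.

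For part (3), I use the derivative formula produced in the proof of Proposition~\ref{prop:case1a_left},
$$\frac{\partial L}{\partial x} = \Bigl[(1-2x) + x(1-x)(1-2y^{II})\tfrac{a_Y+b_Y}{T_y}\Bigr]\ln\!\Bigl(\tfrac{1}{x}-1\Bigr)\frac{\partial y^{II}}{\partial x}.$$
If $b_Y/(a_Y+b_Y)\ge 1$ (i.e.\ $a_Y\le 0$) the interval in the statement is empty and the claim is vacuous. Otherwise, for $x\ge b_Y/(a_Y+b_Y)>1/2$ we have $y^{II}\ge 1/2$, so $1-2x$ and $1-2y^{II}$ are nonpositive, making the bracket nonpositive; since $\ln(1/x-1)<0$ and $\partial y^{II}/\partial x>0$, the overall product is strictly positive.

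For part (4), Proposition~\ref{prop:dir_pb} gives $\lim_{x\to(1/2)^+}T_X^{II}=-\infty$ (using $T_y<T_I$); part (2) gives $T_X^{II}(x_2,T_y)=0$ and $T_X^{II}>0$ on $(x_2,1)$; and as $x\to 1^-$ the numerator tends to $b_X-(a_X+b_X)y^{II}(1,T_y)$, which is strictly negative because $T_y>T_A$ forces $y^{II}(1,T_y)>b_X/(a_X+b_X)$, while $\ln(1/x-1)\to -\infty$, so $T_X^{II}\to 0^+$. Continuity thus forces $T_X^{II}$ to attain a positive maximum at some interior $x_L\in(x_2,1)$, and $\partial T_X^{II}/\partial x(x_L)=0$ is equivalent, via Lemma~\ref{lem:techs}, to $L(x_L,T_y)=b_X/(a_X+b_X)$. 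Setting $T_C(T_y):=T_X^{II}(x_L,T_y)$ gives the claimed bound. The main obstacle is \emph{uniqueness} of $x_L$. Since $L<y^{II}$ on $(1/2,1)$ (the correction $x(1-x)\ln(1/x-1)\,\partial y^{II}/\partial x$ is strictly negative) and $b_X/(a_X+b_X)\le 1/2$, any solution must satisfy $y^{II}\ge b_X/(a_X+b_X)$, i.e.\ $x\ge x_2$, confining the search to $[x_2,1)$. The plan is to show every crossing is upward: at a crossing, use $L(x_L,T_y)=b_X/(a_X+b_X)$ to eliminate $(a_Y+b_Y)/T_y$ in the bracket of $\partial L/\partial x$, yielding
$$B(x_L)=(1-2x_L)+\frac{(1-2y^{II})(b_X/(a_X+b_X)-y^{II})}{\ln(1/x_L-1)\,y^{II}(1-y^{II})},$$
which is immediately negative when $y^{II}(x_L)>1/2$ (handled by part (3) and the fact that $L$ is increasing past $b_Y/(a_Y+b_Y)$). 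The delicate step is the residual regime $y^{II}(x_L)\in(b_X/(a_X+b_X),1/2)$ on $(x_2,b_Y/(a_Y+b_Y))$, where I would sign-analyze $B(x_L)$ using the strict inequality $y^{II}>b_X/(a_X+b_X)$ together with the endpoint values $L(x_2)<b_X/(a_X+b_X)$ and $L(1^-)>b_X/(a_X+b_X)$ to rule out down-crossings. Once every crossing is upward, continuity forces exactly one crossing, giving uniqueness of $x_L$ and hence a well-defined critical temperature $T_C(T_y)$.
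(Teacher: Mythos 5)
Parts (1)--(3) of your argument coincide with the paper's proof: the sign flip of $\ln(1/x-1)$ on $(1/2,1)$ reduces positivity of $T_X^{II}$ to $y^{II}>\frac{b_X}{a_X+b_X}$, which is solved for $x$ to yield the threshold $x_2$, and part (3) follows from the sign pattern of the three factors of $\frac{\partial L}{\partial x}$ once $y^{II}\ge 1/2$. Your existence argument for $x_L$ in part (4) (boundary limits $-\infty$ at $x\to \frac{1}{2}^+$ and $0^+$ at $x\to 1^-$, positivity on $(x_2,1)$, hence an interior maximum where $L=\frac{b_X}{a_X+b_X}$ by Lemma~\ref{lem:techs}) is also fine.

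The genuine gap is uniqueness of $x_L$. Your plan is to show every level-crossing of $L$ at height $\frac{b_X}{a_X+b_X}$ is an up-crossing by sign-analyzing $B(x_L)$ after eliminating $\frac{a_Y+b_Y}{T_y}$; you dispose of the case $y^{II}(x_L)>\frac{1}{2}$, but you explicitly leave open the regime $y^{II}(x_L)\in\left(\frac{b_X}{a_X+b_X},\frac{1}{2}\right)$, where the two terms of $B(x_L)$ have opposite signs and no conclusion follows. That is exactly the hard case, and the proposal does not close it. The paper takes a different, global route that does: it shows the bracket $G(x,T_y)=(1-2x)+x(1-x)(1-2y^{II})\frac{a_Y+b_Y}{T_y}$ is strictly decreasing on $\left(\frac{1}{2},\frac{b_Y}{a_Y+b_Y}\right)$ (its $x$-derivative is $-2$ plus a negative contribution there), with $G\left(\frac{1}{2},T_y\right)>0$ and $G\left(\frac{b_Y}{a_Y+b_Y},T_y\right)<0$, so $L$ has a unique interior critical point $x^*$, which is a local minimum; hence $L$ decreases on $\left(\frac{1}{2},x^*\right)$ and increases on $(x^*,1)$. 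Since $T_y<T_I$ gives $L\left(\frac{1}{2},T_y\right)=y^{II}\left(\frac{1}{2},T_y\right)<\frac{b_X}{a_X+b_X}$ and $T_y>T_A$ gives $\lim_{x\to 1^-}L(x,T_y)=y^{II}(1,T_y)>\frac{b_X}{a_X+b_X}$, the level $\frac{b_X}{a_X+b_X}$ is crossed exactly once, necessarily on the increasing portion. Any repair of your local crossing analysis would effectively have to reprove this unimodality of $L$, so the global shape argument is the way to finish.
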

\begin{proof}
	For the first part and the second part, consider $x \in (1/2,1)$, and we can find that
	\begin{align*}
	T_X^{II} > 0 &\Leftrightarrow y^{II} > \frac{b_X}{a_X+b_X} \\
	&\Leftrightarrow \left( 1 + e^{\frac{1}{T_y}(-(a_Y+b_Y)x+b_Y)} \right)^{-1} > \frac{b_X}{a_X+b_X}\\
	&\Leftrightarrow x > \max\left\{\frac{1}{2},\frac{-T_y\ln\left(\frac{a_X}{b_X}\right)+b_Y}{a_Y+b_Y}\right\} = x_2
	\end{align*}
	Note that for $T_y>T_I$, we get $x_2=1/2$.  Also, if $T_y<T_{A}$, then $T_X^{II}<0$ for all $x \in (1/2,1)$.
	
	For the third part,  that $y^{II} \ge \frac{1}{2}$ for all $x\ge \frac{b_Y}{a_Y+b_Y}$ and $\frac{b_Y}{a_Y+b_Y}>\frac{1}{2}$.  Then, we can find that
	
	$$
	\frac{\partial L}{\partial x} = \left[(1-2x)+x(1-x)(1-2y^{II})\frac{a_Y+b_Y}{T_y}\right]\ln\left(\frac{1}{x}-1\right)\frac{\partial y^{II}}{\partial x} > 0
	$$
	
	For the fourth part, we can find that any critical point of $L(x,T_Y)$ in $(0,1)$ must be either $x=\frac{1}{2}$ or satisfies the following equation:
	\begin{equation} \label{eq:G_eq_0}
	(1-2x)+x(1-x)(1-2y^{II})\frac{a_Y+b_Y}{T_y} = 0
	\end{equation}
	Consider $G(x,T_y) = (1-2x)+x(1-x)(1-2y^{II})\frac{a_Y+b_Y}{T_y}$.  For $b_Y>a_Y$, $y^{II}(1/2,T_y)$ is strictly less than $1/2$.  Also, we can see that $\frac{b_Y}{a_Y+b_Y}>1/2$.  Now, we can observe that  $G(1/2,T_y)>0$ and $G(\frac{b_Y}{a_Y+b_Y},T_y)<0$.  Next, we can see that $G(x,T_y)$ is monotonic decreasing with respect to $x$ for $x \in \left(\frac{1}{2}, \frac{b_Y}{a_Y+b_Y}\right)$ by looking at its derivative:
	$$
	\frac{\partial G(x,T_y)}{\partial x} = -2+ \frac{a_Y+b_Y}{T_y}\left[ (1-2x)(1-2y^{II}) -2 x(1-x)\frac{\partial y^{II}}{\partial x} \right] < 0
	$$
	As a result, we can see that there is some $x^* \in \left(\frac{1}{2}, \frac{b_Y}{a_Y+b_Y}\right)$ such that $G(x^*,T_y)=0$.  This implies that $L(x,T_y)$ has exactly one critical point for $x \in \left(\frac{1}{2}, \frac{b_Y}{a_Y+b_Y}\right)$.  Besides, we can see that if $G(x,T_y)>0$, $\frac{\partial L}{\partial x}<0$; while if $G(x,T_y)<0$, then $\frac{\partial L}{\partial x}>0$.  Therefore, $x^*$ is a local minimum for $L$.
	
	From the above arguments, we can conclude that the shape of $L(x,T_y)$ for $T_y<T_I$ is as follows:
	\begin{enumerate}
		\item There is a local maximum at $x=1/2$, where $L(1/2,T_y)=y(1/2,T_y)<\frac{b_X}{a_X+b_X}$.
		\item $L$ is decreasing on the interval $\left(\frac{1}{2}, x^*\right)$, where $x^*$ is the unique solution to \eqref{eq:G_eq_0}.  
		\item $L$ is increasing on the interval $(x^*,1)$. If $T_y>T_{A}$, then $\lim_{x\rightarrow 1^-} L(x,T_y)=y(1,T_y) >\frac{b_X}{a_X+b_X}$.
	\end{enumerate}
	Finally, we can claim that there is a unique solution to $L(x,T_Y)=\frac{b_X}{a_X+b_X}$, and such point gives a local maximum to $T_X^{II}$.
\end{proof}

The above proposition suggests that for $T_y \in (T_{A},T_I)$, we are able to use binary search to find the critical temperature.  For $T_y>T_I$, unfortunately, with the similar argument of Proposition~\ref{prop:c1a_g_right2}, we can find that there are potentially at most two critical points for $T_X^{II}$ on $(1/2,1)$, as shown in Figure~\ref{fig:as_c1a_by_ex2}, which may induce an unstable segment between two stable segments.  This also proves part 3 of Theorem~\ref{thm:coord_topo_stab}.

Now, we have enough materials to prove the remaining statements in Section~\ref{sec:coord_topo}.
\begin{proof}[Proof of Part~1, 5, and 6 of Theorem~\ref{thm:coord_topo_ct}, part~4 of Theorem~\ref{thm:coord_topo_stab}]
	For $T_y>T_I$, by Proposition~\ref{prop:case1a_left}, we can conclude that for $x \in (0,x_L)$, we have $\frac{\partial T_X^{II}}{\partial x}>0$, for which the QREs are stable by Lemma~\ref{lem:qre_stab}.  With similar argument we can conclude that the QREs on $x \in (x_L,x_1)$ are unstable.  Besides, given $T_x$, the stable QRE $x_a\in(0,x_L)$ and the unstable $x_b\in(x_L,x_1)$ that satisfies $T_X^{II}(x_a,T_y)=T_X^{II}(x_b,T_y)=T_x$ appear in pairs.  For $T_y<T_I$, with the same technique and by Proposition~\ref{prop:c1a_g_right2}, we can claim that the QREs in $x \in (x_2, x_L)$ are unstable; while the QREs in $x \in (x_L,1)$ are stable.  This proves the first part of of Theorem~\ref{thm:coord_topo_ct} and part~4 of Theorem~\ref{thm:coord_topo_stab}.
	
	Part 5 and 6 of Theorem~\ref{thm:coord_topo_ct} are corollaries of part 5 of Proposition~\ref{prop:case1a_left} and part 4 of  Proposition~\ref{prop:c1a_g_right2}.
\end{proof}

\subsubsection{Case 1b: $b_X>0$, $a_Y+b_Y<0$} \label{sec:case1b}

In this case, both player have different preferences. For the game within this class, there is only one Nash equilibrium (either pure or mixed).  We presented examples in Figure~\ref{fig:as_c1b_low_TY} and Figure~\ref{fig:as_c1b_high_TY}.  We can find that in these figures, there is only one QRE given $T_x$ and $T_y$.  We show in the following two propositions that this observation is true for all instances.
\begin{proposition}
	Consider Case~1b. Let $x_3=\max\left\{0,\frac{-T_y \ln(a_X/b_X)+b_Y}{a_Y+b_Y}\right\}$. If $T_y<T_I$, then the following statements are true
	\begin{enumerate}
		\item $T_X^{II}(x,T_y)<0$ for $x \in (1/2,1)$.
		\item $T_X^{II}(x,T_y)>0$ for $x \in \left( x_3, \frac{1}{2} \right)$.
		\item $\frac{\partial T_X^{II}(x,T_y)}{\partial x}>0$ for $x \in \left( x_3, \frac{1}{2} \right)$.
		\item $\left( x_3, \frac{1}{2} \right)$ contains the principal branch.
	\end{enumerate}
\end{proposition}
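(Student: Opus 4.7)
The key observation is that in Case 1b we have $a_Y + b_Y < 0$, so by Lemma~\ref{lem:techs}, $\frac{\partial y^{II}}{\partial x} = y^{II}(1-y^{II})\frac{a_Y+b_Y}{T_y} < 0$; that is, $y^{II}(\cdot,T_y)$ is strictly decreasing. The entire proposition reduces to careful sign bookkeeping combining this monotonicity with the sign of the denominator $\ln(1/x - 1)$ and the anchor value from Proposition~\ref{prop:dir_pb}.

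For part~1, on $x \in (1/2, 1)$ the denominator $\ln(1/x-1)$ is negative, so $T_X^{II}(x,T_y) < 0$ is equivalent to the numerator $-(a_X+b_X)y^{II} + b_X$ being positive, i.e.\ $y^{II}(x,T_y) < \frac{b_X}{a_X+b_X}$. Since $T_y < T_I$, Proposition~\ref{prop:dir_pb} gives $y^{II}(1/2, T_y) < \frac{b_X}{a_X+b_X}$, and decreasing monotonicity propagates this bound to all $x > 1/2$. For part~2, on $(0, 1/2)$ the denominator is positive, so $T_X^{II} > 0$ is equivalent to $y^{II}(x,T_y) < \frac{b_X}{a_X+b_X}$. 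Solving this inequality explicitly from $y^{II}(x,T_y) = (1 + e^{(-(a_Y+b_Y)x + b_Y)/T_y})^{-1}$ (taking logs, then dividing by $-(a_Y+b_Y) > 0$) yields the threshold $x > \frac{-T_y \ln(a_X/b_X) + b_Y}{a_Y+b_Y}$, which after clamping against $0$ is exactly $x > x_3$.

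For part~3, I use the derivative formula from Lemma~\ref{lem:techs}, namely $\frac{\partial T_X^{II}}{\partial x} = \frac{-(a_X+b_X)L(x,T_y) + b_X}{x(1-x)[\ln(1/x-1)]^2}$, reducing the claim to $L(x,T_y) < \frac{b_X}{a_X+b_X}$. Recall $L = y^{II} + x(1-x)\ln(1/x-1)\frac{\partial y^{II}}{\partial x}$. On $(0, 1/2)$ the factors $x(1-x) > 0$ and $\ln(1/x-1) > 0$, while $\frac{\partial y^{II}}{\partial x} < 0$, so the correction term is strictly negative and $L(x,T_y) < y^{II}(x,T_y)$. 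Since part~2 already gives $y^{II}(x,T_y) < \frac{b_X}{a_X+b_X}$ on $(x_3, 1/2)$, the bound transfers to $L$, yielding $\frac{\partial T_X^{II}}{\partial x} > 0$.

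For part~4, parts~1--3 show that $T_X^{II}(\cdot, T_y)$ is smooth, positive, and strictly increasing on $(x_3, 1/2)$, so it suffices to compute the two boundary limits. At $x \to 1/2^-$, Proposition~\ref{prop:dir_pb} supplies $T_X^{II} \to +\infty$ (this uses $T_y < T_I$). At $x \to x_3^+$, if $x_3 > 0$ then the numerator vanishes at $x_3$ by construction of $x_3$, giving $T_X^{II}(x_3, T_y) = 0$; if $x_3 = 0$, the bounded numerator is divided by $\ln(1/x-1) \to +\infty$, again giving $T_X^{II} \to 0$. By continuity the image is $(0, +\infty)$, so for every $T_x > 0$ there is a unique $x \in (x_3, 1/2)$ with $T_X^{II}(x,T_y) = T_x$, which is the definition of containing the principal branch. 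The only delicate check is that $x_3 < 1/2$ whenever $T_y < T_I$; this follows by verifying that $\frac{-T_y \ln(a_X/b_X) + b_Y}{a_Y+b_Y}$ equals $1/2$ exactly at $T_y = T_I$ and is monotone in $T_y$ (its $T_y$-derivative $\frac{-\ln(a_X/b_X)}{a_Y+b_Y}$ is positive since numerator and denominator are both negative). I expect this boundary check, and the bookkeeping of which inequality flips under division by $a_Y+b_Y < 0$, to be the only places where one must proceed with care; the rest is mechanical.
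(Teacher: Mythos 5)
Your proof is correct and follows essentially the same route as the paper's: anchor $y^{II}(1/2,T_y)<\frac{b_X}{a_X+b_X}$ via Proposition~\ref{prop:dir_pb}, propagate it by the monotone decrease of $y^{II}$ (since $a_Y+b_Y<0$), solve $y^{II}<\frac{b_X}{a_X+b_X}$ explicitly to get the threshold $x_3$, bound $L<y^{II}$ on $(0,1/2)$ for the derivative sign, and read off the principal branch from the boundary limits. You actually supply two details the paper leaves implicit — the verification that $x_3<1/2$ when $T_y<T_I$, and the correct citation of Proposition~\ref{prop:dir_pb} (the paper's proof points to Proposition~\ref{prop:c1a_g_monotone}, which does not state that inequality) — so no changes are needed.
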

\begin{proof}
	Note that if $T_y<T_I$, we have $x_3<1/2$. Also, according to Proposition~\ref{prop:c1a_g_monotone}, $y^{II}(1/2,T_y) < \frac{b_X}{a_X+b_X}$.  Since $y^{II}$ is continuous and monotonic decreasing with $x$, we can see that $y^{II} < \frac{b_X}{a_X+b_X}$ for $x>1/2$.  Therefore, the numerator of \eqref{eq:Tx_} is always positive for $x \in (1/2,1)$, which makes $T_X^{II}$ negative.  This proves the first part of the proposition.
	
	For the second part, observe that for $x \in (0, 1/2)$, $T_X^{II}>0$ if and only if $y^{II} < \frac{b_X}{a_X+b_X}$.  This is equivalent to $x > \frac{-T_y \ln(a_X/b_X)+b_Y}{a_Y+b_Y}$.
	
	For the third part, note that for $x \in (0,1/2)$, $x(1-x)\ln(1/x-1)\frac{\partial y^{II}}{\partial x} < 0$.  This implies $L(x,T_y) < y^{II}(x,T_y) < \frac{b_X}{a_X+b_X}$ for $x \in (x_3,1/2)$, from which we can conclude that $\frac{\partial T_X^{II}(x,T_y)}{\partial x}>0$.
	
	Finally, we note that if $x_3>0$, then $T_X^{II}(x_3,T_y)=0$.  If $x_3=0$, we have $\lim_{x\rightarrow 0^+} T_X^{II}=0$. As a result, we can conclude that $(x_3,1/2)$ contains the principal branch.
\end{proof}

With the similar arguments, we are able to show the following proposition for $T_y>T_I$:

\begin{proposition}
	Consider Case~1b. Let $x_3=\min\left\{1,\frac{-T_y \ln(a_X/b_X)+b_Y}{a_Y+b_Y}\right\}$. If $T_y>T_I$, then the following statements are true
	\begin{enumerate}
		\item $T_X^{II}(x,T_y)<0$ for $x \in (0,1/2)$.
		\item $T_X^{II}(x,T_y)>0$ for $x \in \left( \frac{1}{2}, x_3 \right)$.
		\item $\frac{\partial T_X^{II}(x,T_y)}{\partial x}<0$ for $x \in \left( \frac{1}{2}, x_3 \right)$.
		\item $\left( \frac{1}{2}, x_3 \right)$ contains the principal branch.
	\end{enumerate}
\end{proposition}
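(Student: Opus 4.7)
The plan is to mirror the previous proposition's argument, exploiting the fact that the only structural change from the $T_y < T_I$ regime is the direction in which $y^{II}(1/2,T_y)$ sits relative to the threshold $b_X/(a_X+b_X)$. Since $a_Y+b_Y<0$, Lemma~\ref{lem:techs} gives $\partial y^{II}/\partial x<0$, so $y^{II}$ is monotonically decreasing in $x$. As in the derivation leading to Proposition~\ref{prop:dir_pb}, a direct computation shows that $T_y>T_I$ is equivalent to $y^{II}(1/2,T_y)>b_X/(a_X+b_X)$; this is the single key inequality that drives everything.

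For part~1, I would fix $x\in(0,1/2)$. Then $\ln(1/x-1)>0$ and, by monotonicity of $y^{II}$, $y^{II}(x,T_y)>y^{II}(1/2,T_y)>b_X/(a_X+b_X)$, so the numerator of $T_X^{II}$ is negative and $T_X^{II}(x,T_y)<0$. For part~2, I would algebraically solve $y^{II}(x,T_y)>b_X/(a_X+b_X)$ on $(1/2,1)$: taking logs and dividing by $a_Y+b_Y$ (which flips the inequality since $a_Y+b_Y<0$) yields exactly $x<(-T_y\ln(a_X/b_X)+b_Y)/(a_Y+b_Y)$, so after capping by $1$ we get $x\in(1/2,x_3)$. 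Combined with $\ln(1/x-1)<0$ on this interval, both numerator and denominator of $T_X^{II}$ are negative, giving $T_X^{II}>0$.

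For part~3, I would use the sign-characterization of $\partial T_X^{II}/\partial x$ from Lemma~\ref{lem:techs}, reducing the claim to $L(x,T_y)>b_X/(a_X+b_X)$ on $(1/2,x_3)$. On this interval, $x(1-x)>0$, $\ln(1/x-1)<0$ and $\partial y^{II}/\partial x<0$, so the exploration-like correction $x(1-x)\ln(1/x-1)\,\partial y^{II}/\partial x$ is strictly positive. Hence $L(x,T_y)>y^{II}(x,T_y)>b_X/(a_X+b_X)$ by part~2, and $\partial T_X^{II}/\partial x<0$ follows.

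Finally, for part~4 I would verify the three defining conditions for ``contains the principal branch.'' Smoothness on $(1/2,x_3)$ is immediate from the closed-form expression. Positivity is part~2. To exhibit a preimage of every $T_x>0$, I would compute the boundary limits: as $x\to 1/2^+$, $\ln(1/x-1)\to 0^-$ while the numerator stays strictly negative, so $T_X^{II}\to +\infty$; as $x\to x_3^-$, the numerator tends to $0$ (since $y^{II}(x_3,T_y)=b_X/(a_X+b_X)$ if $x_3<1$, or to a nonpositive value over a denominator tending to $-\infty$ if $x_3=1$), so $T_X^{II}\to 0^+$. Monotonicity from part~3 then yields a unique $x\in(1/2,x_3)$ solving $T_X^{II}(x,T_y)=T_x$ for each $T_x>0$. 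There is no serious obstacle here: everything reduces to sign-tracking after the key inequality $y^{II}(1/2,T_y)>b_X/(a_X+b_X)$ is established, and the argument is essentially the dual of the $T_y<T_I$ proposition with $(0,1/2)$ replaced by $(1/2,1)$.
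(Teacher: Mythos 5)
Your proof is correct and is precisely the dualization of the preceding $T_y<T_I$ proposition that the paper itself invokes (the paper omits the explicit argument, saying only ``with similar arguments''). Your sign-tracking via $y^{II}(1/2,T_y)>\frac{b_X}{a_X+b_X}$ from Proposition~\ref{prop:dir_pb}, the monotonicity of $y^{II}$ under $a_Y+b_Y<0$, and the positivity of the correction term in $L$ on $\left(\frac{1}{2},x_3\right)$ match the intended argument step for step.
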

%
%

\subsubsection{Case 1c: $a_Y+b+Y=0$}
In this case, we have $T_I=\frac{b_Y}{\ln(a_X/b_X)}$, and $y^{II}$ is a constant with respect to $x$.  The proof of Theorem~\ref{thm:nc2} for $a_Y+b_Y=0$ directly follows from the following proposition.
\begin{proposition}
	Consider Case~1c.  The following statements are true:
	\begin{enumerate}
		\item If $T_y<T_I$, then $T_X^{II}(x,T_y)<0$ for $x \in (0.5,1)$, and $T_X^{II}(x,T_y)>0$ for $x \in (0,0.5)$.
		\item If $T_y>T_I$, then $T_X^{II}(x,T_y)<0$ for $x \in (0,0.5)$, and $T_X^{II}(x,T_y)>0$ for $x \in (0.5,1)$.
		\item If $T_y<T_I$, then $\frac{\partial T_X^{II}(x,T_y)}{\partial x}>0$ for $x \in \left( 0, 0.5 \right)$.
		\item If $T_y>T_I$, then $\frac{\partial T_X^{II}(x,T_y)}{\partial x}<0$ for $x \in \left( 0.5,1 \right)$.
	\end{enumerate}
\end{proposition}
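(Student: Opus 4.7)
The plan is to exploit the special algebraic simplification that occurs when $a_Y+b_Y=0$: from equation \eqref{eq:y_}, the function $y^{II}(x,T_y)$ collapses to a constant $y^{II}=\bigl(1+e^{b_Y/T_y}\bigr)^{-1}$ with no dependence on $x$, so both $T_X^{II}$ and its derivative become essentially one-variable expressions in $x$ whose signs are easy to read off. Because of this, I expect the whole proposition to reduce to sign bookkeeping, and there should be no genuine analytic obstacle.

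First I would substitute $a_Y+b_Y=0$ into \eqref{eq:Tx_} to write
\[
T_X^{II}(x,T_y)=\frac{C(T_y)}{\ln(1/x-1)},\qquad C(T_y):=-(a_X+b_X)\,y^{II}+b_X,
\]
with $y^{II}=(1+e^{b_Y/T_y})^{-1}$. A short calculation shows $C(T_y)<0$ iff $y^{II}>b_X/(a_X+b_X)$, iff $e^{b_Y/T_y}<a_X/b_X$, iff $T_y>b_Y/\ln(a_X/b_X)=T_I$ (noting that in Case~1c the inverting temperature reduces to $T_I=b_Y/\ln(a_X/b_X)$ since $b_Y-a_Y=2b_Y$). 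Combining this with the elementary fact that $\ln(1/x-1)>0$ on $(0,1/2)$ and $\ln(1/x-1)<0$ on $(1/2,1)$ gives parts 1 and 2 immediately by sign multiplication.

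For parts 3 and 4, I would invoke Lemma~\ref{lem:techs}: the derivative formula \eqref{eq:def_dTX} gives
\[
\frac{\partial T_X^{II}}{\partial x}(x,T_y)=\frac{-(a_X+b_X)L(x,T_y)+b_X}{x(1-x)[\ln(1/x-1)]^2}.
\]
Since $\partial y^{II}/\partial x=0$ in Case~1c, the definition of $L$ in \eqref{eq:def_L} yields $L(x,T_y)=y^{II}$, a constant. Hence the numerator is exactly $C(T_y)$, and the denominator $x(1-x)[\ln(1/x-1)]^2$ is strictly positive on $(0,1)\setminus\{1/2\}$. So the sign of $\partial T_X^{II}/\partial x$ is constant on each side of $x=1/2$ and matches the sign of $C(T_y)$: positive when $T_y<T_I$ and negative when $T_y>T_I$. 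Restricting to the relevant half-interval in each case delivers parts 3 and 4.

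The main obstacle, such as there is one, is purely notational: one must be careful to confirm that $T_I$ as defined in Section~\ref{sec:coord_topo} specializes correctly to $b_Y/\ln(a_X/b_X)$ under $a_Y+b_Y=0$ (so that the threshold in parts 1--2 really is the same $T_I$ used elsewhere), and to check that the case $b_Y\le 0$ is either vacuous or handled by taking the $\max\{0,\cdot\}$ convention used in the definition of $T_I$. Once this bookkeeping is in place, the proof is a two-line sign argument for each of the four claims.
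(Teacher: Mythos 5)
Your proposal is correct and follows essentially the same route as the paper: observe that $y^{II}$ is constant in $x$ when $a_Y+b_Y=0$, compare it with $b_X/(a_X+b_X)$ via the threshold $T_I$, and read off the signs of $T_X^{II}$ and (via $L=y^{II}$ and Lemma~\ref{lem:techs}) of its derivative from the sign of $\ln(1/x-1)$ and the positive denominator. The only cosmetic difference is that the paper splits into the cases $a_Y>b_Y$ (where $T_I=0$) and $a_Y<b_Y$, whereas you absorb the former into the $\max\{0,\cdot\}$ convention, which is fine.
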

\begin{proof}
	Note that $y^{II}=\left(1+e^{b_Y/T_y}\right)^{-1}$.  
	
	First consider the case when $a_Y>b_Y$.  In this case $T_I=0$ and $b_Y<0$.  Therefore, $y^{II}>\frac{b_X}{a_X+b_X}$, and from which we can conclude that $T_X^{II}>0$ for $x \in (0.5,1)$ and $T_X^{II}<0$ for $x \in (0,0.5)$, for any positive $T_y$.
	
	Now consider the case that $a_Y<b_Y$.  If $T_y<T_I$, we have $y^{II}<\frac{b_X}{a_X+b_X}$, and hence we get $T_X^{II}(x,T_y)<0$ for $x \in (0.5,1)$, and $T_X^{II}(x,T_y)>0$ for $x \in (0,0.5)$, which is the first part of the proposition statement.  Similarly, if $T_y>T_I$, we have $y^{II}>\frac{b_X}{a_X+b_X}$, from which the second part of the proposition follows.
	
	For the third part and the fourth part, note that $L(x,T_y)=y^{II}$ in this case as $\frac{\partial y^{II}}{\partial x}=0$ by observing \eqref{eq:def_L}, and the sign of the derivative of $T_X^{II}$ can be seen from Lemma~\ref{lem:techs}.
\end{proof}

\subsection{Case 2: $b_X<0$}\label{sec:case2}


In this case, the first action is a dominating strategy for the first player. Note that both $-(a_X+b_X)$ and $b_X$ are not positive, which means that the numerator of \eqref{eq:Tx_} is always smaller than or equal to zero.  This implies that all QRE correspondences appear on $x \in \left(\frac{1}{2}, 1\right)$.  In fact, since $y^{II}>0$ for $x \in (1/2,1)$, the numerator of \eqref{eq:Tx_} is always negative, we have $T_X^{II}>0$ for $x \in (1/2,1)$.  Also we can easily see that
$$
\lim_{x\rightarrow \frac{1}{2}^+} T_X^{II}(x,T_y) = +\infty
$$

This implies that $(1/2,1)$ contains the principal branch.  First, we show the result when $a_Y+b_Y<0$ in the following proposition.  Also, the bifurcation diagram is presented in Figure~\ref{fig:as_c2_ex1}.

\begin{proposition}
	For Case~2, if $a_Y+b_Y<0$, then for $x \in (1/2,1)$, we have $\frac{\partial T_X^{II}}{\partial x}<0$.
\end{proposition}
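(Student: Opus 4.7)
The plan is to reduce the sign of $\frac{\partial T_X^{II}}{\partial x}$ to a comparison between $L(x,T_y)$ and the threshold $\frac{b_X}{a_X+b_X}$ via part~3 of Lemma~\ref{lem:techs}, and then show that under the hypotheses of Case~2 with $a_Y+b_Y<0$ the two sides of this inequality sit on opposite sides of zero.

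First I would note the sign of the threshold. In Case~2 we have $b_X<0$, and by the standing normalisation $a_X>0$ with $|a_X|\ge|b_X|$ (and the hypothesis $a_X+b_X>0$ needed for the statement to be meaningful), so
\begin{equation*}
\frac{b_X}{a_X+b_X}<0.
\end{equation*}
Hence it suffices to prove $L(x,T_y)\ge 0$ for every $x\in(1/2,1)$, and in fact strictly positive so the strict inequality goes through.

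Next I would analyse $L(x,T_y)=y^{II}+x(1-x)\ln\!\bigl(\tfrac{1}{x}-1\bigr)\frac{\partial y^{II}}{\partial x}$ term by term. The first summand $y^{II}\in(0,1)$ is strictly positive. For the second summand, on $x\in(1/2,1)$ we have $x(1-x)>0$ and $\ln(1/x-1)<0$; by part~2 of Lemma~\ref{lem:techs},
\begin{equation*}
\frac{\partial y^{II}}{\partial x}=y^{II}(1-y^{II})\,\frac{a_Y+b_Y}{T_y},
\end{equation*}
which is strictly negative because $a_Y+b_Y<0$ and $T_y>0$. The product of a positive, a negative, and a negative quantity is positive, so the second summand is also positive. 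Therefore $L(x,T_y)>0>\frac{b_X}{a_X+b_X}$ on $(1/2,1)$.

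Finally, applying part~3 of Lemma~\ref{lem:techs} (equivalently reading off the sign from the explicit formula \eqref{eq:def_dTX}, whose denominator $x(1-x)[\ln(1/x-1)]^2$ is positive on $(1/2,1)$), the inequality $L(x,T_y)>\frac{b_X}{a_X+b_X}$ is exactly the condition $\frac{\partial T_X^{II}}{\partial x}(x,T_y)<0$. There is no real obstacle here — the only subtlety is keeping track of the signs of $\ln(1/x-1)$ and of $\frac{\partial y^{II}}{\partial x}$ cancelling to give a positive contribution to $L$, which is what makes the case $a_Y+b_Y<0$ (different-preference players) so much easier than the coordination case already handled earlier.
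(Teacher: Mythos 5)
Your proposal is correct and follows essentially the same route as the paper: both arguments show $L(x,T_y)>0$ on $(1/2,1)$ by observing that $y^{II}>0$ and that the product $x(1-x)\ln\left(\frac{1}{x}-1\right)\frac{\partial y^{II}}{\partial x}$ is positive (positive times negative times negative, since $a_Y+b_Y<0$ makes $y^{II}$ decreasing), and then read off the sign of $\frac{\partial T_X^{II}}{\partial x}$ from \eqref{eq:def_dTX}. The only cosmetic difference is that you route the last step through the threshold $\frac{b_X}{a_X+b_X}<0$ of part~3 of Lemma~\ref{lem:techs}, whereas the paper substitutes $L>0$ directly into the numerator $-(a_X+b_X)L+b_X$ using $-(a_X+b_X)\le 0$ and $b_X<0$; these are equivalent.
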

\begin{proof}
	In this case, $y^{II}$ is monotonic decreasing with $x$. We can see that
	$$
	L(x,T_Y) = y^{II} + x(1-x)\ln\left(\frac{1}{x}-1\right) \frac{\partial y^{II}}{\partial x} > y^{II} >0
	$$
	since $x(1-x)\ln\left(\frac{1}{x}-1\right) \frac{\partial y^{II}}{\partial x}$ is positive for $x \in (1/2,1)$.  Bringing this back to \eqref{eq:def_dTX}, we have $\frac{\partial T_X^{II}}{\partial x}<0$.
\end{proof}

For $a_Y+b_Y>0$, if $a_Y>b_Y$, the bifurcation diagram has the similar trend as in Figure~\ref{fig:as_c2_ex1}; while if $a_Y<b_Y$, we lose the continuity on the principal branch.
\begin{proposition}
	For Case~2, if $a_Y+b_Y>0$, then for $x\in(1/2,1)$, we have
	\begin{enumerate}
		\item if $a_Y>b_Y$, then $\frac{\partial T_X^{II}}{\partial x}<0$.
		\item if $a_Y<b_Y$, then $T_X$ has at most two local extrema.
	\end{enumerate}
\end{proposition}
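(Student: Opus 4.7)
The proof splits into parts (1) and (2), both of which reduce to analyzing the sign of the numerator $-(a_X+b_X)L(x,T_y) + b_X$ of $\partial T_X^{II}/\partial x$ from Lemma~\ref{lem:techs}. Throughout, recall $b_X < 0$ and (from $a_X > 0$, $|a_X| \ge |b_X|$) $a_X + b_X \ge 0$. The degenerate case $a_X+b_X=0$ reduces $\partial T_X^{II}/\partial x$ to $b_X/[x(1-x)(\ln(1/x-1))^2] < 0$, settling both statements trivially. So I focus on $a_X+b_X > 0$, where extrema of $T_X^{II}$ coincide with solutions of $L(x,T_y) = c$ with $c := b_X/(a_X+b_X) < 0$.

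For part (1), the plan is to show $L(x, T_y) > 0 > c$ throughout $(1/2, 1)$, which immediately yields $\partial T_X^{II}/\partial x < 0$. Using the derivative formula (computed in the Case 1a analysis),
$$\frac{\partial L}{\partial x} = \frac{\partial y^{II}}{\partial x}\, \ln(1/x-1)\, \bigl[(1-2x) + x(1-x)(1-2y^{II})(a_Y+b_Y)/T_y\bigr],$$
the assumption $a_Y > b_Y$ gives $y^{II}(1/2, T_y) > 1/2$ (as in Proposition~\ref{prop:dir_pb}), and since $y^{II}$ is monotonically increasing in $x$ (because $a_Y+b_Y > 0$), both $(1-2x)$ and $(1-2y^{II})$ are negative on $(1/2, 1)$. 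Hence the bracket is negative, the logarithm is negative, and $\partial y^{II}/\partial x > 0$, so $\partial L/\partial x > 0$. Combined with $L(1/2, T_y) = y^{II}(1/2, T_y) > 1/2$, this gives $L > 1/2 > 0 > c$, closing part (1).

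For part (2), I would show that $L$ has exactly one critical point in $(1/2, 1)$, so the horizontal level $L = c$ is crossed at most twice, bounding the number of extrema of $T_X^{II}$ by two. From the same derivative formula, the sign of $\partial L/\partial x$ is opposite to that of $G(x,T_y) := (1-2x) + x(1-x)(1-2y^{II})(a_Y+b_Y)/T_y$. Under $a_Y < b_Y$ (which forces $b_Y > 0$), I plan to split on the location of $x^* := b_Y/(a_Y+b_Y)$, the unique point where $y^{II} = 1/2$: if $a_Y > 0$, then $x^* \in (1/2, 1)$; if $a_Y \le 0$, then $x^* \ge 1$ so $y^{II} < 1/2$ throughout $(1/2,1)$. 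Direct computation of $\partial G/\partial x = -2 + (a_Y+b_Y)/T_y \cdot (1-2x)(1-2y^{II}) - 2[(a_Y+b_Y)/T_y]^2 x(1-x) y^{II}(1-y^{II})$ will show each summand is $\le 0$ on the appropriate subintervals, giving strict monotonicity of $G$. Since $G(1/2,T_y) = \tfrac{1}{4}(1-2y^{II}(1/2,T_y))(a_Y+b_Y)/T_y > 0$ and $G(1,T_y) = -1$, $G$ has exactly one zero, which corresponds to a unique local minimum of $L$.

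The main obstacle is handling the sub-case $a_Y > 0$, where $(1-2y^{II})$ changes sign inside the interval and so the sign analysis of $\partial G/\partial x$ is not globally clean. The plan there is to argue separately: on $(1/2, x^*)$, all three summands of $\partial G/\partial x$ are strictly negative (since $(1-2x)(1-2y^{II}) < 0$ there), so $G$ is strictly decreasing and crosses zero once; on $(x^*, 1)$, $G$ is a sum of two nonpositive terms ($1-2x < 0$ and $(1-2y^{II}) \le 0$) and so remains strictly negative. This piecewise argument avoids needing a single monotonicity statement on all of $(1/2,1)$ and yields the unique-zero property, completing the bound of at most two local extrema of $T_X^{II}$.
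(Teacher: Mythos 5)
Your proposal is correct and follows essentially the same route as the paper: part (1) by showing $L(1/2,T_y)=y^{II}(1/2,T_y)>1/2$ and $\partial L/\partial x>0$ on $(1/2,1)$ so that the numerator $-(a_X+b_X)L+b_X$ stays negative, and part (2) by showing $G(x,T_y)=(1-2x)+x(1-x)(1-2y^{II})(a_Y+b_Y)/T_y$ has a unique zero (strictly decreasing where $y^{II}<1/2$, negative where $y^{II}\ge 1/2$), so $L$ has a single interior minimum and the level $L=b_X/(a_X+b_X)$ is crossed at most twice. Your explicit treatment of the degenerate case $a_X+b_X=0$ and the split on the sign of $a_Y$ are minor reorganizations of the paper's use of $x_2=\min\{1,b_Y/(a_Y+b_Y)\}$, not a different argument.
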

\begin{proof}
	In this case, $y^{II}$ is monotonic increasing with $x$.  For $a_Y>b_Y$, we can find that $y^{II}(1/2,T_y)>0$ and $L(1/2,T_y)=y^{II}(1/2,T_y)>0$. Also, we can get that $L$ is monotonic increasing for $x \in (1/2,1)$ by inspecting
	$$
	\frac{\partial L(x,T_y)}{\partial x} = \left[(1-2x)+x(1-x)(1-2y^{II})\frac{a_Y+b_Y}{T_y}\right]\ln\left(\frac{1}{x}-1\right) \frac{\partial y^{II}(x,T_y)}{\partial x} > 0
	$$
	Hence, for $x \in (1/2,1)$, $L(x,T_y)>0$.  This implies $\frac{\partial T_X^{II}}{\partial x}<0$ for $x\in(1/2,1)$.
	
	For the second part, we can find that for $a_Y<b_Y$, $y^{II}(1/2)<1/2$. Let $x_2 = \min\left\{1,\frac{b_Y}{a_Y+b_Y}\right\}$. First note that if $x_2<1$, then for $x>x_2$, we have $y>1/2$, and further we can get $\frac{\partial L(x,T_y)}{\partial x} >0$ for $x\in(x_2,1)$.  We use the same technique as in the proof of the Proposition~\ref{prop:c1a_g_right2}.  Let $G(x,T_y=(1-2x)+x(1-x)(1-2y^{II})\frac{a_Y+b_Y}{T_y}$.  Note that $G(1/2, T_y)>0$ and $G(x_2,T_y)<0$.  Next, observe that $G(x,T_y)$ is monotonic decreasing for $x \in \left(\frac{1}{2}, x_2\right)$.  Hence, there exists a $x^* \in (1/2, x_2)$ such that $G(x^*,T_y)=0$.  This $x^*$ is a local minimum for $L$.  We can conclude that for $x \in (1/2,1)$, $L$ has the following shape:
	\begin{enumerate}
		\item There is a local maximum at $x=1/2$, where $L(1/2,T_y)=y(1/2,T_y)>0$.
		\item $L$ is decreasing on the interval $x \in (1/2, x^*)$, where $x^*$ is the solution to $G(x^*,T_y)=0$.
		\item $L$ is increasing on the interval $x \in (x^*, x_2)$.  Note that $\lim_{x \rightarrow 1^-} L(x,T_y) = y^{II}(1,T_y)>0$.
	\end{enumerate}
	As a result, if $L(x^*, T_y)>\frac{b_X}{a_X+b_X}$, then $T_X^{II}$ is monotonic decreasing; otherwise, $T_X^{II}$ has a local minimum and a local maximum on $(1/2,1)$.
\end{proof}

\end{document}